\newcommand{\blind}{1}
\protected\def\[#1\]{\begin{equation}\begin{aligned}#1\end{aligned}\end{equation}}
\protected\def\(#1\){\begin{equation*}\begin{aligned}#1\end{aligned}\end{equation*}}
\newtheorem{theorem}{Theorem}
\newtheorem{proposition}{Proposition}
\newtheorem{lemma}{Lemma}
\newtheorem{corollary}{Corollary}
\theoremstyle{remark}
\newtheorem{remark}{Remark}
\theoremstyle{definition}
\newtheorem{definition}{Definition}
\newtheorem{assumption}{Condition}
\begin{document}

\def\spacingset#1{\renewcommand{\baselinestretch}
{#1}\small\normalsize} 
\spacingset{1}

\if1\blind
{
  \title{\bf Statistical Modeling of \\
  Combinatorial Response Data}
  \author{
  	Yu Zheng \\
	Department of Statistics, University of Florida\\
    and \\
	Malay Ghosh\\
	Department of Statistics, University of Florida\\
		    and \\
	Leo Duan\thanks{Corresponding author. \href{email:li.duan@ufl.edu}{li.duan@ufl.edu}}\\
	Department of Statistics, University of Florida
    }
    \date{}
  \maketitle
  \vspace{-0.5in}
} \fi

\if0\blind
{
  \bigskip
  \bigskip
  \bigskip
  \begin{center}
    {\LARGE\bf Statistical Modeling of \\
  Combinatorial Response Data}
\end{center}
  \medskip
} \fi

\bigskip

\begin{abstract}
There is a rich literature for modeling binary and polychotomous responses. However, existing methods are inadequate for handling combinatorial responses, where each response is an integer array under additional constraints. Such data are increasingly common in modern applications, such as surveys collected under skip logic, event propagation on a network, and observed matching in ecology. Ignoring the combinatorial structure leads to biased estimation and prediction. The fundamental challenge is the lack of a link function that connects a linear or functional predictor with a probability respecting the combinatorial constraints. In this article, we propose a novel augmented likelihood that views combinatorial response as a deterministic transform of a continuous latent variable. We specify the transform as the maximizer of integer linear program, and characterize useful properties such as dual thresholding representation. When taking a Bayesian approach and considering a multivariate normal distribution for the latent variable, our method becomes a direct generalization to the celebrated probit data augmentation, and enjoys straightforward computation via Markov chain Monte Carlo. We provide theoretical justification, including consistency and applicability, at an interesting intersection between duality and probability. We demonstrate the effectiveness of our method through simulations and a data application on the seasonal matching between waterfowl.
\end{abstract}
\noindent
{\it Keywords:} Implicit link function, Strong duality, Truncated Normal, Continuous Embedding \vfill

\addtolength{\textheight}{-0.8in}
\addtolength{\topmargin}{0.5in}

\newpage

\spacingset{1.8} 

\section{Introduction}
In statistical data analysis, combinatorial response data are becoming increasingly common. For example, in online surveys, since the user can terminate the survey more easily than in in-person questionnaires, survey designers often use skip logic to present only a subset of questions to each user. Depending on the user's answer to each branching question, the vector record for each user's answers has some elements marked as skipped (recorded as $0$). Those zeros are often referred to as {\em structural zeros} in the literature \citep{manrique2014bayesian}, as they appear not due to random mechanism, but due to the combinatorial constraints. Similarly, in ecological studies, seasonally monogamous animals like ducks form pair bonds gradually over the course of a year. Each observed matching graph at a given time is a subgraph of a bipartite graph, which describes the possible pair bonds (for example, between male and female, belonging to the same species), and each node in the matching graph can have at most one edge, corresponding to the pairing constraint.

In applications such as above, the multivariate data are combinatorial since they are not only discrete, but also subject to additional constraints. That means the response space is in fact much smaller than the product space of all possible marginal outcomes. Under a likelihood approach, the probability of those outcomes not satisfying the constraints is effectively zero. It is of great interest to characterize the probability distribution of such data and develop regression models characterizing their relationship with covariates.

The existing strategies to model such combinatorial response can be put into two categories. First, one may enumerate all possible outcomes in the combinatorial space and reindex each response as a univariate multinomial outcome instead. Such a strategy is suitable when the response is in a low dimension, and the number of possibilities in each dimension is finite. On the other hand, such a reindexing strategy becomes impractical as the dimension increases exponentially \citep{baptista2018bayesian}, and reindexing may lead to ignoring ordinal information between two combinatorial outcomes. 

Second, one may simply ignore the constraints and model each data point as an array of discrete outcomes. To elucidate the relation between two likelihoods, $\mathcal L(y;\theta)$ upholding and $\bar {\mathcal L}(y;\theta)$ ignoring the constraints, with $y$  the data and $\theta$ the parameter, we can think of $\mathcal L(y;\theta)$ as a conditional probability given that discrete vector falling inside the constrained space denoted by $\mathcal Y$ hence $\mathcal L(y;\theta)=\bar L(y;\theta)1(y\in\mathcal Y)/P(y \in \mathcal Y;\theta)$. Clearly, omitting $P(y \in \mathcal{Y};\theta)$ results in a biased estimation of $\theta$, which leads to inconsistency as we demonstrate later. Similarly, as has long been recognized in the literature on log-linear and latent class models, failure to properly account for structural zeros introduces bias into the estimated distribution \citep{goodman1968analysis,fienberg1972multiple,bishop1975discrete,vermunt1997general}; for a more recent discussion, see \cite{manrique2014bayesian,carota2015bayesian}.

To appropriately model combinatorial data under constraints, we propose a generative approach that first samples a latent vector from an unconstrained continuous distribution, and then deterministically maps it to a combinatorial outcome in $\mathcal{Y}$.
This continuous latent representation, known as an {\em embedding}, builds on foundational work by \cite{albert1993bayesian} who showed that binary outcomes can be modeled by thresholding a univariate normal at zero. Their key insight was that marginalizing out this latent variable yields probabilities equivalent to those from a generalized linear model (GLM) with a link function. However, the embedding framework offers greater flexibility than traditional GLMs.

Several extensions demonstrate the versatility of latent embeddings. Ordinal data can be handled by introducing multiple thresholds to partition the latent space into ordered categories \citep{albert1993bayesian}. For network data with binary symmetric adjacency matrices, low-rank matrix embeddings with normal noise have been employed \citep{hoff2002latent}. Rank data correlations can be modeled using transformed multivariate normal vectors \citep{hoff2007extending}. In spatial statistics, spatial dependence between categorical data can be captured through latent Gaussian processes \citep{rue2009approximate}. These examples highlight how embeddings can naturally incorporate complex dependencies and constraints.

Our approach stems from reinterpreting the zero-thresholding transform of \cite{albert1993bayesian} as the solution to a univariate integer linear program. We extend this insight by generalizing the transform to a multivariate integer linear program, which naturally accommodates combinatorial constraints. Since we observe the combinatorial data, our objective is not to solve an optimization problem, but to determine the continuous embedding that corresponds to the observed integer solution. We demonstrate that this continuous embedding is analytically tractable through strong duality, and we leverage this tractability to develop a Metropolis-Hastings-Within-Gibbs sampler with excellent computational performance.

\vspace{-0.8cm}
\section{Method}
Our focus is on regression with $d$-dimensional integer response $y_i$ in the space
$
\{ z\in \mathbb{Z}^d: A z\le b\},
$
where $A$ is an $m\times d$ matrix, $b$ is a
$m$-element vector, $\mathbb{Z}$ is the integer set, and $\le$ applies elementwise. The coefficients in $A$ and $b$ are pre-determined (up to common scaling) by known constraints for the data. To be concrete, we  now list a few simple examples, and will introduce a few more complex constraints in Section 2.4.

First, for  binary constraint $y_i\in \{0,1\}^d$, $Ay_i\le b$ corresponds to $0\le y_{i,j}\le 1$ with $A=  [I, -I]$ and $b=[1_d, 0_d]$. Second, for equality constraint $a_k^\top y_i=b_k$, such as zero-contrast $y_{i,1}+ y_{i,2}-y_{i,3}-y_{i,4}=0$, $A y_i\le b$ contains two inequalities $a_k^\top y_i\le b_k$ and $-a_k ^\top y_i\le -b_k$. Third, for ordered or partially ordered integer data, such as non-decreasing sequence $y_{i,1}\le y_{i,2}\le \ldots \le y_{i,d}$, $Az\le b$ contains inequality $y_{i,j}-y_{i,k}\le 0$ for suitable $(j,k)$.

\vspace{-0.5cm}

\subsection{Data augmentation via integer linear program}\label{subsec:method}

In order to facilitate the introduction of our data augmentation (DA) strategy, we first review the DA for the probit model for simple binary data $y_i\in \{0,1\}$. \cite{albert1993bayesian} shows that for $y_i \sim \text{Bernoulli}[\Phi(\mu_i)]$, with $\mu_i\in \mathbb{R}$ (such as in the form of linear predictor) and $\Phi$ the distribution function for standard normal, there is a latent normal $\zeta_i\sim \text{N} (\mu_i,1)$ which has 
$P(\zeta_i >0) = \Phi(\mu_i), \;P(\zeta_i<0) = 1-\Phi(\mu_i).$
In another words, $\zeta_i$ is a latent variable for $y_i$, via thresholding at zero: $y_i=1(\zeta_i>0).$ We now provide a new perspective using integer linear program, with coefficient $\zeta_i\neq 0$ almost surely:
\(
y_i = \underset{z\in \{0,1\}}{\arg\max}\; \zeta_i z.
\)
It is not hard to see that the above has the same solution: $y_i=1$ if $\zeta_i>0$, and $y_i=0$ if $\zeta_i<0$. Taking this idea further, we now consider $y_i\in \mathbb{Z}^d$ as the solution to multivariate integer linear program:  
\[\label{eq:ilp}
 y_i = \underset{z}{\arg\max}\; \zeta_i^\top z,\qquad \text{subject to } z \in \mathbb{Z}^d, A z\le b.
\]
 We denote the continuous polyhedron as $\mathcal P=\{z^*: Az^*\le b \}$; hence, the feasible region of the above problem is equivalent to $\mathbb{Z}^d\cap \mathcal P$. The parameterization \eqref{eq:ilp} enables us to model combinatorial data $y_i$ under affine constraints $A y_i\le b$. 

\textit{Shopper's selection problem:}
To make the integer linear program more intuitive and interpretable, we now describe a thought experiment involving a shopper's selection. Imagine a shopper $i$ is buying several items from a store, the store has $d$ different items, and each item has a utility score $\zeta_{i,j}$ that represents how desirable the item $j$ is to the shopper $i$ ($\zeta_{i,j}>0$  desirable, and $\zeta_{i,j}<0$  undesirable), and $y_i\in \mathbb{Z}^d$ records the units of each item the shopper selects and buys. The shopper has a set of constraints on the selection. For example, the shopper may have a budget constraint that limits the total cost of the selected items to be at most $b_1$; the shopper may also have a need constraint that mandates at least $b_2$ items in total to be selected under some category; last but not least, the shopper cannot select negative units of items, $y_{i,j}\ge 0$ for all $j$. Clearly, the integer linear program \eqref{eq:ilp} corresponds to the shopper's maximization of total utility score subject to the constraints.

Since we observe $y_i$ only, we need to model $\zeta_i$. In our approach, any continuous distribution on $\mathbb{R}^d$ can be used for $\zeta_i$. We choose the normal distribution for its convenience for posterior computation. We now show that the solution to the integer linear program \eqref{eq:ilp} is unique, when $\zeta_i$ is from a continuous distribution. Consider two $z_1\neq z_2$, both in $\mathbb{Z}^d\cap \mathcal P$. For $\zeta_i$ from a distribution absolutely continuous with respect to Lebesgue measure, we know $\zeta_i^\top (z_1-z_2) =0$ with probability zero. Therefore, for the maximum value $\zeta_i^\top \hat z$, we know $\hat z$ is unique almost surely with respect to the continuous distribution of $\zeta_i$.

The almost sure unique solution means that we can consider \eqref{eq:ilp} as a mapping:
\[
\label{eq:T}
T: \mathbb{R}^{d}\setminus C \mapsto \mathbb{Z}^d, \qquad y_i=T(\zeta_i)=\underset{z \in \mathbb{Z}^d \cap \mathcal P}{\arg\max}\; \zeta_i^\top z,
\]
where $C$ is the set that does not lead to a unique solution, and $C$ has measure zero under the distribution of $\zeta_i$. The maximizer may not have a closed-form and hence $T$ is an {\em implicit function} equivalent to $y_i: \zeta_i^\top y_i = \max_{z\in \mathbb{Z}^d\cap \mathcal P} \zeta_i^\top z$. {In the rest of the article, all of the statements about $T(\zeta_i)$ are meant in the almost sure sense.}

 Taking a generative view of continuous $\zeta_i \sim \mathcal F(\cdot;\mu_i)$ and transform $y_i =T(\zeta_i)$, we have an augmented likelihood:
\[\label{eq:aug_lik}
\mathcal L(y_i, \zeta_i ; \mu_i) = \mathcal F(\zeta_i;\mu_i) 1(y_i = \underset{z \in \mathbb{Z}^d \cap \mathcal P}{\arg\max}\; \zeta_i^\top z).
\]
Marginally, we can integrate over $\zeta_i$ and find the probability of $y_i$ taking one specific combinatorial value $e$:
{
\[\label{eq:lp_link}
P(y_i = e ; \mu_i) = \int_{T^{-1}(e)}
 \mathcal F(\zeta_i;\mu_i) \textup{d} \zeta_i=\int_{\mathbb R^d}\mathcal F(\zeta_i;\mu_i)1(\zeta ^\top e=\max_{z\in\mathbb R^d\cap\mathcal P}\zeta ^\top z)d\zeta_i,
\]
where $T^{-1}(e):=\{\zeta_{i}\in\mathbb R^d:T(\zeta_{i})=e\}$ is the pre-image of $e$.
}
Following the nomenclature in GLM, we refer to \eqref{eq:lp_link} as the Inverse Integer Linear Program (IILP)-link, since it is associated with the link function between $\mu_i$ and $\mathbb E(y_i \mid \mu_i)$.

We now provide an illustrative toy example. Consider two-dimensional $y_i\in \{0,1\}^2$ with $y_{i,1}+y_{i,2}\le 1$. This integer linear program has a closed-form solution
$y_i =
  (0,0),  \text{ when } \zeta_{i,1}<0, \zeta_{i,2}<0;
  (0,1),  \text{ when } \zeta_{i,1}<\zeta_{i,2}, \zeta_{i,2}>0;
  (1,0),  \text{ when } \zeta_{i,1}>\zeta_{i,2}, \zeta_{i,1}>0.
$
If we further specify $\zeta_i\sim \text{N}[ (\mu_0,\mu_0)^\top, I_2]$, then we see the three possible outcomes have likelihood valued at $\Phi^2(-\mu_0)$, $0.5[1-\Phi^2(-\mu_0)]$ and $0.5[1-\Phi^2(-\mu_0)]$, respectively.

While $\zeta_i$ is interpretable via the shopper's selection problem, we treat it as an auxiliary variable as our interest is on the marginal distribution of combinatorial response. There may be alternative parameterizations for the same marginal probability. For example, one could consider $p(y_i; \gamma) = m^{-1}(\gamma) 1(y_{i,1} + y_{i,2} \leq 1) \prod_{i=1}^2 \Phi(\gamma)^{y_{i,j}}\Phi(-\gamma)^{1-y_{i,j}}$, the distribution of two independent Bernoulli random variables conditioned on their sum being less than or equal to 1, with $m(\gamma)$ being the normalization constant. It is not hard to see its equivalence to our model via $\Phi^2(-\mu_0)=\Phi(-\gamma)/[2-\Phi(-\gamma)]$.

  On the other hand, a major advantage of our proposed approach over the alternative is that the augmented likelihood does not involve $m^{-1}(\cdot)$, which may become intractable as the number of constraints increases. Although there are a plethora of algorithms for dealing with intractable normalizing constants, such as exchange algorithms \citep{murray2006mcmc, moller2006efficient,liang2016adaptive} and pseudo-marginal Metropolis-Hastings \citep{andrieu2009pseudo,lee2019unbiased}, our model without intractable normalizing constant enjoys algorithmic simplicity and good scalability to relatively high dimension.

Our next task is to find out $T^{-1}(e)$, the appropriate set of input $\zeta_i$ for each possible outcome $e$. In the next two subsections, we first introduce a condition on $\mathcal P$, known as {\em integrality}, that leads to an equivalent solution between integer linear program and continuous linear program, then we describe how to find $T^{-1}(e)$ when $\mathcal P$ is integral.
\vspace*{-0.5cm}
\subsection{Integral polyhedron and image of integer linear program}

The feasible region of the integer linear program \eqref{eq:ilp} is an integer set. We now consider a relaxation that extends the feasible region to a polyhedron $\mathcal P=\{z\in \mathbb{R}^d: Az\le b\}$, which makes the problem a continuous linear program:
\[\label{eq:clp}
\max_{z\in \mathbb{R}^d} \zeta_i^\top z, \;  \text{subject to } z\in \mathcal P.
\]
A useful property of continuous linear program is the vertex optimality: if (i) $\mathcal P$ has at least one vertex ($z$ that cannot be written as $\lambda z_1 + (1-\lambda) z_2$ for any distinct $z_1,z_2\in \mathcal P$ and $\lambda\in (0,1)$), and (ii) $\max \zeta_i^\top  z <\infty$, then at least one of the optimal solutions is on a vertex.

When $\mathcal P$ has all vertices integer-valued, $\mathcal P$ is referred to as an integral polyhedron, or simply, that $\mathcal P$ is integral. Therefore, under the integrality of $\mathcal P$, the continuous linear program \eqref{eq:clp} has at least one solution being integer-valued. 
Since $\max_{z\in \mathbb{Z}^d \cap \mathcal P} \zeta_i^\top z \le \max_{z\in \mathbb{R}^d \cap \mathcal P}  \zeta_i^\top z$ becomes equality, the solution of \eqref{eq:ilp} belongs to the solution set of \eqref{eq:clp}.

We now show a stronger result: as long as $\mathcal P$ has more than one vertex, the solution of \eqref{eq:clp} is unique almost surely, hence the solution of \eqref{eq:clp} being guaranteed to be the same as the one of \eqref{eq:ilp} with probability one.
\begin{theorem}
\label{thm:unique_solution_as}
  When $\zeta_i$ is from a continuous distribution, for an integral $\mathcal P$ with more than one vertex, the solution to the continuous linear program \eqref{eq:clp} is integer-valued and unique almost surely.
\end{theorem}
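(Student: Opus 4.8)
The plan is to show that the continuous linear program \eqref{eq:clp} can fail to have a unique, integer-valued optimizer only when $\zeta_i$ lies in a finite union of hyperplanes, a set of Lebesgue measure zero and therefore of probability zero under any continuous distribution. The essential ingredient is finiteness: a polyhedron has only finitely many vertices, so non-uniqueness translates into finitely many linear conditions on $\zeta_i$.

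First I would recall the two facts already in hand. By vertex optimality together with integrality, whenever $\zeta_i'z$ is bounded above on $\mathcal P$ the optimum is attained at a vertex and at least one optimal vertex is integer-valued, so an integer-valued optimal solution exists. It therefore suffices to prove uniqueness, since the unique optimizer must then coincide with that integer vertex. Next I would use the standard fact that the set of optimizers of a linear objective over a polyhedron is a face $F$ of $\mathcal P$, and that the optimizer is non-unique exactly when $F$ is not a single point. The core step is then to convert ``$F$ is not a singleton'' into a linear condition on $\zeta_i$: since $\mathcal P$ is pointed, a bounded face of dimension at least one contains two distinct vertices $v_1\neq v_2$ of $\mathcal P$, and both being optimal forces $\zeta_i'(v_1-v_2)=0$. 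Enumerating the finitely many vertices $v^{(1)},\dots,v^{(K)}$ of $\mathcal P$, I set $C=\bigcup_{k\neq \ell}\{\zeta\in\mathbb R^d:\zeta'(v^{(k)}-v^{(\ell)})=0\}$. Each member of this union is a hyperplane with normal $v^{(k)}-v^{(\ell)}\neq 0$, hence a proper linear subspace of measure zero; a finite union of measure-zero sets is measure zero, so $\mathbb P(\zeta_i\in C)=0$. For $\zeta_i\notin C$ all vertices attain distinct objective values, so there is a unique optimal vertex, $F$ cannot contain two vertices, and $F$ reduces to a single integer vertex, which is the claimed unique integer-valued solution.

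The step I expect to be the main obstacle is the reduction from geometric non-uniqueness to the existence of two optimal vertices, which genuinely uses polyhedral structure: one must argue that the optimal face inherits pointedness from $\mathcal P$ having a vertex, and that a face of dimension at least one that is bounded has at least two extreme points. A closely related subtlety is unboundedness. If $\mathcal P$ is unbounded, then on a positive-measure set of directions the program has no finite optimum, and even when it is bounded the optimal face could in principle contain a recession ray rather than a second vertex. I would dispose of this by additionally excluding the finitely many hyperplanes $\{\zeta:\zeta'r_j=0\}$ associated with the extreme rays $r_j$ of the recession cone of $\mathcal P$, which keeps the excluded set measure zero and forces the optimal face to be bounded off that set.

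Finally, I would note that this second concern is vacuous in the combinatorial applications that motivate \eqref{eq:ilp}, since the box-type constraints there (for example $0\le z_j\le 1$) make $\mathcal P$ a bounded polytope with trivial recession cone. In that regime every face is automatically bounded, so only the vertex-difference hyperplanes are needed, and the argument of the second paragraph delivers both integrality and almost sure uniqueness directly.
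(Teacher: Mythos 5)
Your proof is correct, and at its core it coincides with the paper's: both reduce almost-sure uniqueness to the observation that non-uniqueness of the optimizer of \eqref{eq:clp} forces two distinct optimal vertices $z_1\neq z_3$ of $\mathcal P$, so that $\zeta_i'(z_1-z_3)=0$, an event of probability zero for a continuous $\zeta_i$. The differences lie in the bookkeeping. The paper gets the two-vertex reduction by a direct convex-combination manipulation (writing the non-vertex optimizer as $z_2=\lambda z_3+(1-\lambda)z_4$ and forcing equality in the optimality inequality) and then bounds the bad event by the countable union $\bigcup_{z\in\mathbb Z^d\setminus\{0\}}\{\zeta_i:\zeta_i'z=0\}$, exploiting integrality so that it never needs to enumerate vertices, invoke face theory, or even use finiteness of the vertex set. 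You instead pass through the standard polyhedral facts that the optimizer set is a face and that a pointed bounded face of dimension at least one contains two extreme points, and then exclude the finite union of hyperplanes with normals $v^{(k)}-v^{(\ell)}$. Your route buys two things: the measure-zero step does not use integrality at all, so integrality enters only to conclude that the unique optimal vertex is integer-valued (hence your uniqueness argument extends verbatim to non-integral polyhedra); and your explicit exclusion of the hyperplanes $\{\zeta:\zeta'r_j=0\}$ for the extreme rays $r_j$ of the recession cone genuinely addresses unbounded $\mathcal P$, where the optimal face could a priori be an unbounded edge with a single vertex --- a case the paper's proof passes over silently (it implicitly argues on the event that an optimum is attained). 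Conversely, the paper's countable-union-over-$\mathbb Z^d$ trick is more economical, needing none of the face-theoretic apparatus; your face-based reduction is also arguably the more airtight rendering of the paper's vertex-exchange step, which as written asserts that an arbitrary vertex $z_3$ can serve in the convex decomposition of $z_2$.
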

\vspace*{-0.5cm}

We provide the proofs of theorems in the Supplementary Materials S1.
We will describe the sufficient conditions on $A$ and $b$ that guarantee the integrality of $\mathcal P$; for now, we assume integral $\mathcal P$ throughout the rest of the article. Figure \ref{plot:ilp_intuition}(a) illustrates how a linear program with integral polyhedron transforms continuous $\zeta_i$ into an integer-valued $y_i$.

The above uniqueness result also means that any non-vertex point in $\mathcal P$ is not optimal. Therefore, one should exercise caution that the range of $T$ may not cover some combinatorial outcome in $\mathbb{Z}^d\cap \mathcal P$. Fortunately, when $y_i$ is binary, we know any $z\in \{0,1\}^d: Az\le b$ must be a vertex (as it cannot be written as a nontrivial convex combination of two other binary points).
For ease of discussion and with some loss of generality, we reduce our focus to binary linear program $y_i\in \{0,1\}^d \cap \mathcal P$, such that 
\[\label{eq:binary_ilp}
 y_i = \underset{z}{\arg\max}\; \zeta_i^\top z, \;  \text{subject to } z\in [0,1]^d, A z\le b.
\]
In general, the integer linear program may lack a closed-form solution. Nevertheless, we show that given $y_i$, we can find the pre-image $T^{-1}(y_i)$ using the dual form.

\vspace*{-0.5cm}
\subsection{Dual thresholding representation}

For ease of notation, we now omit the subscript $i$ and use $\eta = \zeta_i$ and $\hat z=y_i$, and consider equivalent primal problem $\min_z (-\eta ^\top z)$ subject to $z\in [0,1]^d, Az\le b$.

Using $u\ge 0$, $v\ge 0$, $w\ge 0$, we set up the Lagrangian function
\(
L(z, u,v,w)& = -\eta^\top z + u^\top(Az-b) + v^\top(z-1) + w^\top(-z)\\
& = (-\eta^\top +u^\top A+v^\top-w^\top)z -u ^\top b-v^\top 1.
\)
Minimizing over $z$ leads to the dual problem:
\(
\min_{u,v,w} \; &  u^\top b+v^\top 1\\
\text{subject to } & -\eta+A^\top u+v=w\ge 0,\; u\ge 0, v\ge 0.
\)
It is not hard to see the optimal
$
\hat v =  (\eta-A^\top u)_+$, with $(x)_+=x$ if $x >0$ and $0$ otherwise, hence simplifying the dual loss to $ u^\top b+ 1^\top (\eta-A^\top u)_+$.

We use complementary slackness and strong duality to reveal a geometric characterization between $\eta$ and $\hat z$. First, on complementary slackness, we have three possible cases:
\begin{itemize}
	\item Case (i): when $\eta_j>A^{\top}_{.j}\hat u$, the optimal $\hat v_j=  (\eta_j-A^{\top}_{.j}\hat u )>0$, we have optimal $\hat w_j=0$ and $\hat z_j=1$;
	\item Case (ii): when  $\eta_j<A^{\top}_{.j}\hat u$, the optimal $\hat v_j=0$ and $\hat w_j= A^{\top}_{.j}\hat u-\eta_j>0$ hence $\hat  z_j=0$;
	\item Case (iii): when $\eta_j=A^{\top}_{.j}\hat u$, the optimal $\hat w_j=\hat v_j=0$. One may find $\hat z_j$ via the other complementary slackness conditions $\hat u_l(A \hat z- b)_l=0$ for $l=1\ldots,m$. 
\end{itemize}
Second, the linear program enjoys strong duality
\(
\max_{z\in [0,1]^d:Az\le b} \eta^\top z = \min_{u\ge 0} u^\top b+ 1^\top (\eta-A^\top u)_+,
\)
which means that a feasible $(z,u)$ achieves $\eta^\top z=u^\top b+1^\top (\eta-A^\top u)_+$ if and only if $(z,u)$ are the primal and dual optimal, respectively. 
Combining the above, we have a useful characterization between $\eta$ and $\hat u$.
\begin{theorem}
\label{thm:z_and_mu}
  For a given $\tilde z$ in the primal feasible region, 
  consider a set $C_{\tilde z}$ of $(\eta,u)$ with $\eta\in \mathbb{R}^d,u\ge 0$ that satisfy,
  \begin{itemize}
  \item $u^\top(A{\tilde z}-b)=0$,
  \item $\eta_j \ge A_{\cdot j}^\top u \text{ for } j: \tilde  z_j=1,$ and $\eta_j \le A_{\cdot j}^\top u \text{ for } j:\tilde  z_j=0,$
  \end{itemize}
Then $\tilde z$ is a solution of $\max_{z\in [0,1]^d:Az\le b}  \eta^\top z$ if and only if $(\eta,u)\in C_{\tilde z}$.
\end{theorem}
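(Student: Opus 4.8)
The plan is to build everything on the strong duality identity already displayed, namely that a feasible pair $(z,u)$ attains $\eta'z = u'b + 1'(\eta-A'u)_+$ if and only if $z$ and $u$ are, respectively, primal and dual optimal. The whole theorem then reduces to showing that, for a primal-feasible $\tilde z$ and any $u\ge 0$, membership $(\eta,u)\in C_{\tilde z}$ is exactly equivalent to attaining this equality. Since we have restricted to the binary program \eqref{eq:binary_ilp}, every coordinate of $\tilde z$ satisfies either $\tilde z_j=1$ or $\tilde z_j=0$, so the two coordinatewise conditions defining $C_{\tilde z}$ partition all of $j=1,\dots,d$.

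The central computation I would carry out is to rewrite the duality gap $G := u'b + 1'(\eta-A'u)_+ - \eta'\tilde z$ as a sum of manifestly nonnegative terms. Writing $s_j = \eta_j - A_{\cdot j}'u$ and expanding $\eta'\tilde z = \sum_j s_j \tilde z_j + u'(A\tilde z)$ (the last equality separating the box contribution from the polyhedral contribution), I would collect terms into $G = \sum_j\big[(s_j)_+ - s_j \tilde z_j\big] + u'(b-A\tilde z)$. The final term is nonnegative because $u\ge 0$ and $A\tilde z\le b$; each summand is nonnegative because, for $\tilde z_j=1$, $(s_j)_+ - s_j = (-s_j)_+\ge 0$, while for $\tilde z_j=0$, the summand is simply $(s_j)_+\ge 0$. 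This exhibits $G\ge 0$ (a reproof of weak duality) and, more usefully, pins down exactly when $G=0$: the term $u'(b-A\tilde z)$ vanishes iff $u'(A\tilde z-b)=0$; the summand at a coordinate with $\tilde z_j=1$ vanishes iff $s_j\ge 0$, i.e. $\eta_j\ge A_{\cdot j}'u$; and the summand at $\tilde z_j=0$ vanishes iff $s_j\le 0$, i.e. $\eta_j\le A_{\cdot j}'u$. These are precisely the three defining conditions of $C_{\tilde z}$, so $G=0 \iff (\eta,u)\in C_{\tilde z}$.

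With this equivalence in hand both directions follow immediately. If $(\eta,u)\in C_{\tilde z}$ then $G=0$, so by the strong duality characterization $\tilde z$ is primal optimal (and $u$ dual optimal), giving the ``if'' direction. Conversely, if $\tilde z$ solves the primal, strong duality supplies a dual optimal $u\ge 0$ for which $G=0$, and the gap decomposition then forces $(\eta,u)\in C_{\tilde z}$; this is the ``only if'' direction, read as the existence of a dual certificate $u$ for the given $\eta$. As a byproduct, the coordinatewise reading of $G=0$ recovers exactly the three complementary-slackness cases (i)--(iii) listed before the theorem.

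The step I expect to require the most care is the bookkeeping in the gap decomposition, specifically the piecewise handling of $(s_j)_+$ according to whether $\tilde z_j$ equals $0$ or $1$, together with the rearrangement $\eta'\tilde z = \sum_j s_j \tilde z_j + u'(A\tilde z)$. A secondary subtlety worth flagging is the binary restriction: the clean two-case split is valid only because $\tilde z\in\{0,1\}^d$, where every feasible integer point is a vertex, so I would state the characterization in that setting rather than for a general fractional feasible point.
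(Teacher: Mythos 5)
Your proof is correct, and while it rests on the same anchor as the paper's --- the displayed strong duality fact for the reduced dual $\min_{u\ge 0}\, u'b + 1'(\eta-A'u)_+$ --- the execution is genuinely different and arguably tighter. The paper proceeds in pieces: it substitutes the two sign conditions into the strong duality equality to extract $u'b = u'A\tilde z$ (the first condition), asserts sufficiency by remarking that the equality is unaffected so long as the sign conditions hold, and proves necessity separately by appealing to the complementary slackness cases (i)--(iii) developed before the theorem. You instead establish a single identity, $G = \sum_j\bigl[(s_j)_+ - s_j\tilde z_j\bigr] + u'(b - A\tilde z)$ with $s_j = \eta_j - A_{\cdot j}'u$, in which every term is manifestly nonnegative for binary $\tilde z$, so that $G=0$ holds exactly when all three defining conditions of $C_{\tilde z}$ do; both directions of the theorem then fall out at once, with complementary slackness recovered as a byproduct rather than invoked as an input. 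This buys two things: the paper's sufficiency step, which is stated somewhat informally (``the value of $\eta_i$ does not impact the strong duality equality''), becomes an airtight computation, and your ``if'' direction actually needs only weak duality (the nonnegativity of $G$, which your decomposition reproves), reserving strong duality --- valid here since the primal is feasible and bounded on $[0,1]^d$ --- for the existence of the dual certificate in the ``only if'' direction. Your flag about the binary restriction is also apt: the theorem's hypotheses partition coordinates by $\tilde z_j\in\{0,1\}$, and for a fractional coordinate the gap term $(s_j)_+ - s_j\tilde z_j$ vanishes only if $s_j=0$, a condition absent from $C_{\tilde z}$; the paper's own proof implicitly assumes binary $\tilde z$ (via the indicators $1(\tilde z_j = 1)$), so your explicit restriction matches the intended setting and is slightly more careful than the statement as written.
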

We now switch to notation $y_i$ and $\zeta_i$, and index $u$ as $u^{(i)}$. Theorem~\ref{thm:z_and_mu} gives a concrete characterization of the pre-image of the projection $T$ defined in (\ref{eq:T}). For any $y_i$ in the combinatorial space, the pre-image $T^{-1}(y_i)$ is given by
\(
\begin{aligned}
T^{-1}(y_i)&=\bigcup_{u^{(i)}\ge 0: u^{(i)}_k=0 \text{ if } (Ay_i-b)_k<0}\left\{\zeta_{i}\in\mathbb R^d: 
\zeta_{i,j} \ge A_{\cdot j}^\top u^{(i)} \text{ if } y_{ij}=1,\;
\zeta_{i,j} < A_{\cdot j}^\top u^{(i)} \text{ if } y_{i,j}=0
\right\}.
\end{aligned}
\)
Equivalently, we have a simple interpretation: there exists some $u^{(i)}\ge 0$ such that
\[
\label{eq:dual_thresholding}
y_{i,j} = 1(\zeta_{i,j} \ge A_{.j}^\top u^{(i)}).
\]
We refer to \eqref{eq:dual_thresholding} as the {\em dual thresholding representation}, and illustrate this in Figure \ref{plot:ilp_intuition}(b).
When there are no constraints on the binary vector $y_i$, we recover the zero thresholding $y_i=1(\zeta_i>0)$ as in \cite{albert1993bayesian}.

To be clear, the augmented likelihood $\mathcal L(y_i, \zeta_i ; \mu_i)$ describes the distribution of $y_i$ and $\zeta_i$, and the dual form \eqref{eq:dual_thresholding} should be understood as the existence of a separating set:
\(
\mathcal U(y_i,\zeta_i)=\{u:
& \zeta_{i,j} \ge A_{.j}^\top u^{(i)}: y_{i,j}=1,  
 \zeta_{i,j} \le A_{.j}^\top u^{(i)}: y_{i,j}=0,
\text{ for }j=1\ldots d,\\
& u_k=0 \text{ for }k:(Ay_i-b)_k<0,\; u_k\ge 0 \text{ for }k:(Ay_i-b)_k=0
\}.
\)

\begin{figure}[H]
    \centering
  \begin{subfigure}[t]{0.45\textwidth}
    \centering
      \begin{overpic}[width=\textwidth]{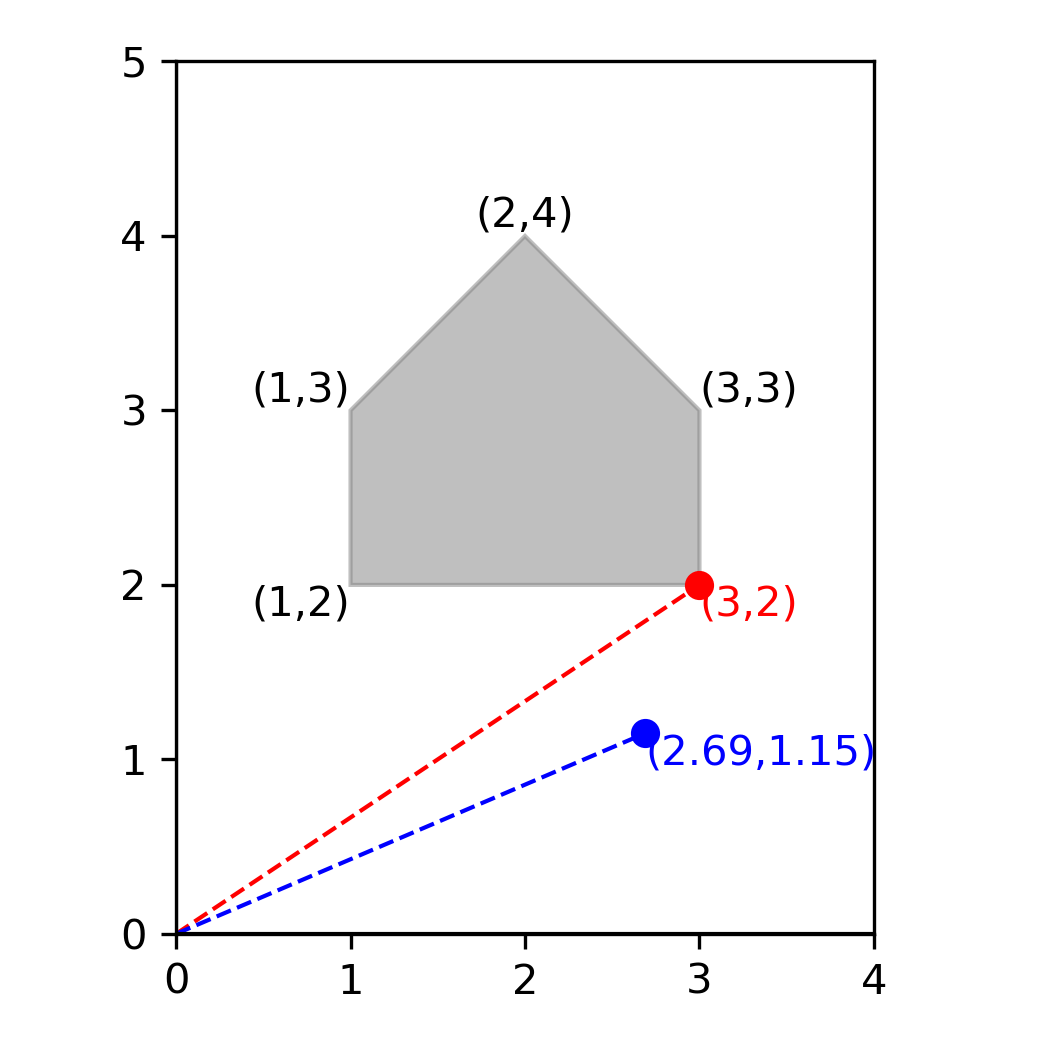}
      \end{overpic}
        \caption{Solving a continuous linear program in an integral polyhedron (gray, whose vertices are all integer-valued) is equivalent to transforming a continuous coefficient vector $\zeta_i$ (blue) into an integer coordinate $y_i$ (red).} 
  \end{subfigure}\qquad
    \begin{subfigure}[t]{0.45\textwidth}
      \centering
      \begin{overpic}[width=\textwidth]{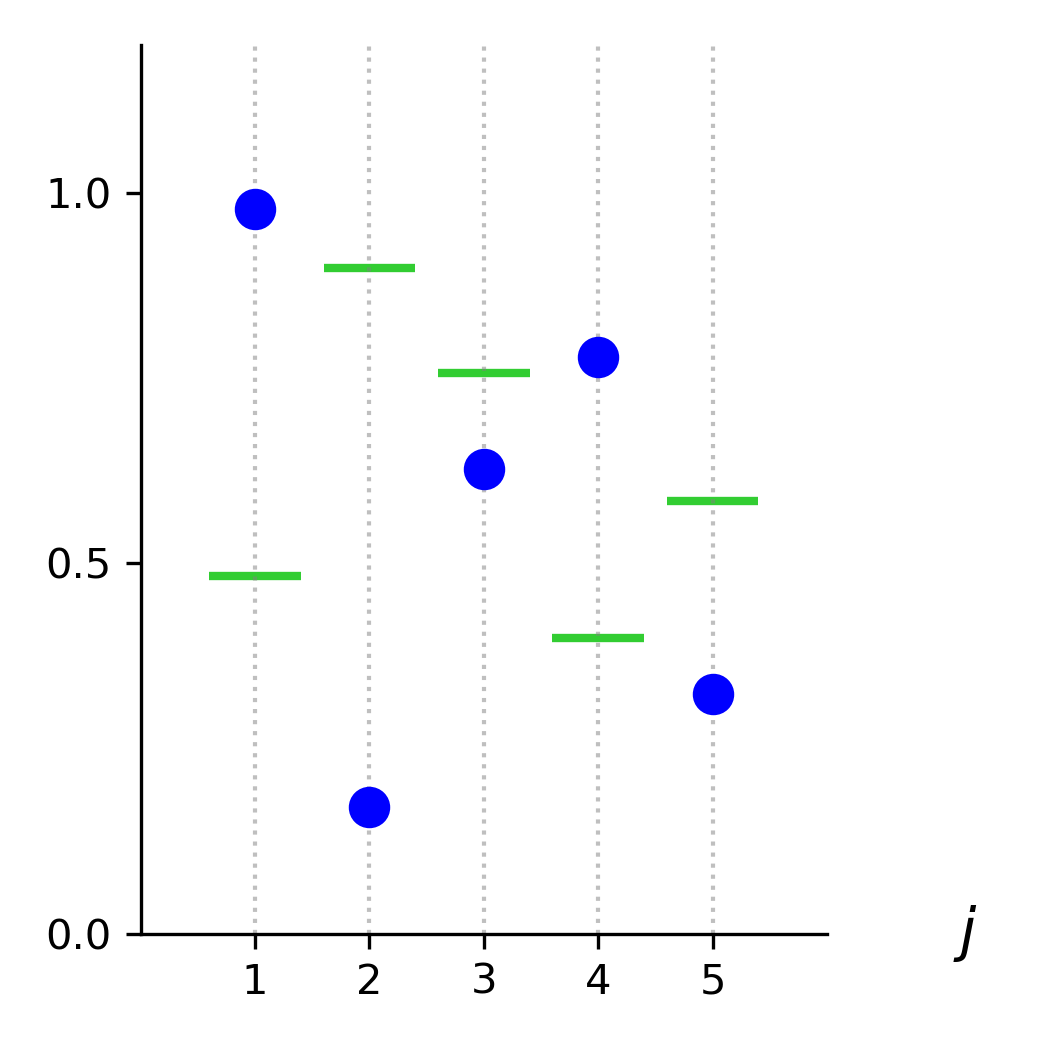}
      \end{overpic}
              \caption{When $y_i$ is binary, there is a dual thresholding representation corresponding to whether $\zeta_{i,j}$ (blue) is greater or equal to $A_{.j}'u^{(i)}$ (green bar), with $u^{(i)}$ the dual variable. In the figure, the associated $y_i$ is $(1,0,0,1,0)$.} 
  \end{subfigure}
              \caption{Illustration of using the linear program as a mapping (Panel a), and its equivalent representation when the output is restricted to be binary (Panel b).} 
  \label{plot:ilp_intuition}
  \end{figure}

That is, $
y_i = \underset{z\in \{0,1\}^d \cap \mathcal P}{\arg\max}\; \zeta_i^\top z\;
\Leftrightarrow \; \mathcal U(y_i,\zeta_i)\neq \varnothing.
$
To enforce $\mathcal U(y_i,\zeta_i)\neq \varnothing$ in the posterior computation, we use another latent variable $u^{(i)}\sim \Pi_u(\cdot\mid y_i,\zeta_i)$ supported on $\mathcal U(y_i,\zeta_i)$:
\[\label{eq:aug_lik2}
\mathcal L(y_i, \zeta_i ; \mu_i)&\Pi_u(u^{(i)}\mid y_i,\zeta_i) =  \mathcal F(\zeta_i;\mu_i) \prod_{j=1}^{d}\bigg\{
  1(y_{i,j}=1,\zeta_{i,j} > A_{.j}^\top u^{(i)})  +
  1(y_{i,j}=0,\zeta_{i,j} < A_{.j}^\top u^{(i)}) \bigg\}\\
&\cdot  1[ u^{(i)}\in \mathcal U(y_i,\zeta_i), \mathcal U(y_i,\zeta_i)\neq  \varnothing] \; \cdot\; \Pi_u(u^{(i)}\mid y_i,\zeta_i),
\]
where $\Pi_u(u^{(i)}\mid y_i,\zeta_i)$ is an arbitrary proper density over $\mathcal U(y_i,\zeta_i)$, and we have replaced $\ge$ by $>$ in $\mathcal U(y_i,\zeta_i)$ since hence the equality event has zero probability when $\zeta_i$ and $u^{(i)}$ are continuous latent variables.

\subsection{Applicability of data augmentation}\label{sec:theory}
\vspace*{-0.5cm}
For our data augmentation to work, one key condition is that the polyhedron $\mathcal P=\{z\in \mathbb{R}^d: Az\le b\}$ is integral. Now we present a few useful results from the combinatorial optimization literature for checking this condition, and we illustrate the condition with a few examples commonly seen in applications. The results in this section apply to general integer $z$ that is not necessarily binary; whereas the binary setting can be viewed as having constraint $0\le z\le 1$ in addition to $\tilde Az\le \tilde b$, so we view $A=[\tilde A , I , -I]$ and $b=[\tilde b, 1, 0]$.

A useful sufficient (though not necessary) condition for a polyhedron to be integral is total unimodularity. A matrix $A$ is said to be totally unimodular (TUM) if every square submatrix of $A$ has determinant in $\{-1, 0, 1\}$. Then we have the following \citep{schrijver1998theory}:
\begin{theorem}\label{thm:TUM}
  If $A$ is TUM, then for any $b\in \mathbb{Z}^m$, the polyhedron $\mathcal P=\{z\in \mathbb{R}^d: Az\le b\}$ is an integral polyhedron.
\end{theorem}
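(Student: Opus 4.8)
The plan is to prove this classical Hoffman--Kruskal-type result by reducing each vertex of $\mathcal{P}$ to a square linear subsystem governed by a nonsingular submatrix of $A$, and then exploiting total unimodularity through the adjugate formula for the inverse. Throughout, I would use the definition of an integral polyhedron from the previous subsection, namely that every vertex is integer-valued, so it suffices to show that an arbitrary vertex $z^\ast$ has integer coordinates (if $\mathcal{P}$ has no vertex the claim is vacuous).

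First I would invoke the standard characterization of vertices of a polyhedron. A point $z^\ast \in \mathcal{P}$ is a vertex if and only if the constraints active at $z^\ast$, i.e.\ the rows $k$ with $A_{k\cdot}\,z^\ast = b_k$, include $d$ that are linearly independent. Collecting $d$ such rows into a nonsingular $d\times d$ submatrix $A_B$, with corresponding right-hand side $b_B$, the vertex is the unique solution of the square system $A_B z^\ast = b_B$, so $z^\ast = A_B^{-1} b_B$. Establishing this reduction (in particular, that a genuine vertex admits no feasible direction and hence its active rows must span $\mathbb{R}^d$, yielding an invertible $A_B$) is the one place requiring care; since it is a well-known fact about polyhedra that possess vertices, I would either cite it or give the short rank argument that if the active rows spanned a proper subspace, one could move along its orthogonal complement without violating any tight constraint, contradicting extremality.

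Next I would bring in total unimodularity. Because $A_B$ is a nonsingular square submatrix of a TUM matrix, $\det A_B \in \{-1,0,1\}$, and nonsingularity forces $\det A_B \in \{-1,+1\}$. Writing $A_B^{-1} = \operatorname{adj}(A_B)/\det A_B$, each entry of the adjugate is, up to sign, the determinant of a $(d-1)\times(d-1)$ submatrix of $A_B$ --- itself a submatrix of $A$ --- and therefore lies in $\{-1,0,1\}$ by the TUM hypothesis. Dividing these integer cofactors by $\det A_B = \pm 1$ leaves $A_B^{-1}$ an integer matrix (equivalently, Cramer's rule gives $z^\ast_j = \det(A_B^{(j)})/\det A_B$ with the numerator an integer, since all entries of $A$ and of $b_B$ are integers).

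Finally, since $b\in\mathbb{Z}^m$ makes $b_B$ integer-valued, the product $z^\ast = A_B^{-1} b_B$ of an integer matrix and an integer vector is integer-valued. As $z^\ast$ was an arbitrary vertex, every vertex of $\mathcal{P}$ is integer, so $\mathcal{P}$ is integral. The main obstacle is not the determinant bookkeeping but the clean passage from a vertex to an invertible square subsystem $A_B z = b_B$; once that basic-solution characterization is in place, total unimodularity together with the adjugate formula closes the argument immediately.
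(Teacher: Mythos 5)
Your proof is correct, and it is the classical Hoffman--Kruskal argument: reduce a vertex to a basic solution $z^\ast = A_B^{-1} b_B$ of a nonsingular square subsystem of active constraints, note $\det A_B = \pm 1$ by total unimodularity, and conclude integrality via the adjugate (equivalently Cramer's rule). The paper itself offers no proof of this theorem --- it is stated as a known result with a citation to \cite{schrijver1998theory} --- and your argument is essentially the standard one given there, so there is nothing to contrast; you also correctly handle the vacuous case under the paper's definition of integrality (all vertices integer-valued) when $\mathcal{P}$ has no vertex.
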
\vspace*{-0.5cm}
The converse holds as well \citep{hoffman2010integral}, that if $\mathcal P$ is integral for all $b\in \mathbb{Z}^m$, then $A$ is TUM. It is not hard to see that $A_{i,j}\in \{-1,0,1\}$. The total unimodularity can be easily checked via software, such as \texttt{lintools} package in R. On the other hand, we list a few detailed properties that could be useful for model extension.

\begin{enumerate}
  \item If $A$ is TUM, then both $-A$ and $A^\top$ are TUM.
  \item If $A$ is TUM, then $[A, I]$ is TUM.
  \item If $A$ is TUM, then $[A , -A]$ is TUM.
  \item If $A_{i,j}\in \{-1,0,1\}$, each column of $A$ has at most one -1 and at most one 1, then $A$ is TUM.
  \item  If $A_{i,j}\in \{-1,0,1\}$, each column of $A$ has at most two non-zeros, and there exists a row index subset $S\subseteq [m]$: $\sum_{i\in S} A_{i,j} = \sum_{i\not\in S} A_{i,j}$ for all $j$, then $A$ is TUM.
\end{enumerate}
The proof can be found in \cite{wolsey1999integer}.
We give a few useful examples.

\noindent\textbf{Example 1} (Binary constraint) In our focused binary linear program $z\in [0,1]^d$, using Properties 1 and 2, and the fact that multiplying a row of square submatrix by $-1$ keeps the determinant in $\{-1,0,1\}$, we see $A=[\tilde A, I, -I]$ is TUM if $\tilde A$ is TUM. 

\noindent\textbf{Example 2} (Equality) For equality constraints $\tilde A z =b$ with a TUM $\tilde A$, we can represent equality by $[\tilde A, -\tilde A] z\le [b,-b]$, hence we see the polyhedron is integral via Property 3. 

\noindent\textbf{Example 3} (Partial ordering)
For partial order constraints $z_j\le z_k$, for a set of pair indices $(j,k)$, we can represent them by $A_{i,j}=1$, $A_{i,k}=-1$ and $A_{i,l}=0$ $l\neq j$ or $k$, with $b=0$, hence Property 4 leads to $A$ being TUM.

\noindent\textbf{Example 4} (Matching)
Suppose we have a bipartite graph $G=(V, E_G)$, with $V$ the node set divided into two subsets $V_1$ and $V_2$, and $E_G$ the edges formed only across $V_1$ and $V_2$. A matching is a subgraph $M=(V,E_M)$ with $E_M\subseteq E_G$, such that each node has zero or one edge connecting to another node. We can use $z\in \{0,1\}^{|E_G|}$ to represent if the edge of $E_G$ is chosen in $E_M$. Using a matrix $\tilde A$ for $G$, with $\tilde A_{v,e}=1$ if the $e$-th edge in $E_G$ is incident to node $v$, otherwise $\tilde A_{v,e}=0$, we see that $\tilde A$ is TUM using Property 5 with $S=V_1$. The matching representation $z$ is associated with a TUM  $A=[\tilde A,  -I]$ and integer $b=[1,0]$.

The above could be extended to perfect matching where each node has exactly one edge, or $\tilde b$-matching where each node's edge upper bound a varying integer.

\noindent\textbf{Example 5} (Flow) Consider a flow network with $V$ the node set and $E$ the directed edge set, with $s$ the source node and $t$ the sink node. The flow on the network is represented by $z_{e}\ge 0$ for directed edge $e\in E$, subject to capacity constraint $z_e \le \tilde  b_e$, and flow conservation for each node $i\in V\setminus \{s,t\}$, $\sum_{e\in E: e=(i \to j)} z_e = \sum_{e\in E: e=(k \to i)} z_e$. We see that the incidence matrix $\tilde A$ for directed graph is TUM, where $\tilde A_{i,e}=1$, $\tilde A_{j,e}=-1$ for each $e=(i\to j)$, and $A_{l,e}=0$ for $l\neq i$ or $j$. The flow network representation vector $z$ is associated with a TUM matrix $A=[\tilde A, -\tilde A, I, -I]$ and $b=[0,0,\tilde b,0]$. If we have all capacity constraints $\tilde b\in \mathbb{Z}^{|E|}$, then we know the produced flow under a linear program is integer-valued.

We want to clarify that TUM is just one of the useful sufficient conditions to verify the integrality of a polyhedron, and there are other conditions often used in practice. For example, the total dual integrality condition and Edmonds-Giles' theorem could be used \citep{edmonds1977min}, which may apply even when $A$ is not TUM. We refer the readers to \cite{schrijver1998theory} for a comprehensive review.

\vspace*{-0.5cm}
\section{Sampling algorithms for Bayesian regression with combinatorial response}\label{sec:algorithm}

We now focus on a Bayesian regression model with combinatorial response $y_i\in \{0,1\}^d$ and linear predictor $\mu_i = \beta x_i$, for $x_i\in\mathbb{R}^p$ and $\beta\in\mathbb{R}^{d\times p}$. 
We assign a matrix-normal prior $\beta\sim \text{Mat-N}(0, I_d, I_p\tau)$ with $\tau>0$ a hyper-parameter, and use $\zeta_i\sim \text{N}(\mu_i, I_d)$. We have the augmented likelihood independent for $i=1,\ldots, n$:
\(
  \mathcal L(y_i, \zeta_i;\beta x_i)  \propto &\prod_{j=1}^{d}\bigg\{
  1(y_{i,j}=1,\zeta_{i,j} >  (UA)_{i,j})  +
  1(y_{i,j}=0,\zeta_{i,j} <  (UA)_{i,j}) \bigg\}\\
  & \times \exp( - 0.5 \|\zeta_i-\beta x_i\|^2_2) 1[ u^{(i)}\in \mathcal U(y_i,\zeta_i), \mathcal U(y_i,\zeta_i)\neq  \varnothing],
\)
For simpler notation, matrix $U\in\mathbb R^{n\times m}$ has $(i,j)$-th entry $u^{(i)}_j$, hence $(UA)_{i,j} = A^{\top}_{.j}u^{(i)}$. 

We now present a Metropolis-Hastings-within-Gibbs sampler. Continuing from our discussion of \eqref{eq:aug_lik2}, we now view $u^{(i)}$ as a random variable and specify $\Pi_u(u^{(i)}\mid y_i,\zeta_i)$. We assign exponential kernel for those $u^{(i)}_k>0$, leading to
\[\label{eq:cond_u}
\Pi_u(u^{(i)}\mid y_i,\zeta_i) = \frac{\prod_{k:(Ay_i-b)_k=0}\exp(-{u^{(i)}_k})}{c[\mathcal U(y_i,\zeta_i)]},
\]
where $c[\mathcal U(y_i,\zeta_i)]$ is the normalizing constant. Since $c[\mathcal U(y_i,\zeta_i)]$ depends on $\zeta_i$, whose value is likely intractable, we propose an accept--reject step to bypass its computation. For two sets $\mathcal U(y_i,\zeta_i)$ and $\mathcal U(y_i,\zeta^*_i)$,  consider a third set $\mathcal U(y_i,\tilde\zeta_i)$, with $\tilde\zeta_{i,j} = \max(\zeta_{i,j}, \zeta^*_{i,j})$ if $y_{i,i}=1$ and $\tilde\zeta_{i,j} = \min(\zeta_{i,j}, \zeta^*_{i,j})$ if $y_{i,i}=0$. It is not hard to see both $\mathcal U(y_i,\zeta_i)$ and $\mathcal U(y_i,\zeta^*_i)$ are subsets of $\mathcal U(y_i,\tilde\zeta_i)$.

Consider the event of drawing $u^{(i)*} \sim \Pi_u(\cdot\mid y_i,\tilde\zeta_{i})$ in $\mathcal U(y_i,\tilde\zeta_i)$, and having $u^{(i)*}$ falling in the polyhedron $\mathcal U(y_i,\zeta_i)$, it has probability:
\(
P[u^{(i)*} \in \mathcal U(y_i,\zeta_i)] = \frac{
c[\mathcal U(y_i,\zeta_i)]
}{
c[\mathcal U(y_i,\tilde\zeta_i)]
}.
\)
Therefore, having an additional acceptance step using the above probability is essential in order to cancel out $c[\mathcal U(y_i,\zeta_i)]$ in the acceptance ratio.
{Similar scenarios of dealing with intractable normalizing constants were considered by \cite{moller2006efficient,lee2011auxiliary}.} {
In theory, any proper distribution having support $\mathcal U(y,\zeta)$ is a valid distribution choice for $u^{(i)}$. In the Supplementary Materials S6, we show a simulation using alternative half-Gaussian kernel for $u^{(i)}$, and find no empirical difference from exponential in mixing performance.
}

Each MCMC iteration of our algorithm consists of the following steps:\vspace*{-0.5cm}
\begin{itemize}
  \item Update $\zeta_i$ for each $i$:
\begin{enumerate}\vspace*{-0.5cm}
	\item Generate $u^{(i)}$ from $\Pi(u^{(i)}\mid y_i,\zeta_i)$ as in \eqref{eq:cond_u}. 
Propose  $\zeta_i^*$ based on truncated normal over elements  $j\in \{1,\ldots,d\}$:
\(
\zeta^*_{i,j}\sim 
\left\{
\begin{array}{ll}
  \text{TN}( \beta_j^\top x_i, 1)1[\cdot > (UA)_{i,j}], & \text{ if } y_{i,j}=1;\\
  \text{TN}( \beta_j^\top x_i, 1)1[\cdot < (UA)_{i,j}], & \text{ if }  y_{i,j}=0.
\end{array}
\right.
\)
\item Based on $\mathcal U(y_i,\zeta_i)$ and $\mathcal U(y_i,\zeta^*_i)$, find $\mathcal U(y_i,\tilde\zeta_i)$, then draw another $u^{(i)*}$ from $\Pi(u^{(i)}\mid y_i,\tilde\zeta_i)$.
\item If  $u^{(i)*} \in \mathcal U(y_i,\zeta_i)$, accept $\zeta^*_i$ as the updated value of $\zeta_i$. Otherwise, keep the current value of $\zeta_i$.
\end{enumerate}
\item Update $\beta$ from matrix normal (where $X\in\mathbb{R}^{p\times n}$, $\zeta\in\mathbb{R}^{n\times d}$)
\(
(\beta \mid -) \sim \text{Mat-N}[ \zeta^\top X^\top(XX^\top+I_p/\tau)^{-1}, I_d,  (XX^\top+I_p/\tau)^{-1} ].
\)
\end{itemize}
To explain the algorithm, we examine the first step as a Metropolis-Hastings step. On the transition from $\zeta_i$ to $\zeta_i^*$, we have the joint probability kernel of the current $\zeta_i$ and proposed $\zeta^*_i$, marginalized over $u^{(i)}$:
\(
&  \mathcal L(y_i, \zeta_i ;\beta x_i) 
  \int   \frac{\prod_{k:(Ay_i-b)_k=0}e^{-{u^{(i)}_k}}}{c[\mathcal U(y_i,\zeta_i)]} { 1(u^{(i)}\in \mathcal U_{y_i,\zeta_i})} \phi_{TN}(\zeta^*_i\mid u^{(i)})1(u^{(i)}\in \mathcal U_{y_i,\zeta^*_i}) d u^{(i)} \frac{
c[\mathcal U(y_i,\zeta_i)]
}{
c[\mathcal U(y_i,\tilde\zeta_i)]
}\\
& \propto  
\exp( - 0.5 \|\zeta_i-\beta x_i\|^2_2)
\exp( - 0.5 \|\zeta^*_i-\beta x_i\|^2_2)\\
&\qquad\times
\int  
\bigg (
\prod_{j=1}^d   \bigg\{\Phi[(UA)_{i,j}- \beta_j^\top x_i]1(y_{i,j}=0)+  \Phi[\beta_j^\top x_i-(UA)_{i,j}]1(y_{i,j}=1)\bigg\}^{-1}\\
& \qquad \times \big[\prod_{k:(Ay_i-b)_k=0}\exp(-{u^{(i)}_k})\big]1(u^{(i)}\in \mathcal U_{y_i,\zeta^*_i} )1(u^{(i)}\in \mathcal U_{y_i,\zeta_i})\bigg)  d u^{(i)} \frac{1}{c[\mathcal U(y_i,\tilde \zeta_i)]},
\)
where we use $\phi_{TN}$ to denote the truncated normal density.
Since $\mathcal U(y_i,\tilde \zeta_i)$ is invariant to the ordering between $\zeta_i$ and $\zeta^*_i$, we see the backward joint probability kernel from $\zeta^*_i$ to $\zeta_i$ is exactly the same. Therefore, the Metropolis-Hastings acceptance ratio is one.

{

\begin{remark}
At a relatively high dimension $d$, the large difference between $\mathcal U(y_i,\zeta_i)$ and $\mathcal U(y_i,\tilde\zeta_i)$ may cause
$P[u^{(i)*} \in \mathcal U(y_i,\zeta_i)]\approx 0$. To avoid such an issue, one can update a subvector of $\zeta_i$ for all $i$ in each iteration. Empirically, we find that updating a subvector of length $\le 100$ keeps $P[u^{(i)*} \in \mathcal U(y_i,\zeta_i)]$ away from zero. For high dimensional problems, the above algorithm can lead to slow mixing performance; therefore, we present an alternative No-U-Turn Hamiltonian Monte Carlo algorithm in the Supplementary Materials S5.
\end{remark}

For drawing $u^{(i)}$ from \eqref{eq:cond_u}, when $u^{(i)}$ is low-dimensional, we can use rejection sampling. When $u^{(i)*}$ is relatively high-dimensional, we use the hit-and-run algorithm \citep{smith1984efficient}. Specifically, with $u^{(i)}$ initialized at $\omega\in \mathcal U(y_i,\zeta_i)$: (i) draw a uniform direction $v:\|v\|_2=1$, subject to {$\sum_{k:(Ay_i-b)_k=0}v_k>0$ and} $v_k=0: (A y_i-b)_k<0$; (ii) find the longest interval $(\tilde a,\tilde b)$ such that $\omega+ \alpha v\in \mathcal U(y_i,\zeta_i)$ for all $\alpha\in(\tilde a,\tilde b)$, then draw $\alpha^* \sim \text{Exp}_{(\tilde a,\tilde b)}( \sum_{k:(Ay_i-b)_k=0} v_k)$, which is a truncated exponential distribution, with rate $\sum_{k:(Ay_i-b)_k=0} v_k$ and supported in $(\tilde a,\tilde b)$; (iii) update $\omega$ to $\omega+  \alpha^* v$. We repeat (i)-(iii) 100 times. The hit-and-run algorithm is efficient and shown to mix rapidly for drawing samples in a polyhedron \citep{lovasz2006fast,chen2022hit}. In case one has an unbounded polyhedron, one can truncate $\tilde a$ and $\tilde b$ at a large magnitude such as $10^5$. We see that \eqref{eq:cond_u} is always proper regardless if $\mathcal U(y_i,\zeta_i)$ is bounded or not.

\begin{remark}
When hit-and-run is used to draw $u^{(i)}$, 
we effectively draw from an approximate distribution hence making
 the algorithm approximate MCMC \citep{johndrow2015optimal}. 
\end{remark}

In the Supplementary Materials S5, we present two alternative samplers, the Hamiltonian Monte Carlo (HMC) sampler and pseudo-marginal sampler, and show their computational performance in simulation.
}

\vspace*{-0.5cm}

\section{Consistency theory}
This section develops posterior consistency theory for Bayesian inference on combinatorial response models. We establish conditions under which the posterior distribution concentrates around the true parameter as sample size grows. We consider a setting in which the response variables $Y_i$ take values in the vertex set $\mathcal E\subseteq \{0,1\}^d$ of an integral polytope
$\mathcal P= \{ z \in [0,1]^d : Az \le b \}$, with $d$ fixed and finite, while the sample size $n$ diverges. We focus on the following two models:
    \[\label{eq:intercept_only}\text{Intercept-only model: }
    Y_i = T(\zeta_i), \quad \zeta_i \mid \mu \stackrel{i.i.d.}{\sim} N(\mu, I_d), \quad \mu \sim \pi, \quad \forall i = 1, \ldots, n.
    \]
    \[\label{eq:regression}\text{Regression model: }
    Y_i = T(\zeta_i), \quad \zeta_i \mid \beta \stackrel{indep}{\sim} N(\beta x_i, I_d), \quad \beta \sim \pi, \quad \forall i = 1, \ldots, n.
    \]
where $T:\mathbb R^d\setminus C\to\mathcal E$ is the argmax mapping $T(\zeta)=\arg\max_{e \in \mathcal E}\, \zeta^\top e$, and $\pi$ represents the prior distribution on either $\mu$ or $\beta$, depending on the model under consideration.

To simplify notation, we denote by $Y^n := (Y_1, \ldots, Y_n)$ the response data and by $P^n_{\mu}$ the joint distribution of $Y^n$ given $\mu$, with 
\(
P_{\mu}(e):=P(Y=e|\mu)=\int_{\mathbb R^d}\phi(\zeta;\mu,I_d)1(\zeta^\top e=\max_{e'\in\mathcal E}\zeta^\top e')d\zeta,\quad\forall e\in\mathcal E.
\)

\subsection{Inconsistency of unconstrained model}

For combinatorial responses, it might appear tempting to simply ignore the constraints and model each data point as an array of independent discrete outcomes. However, we show this model will lead to inconsistent estimates.

We illustrate the inconsistency using intercept-only models. Suppose the ground-truth data generating distribution is $P^0(Y=e)=p^0_e$ for each combinatorial outcome $e\in \{0,1\}^d \cap \mathcal P$, and we use $q_{j1}= \sum_{\text{all }e} p^0_e 1(e_j=1)$ to denote the marginal $P^0(Y_j=1).$ We consider the case that each $q_{j1}\in (0,1)$, so that both 1 and 0 can appear in $j$-th dimension with positive probability.

Consider an unconstrained model that ignores the constraint $Az \le b$, and hence models the responses as $y_{ij} \stackrel{i.i.d.}{\sim} \text{Bernoulli}[\Phi(\mu_j)]$, with a prior $\mu \sim \pi(\mu)$. The resulting likelihood-prior product is
\[
\label{eq:unconstrained_model}
\exp \left[ n \sum_{j=1}^{d} \left\{ \frac{n_{j1}}{n} \log \Phi(\mu_j) + \frac{n - n_{j1}}{n} \log \Phi(-\mu_j) \right\} + \log \pi(\mu) \right],
\]
where $n_{j1} = \sum_{i=1}^n \mathbf{1}(y_{ij} = 1)$ denotes the number of ones in the $j$-th coordinate. Assuming the prior $\pi(\mu)$ has full support on $\mathbb{R}^d$ and is invariant to $n$, the log-likelihood term dominates as $n \to \infty$. Consequently, the posterior for $\Phi(\mu_j)$ converges almost surely to a point mass: $\Pi(\Phi(\mu_j) \mid y_1, \ldots, y_n) \stackrel{a.s.}{\to} \delta_{q_{j1}}$, where $\delta$ denotes a Dirac measure; this is analogous to the way that the maximum likelihood estimator (MLE) $\hat{\Phi}(\mu_j)=n_{j1}/n$.

Consequently, the fitted value under the unconstrained model has:
\(
\tilde P(Y=e)  \stackrel{a.s.}{\to} \prod_{j=1}^{d} q_{j1}^{e_j} (1- q_{j1})^{1-e_j}.
\)
Note that the limiting law of $Y$ above is not guaranteed to be the same as $p^0_e$. Perhaps the most striking difference is that for a value outside the combinatorial space, $e_1 \in \{0,1\}^d \setminus \mathcal P$, we have $\tilde P(Y=e_1) > 0$, but the ground-truth is $P^0(Y=e_1)=0$.
For the example in Section \ref{subsec:method}, if we have the ground truth $\mu_0=0$, then
$
{n_{11}}/{n}\stackrel{a.s.}{\to}{3}/{8},\quad {n_{21}}/{n}\stackrel{a.s.}{\to}{3}/{8},
$
hence 
$
\tilde P[Y=(1,1)]\stackrel{a.s.}{\to} 9/64,
$ but we know $Y=(1,1)$ cannot happen due to the constraint.

\begin{remark}
In the above  exposition, we focus on the unconstrained models with conditional independence across the dimensions of $y_i$. As an alternative, an unconstrained model could potentially avoid the inconsistency issue if one considers a latent correlated $\tilde\zeta_{i}\sim \text{N}(\mu, \Omega^{-1})$ and $y_{i,j}=1(\tilde\zeta_{i,j}>0)$, where $\Omega$ characterizes the conditional dependence structure of $y_i$. On the other hand, a caveat of such a latent Gaussian model is that ignoring the combinatorial constraint could lead to misleading conditional dependence estimate, for two outcomes that are marginally independent. For example, consider three binary outcomes, $y_{i,1}{\sim} \text{Bernoulli}(0.5)$, $y_{i,3}{\sim} \text{Bernoulli}(0.5)$, $y_{i,1}$ and $y_{i,3}$ are independent; $y_{i,2}$ is subject to combinatorial constraint $y_{i,2}=0$ if $y_{i,1}=0$, otherwise $y_{i,2} \sim \text{Bernoulli}[(y_{i,3}+1)/3]$. Now $y_{i,1}$ and $y_{i,3}$ become conditionally dependent given $y_{i,2}$ due to the combinatorial constraint.
\end{remark}

\subsection{Consistency under linear program link}
We now discuss the consistency results. We focus on the scenario when the data generating distribution is \eqref{eq:intercept_only} or \eqref{eq:regression}, with ground truth and fixed $\mu_0$ or $\beta_0$. Our exposition progresses over three stages: (i) the consistent recovery of the law $P^0(Y = e)$ under the intercept-only model, regardless of whether we can recover $\mu_0$; (ii) the consistent recovery of $\mu_0$, under additional identifiability conditions for the intercept-only model; (iii) the consistent recovery of $\beta_0$ in the regression model. Although (ii) may be viewed as a special case of (iii), establishing posterior consistency in regression is much more challenging than in the intercept-only model because the data are no longer identically distributed. Therefore, we keep (ii) and (iii) as two separate results.

\subsubsection{Recovering probability law via intercept-only model}

We first focus on recovering the groud-truth law of $Y$, $P_{\mu_0}$, under an intercept-only generative distribution.
For recovering the above, there is not much requirement needed for posterior consistency other than prior support, as described below. 
\begin{assumption}[Prior support]
\label{ass:prior}
The prior $\pi$ is a probability measure on the parameter space 
($\mathbb{R}^d$ for the intercept-only model or $\mathbb{R}^{d \times p}$ 
for the regression model) satisfying $\pi(B_\varepsilon(\theta_0)) > 0$ 
for every $\varepsilon > 0$, where $\theta_0$ denotes the true parameter.
\end{assumption}

Condition~\ref{ass:prior} ensures that the prior places positive mass in every Kullback--Leibler (KL) neighborhood of the true parameter. This is the usual KL support requirement for posterior consistency \citep{schwartz1965onbayes} and is satisfied by commonly used continuous priors.

\begin{theorem}
\label{thm:law_consistency}
Under Condition~\ref{ass:prior}, the posterior of the intercept-only model \eqref{eq:intercept_only} is strongly consistent at estimating $P_{\mu_0}$. That is,
\(
\left\|\int P(Y\mid \mu) \Pi(\mu \mid Y^n) d\mu-P(Y\mid\mu_0)\right\|_1\stackrel{n\to\infty}{\to}0\quad P^\infty_{\mu_0}-\text{almost surely}.
\)
\end{theorem}

\subsubsection{Recovering coefficients in intercept-only model}

	We want to clarify that the consistent recovery for $P_{\mu_0}$ does not imply the recovery for $\mu_0$. An identifiability issue may arise, for example, in a model where the response is a one-hot vector determined by the Gaussian latent variable that attains the maximum value:
 \(
y_i = \arg\max_z \zeta_i^\top  z \quad \text{subject to } z \in \{0,1\}^d \text{ and } \sum_{j=1}^d z_j = 1,
\)	
	shifting all coordinates of $\mu$ by a constant scalar does not change the probability distribution of the outcomes.
	For such cases, a common strategy to ensure identifiability in generalized linear model is to impose equality constraint on the parameter $\mu$, such as setting $\mu_1=0$ --- equivalently, one assigns a degenerate prior $\pi(\mu)$.

On the other hand, there are cases under which $\mu$ is identifiable without the need of degenerate prior. It turns out that the geometry of the polytope $\mathcal P$ plays a key role in ensuring identifiability of $\mu$. In fact, we show that the non-degeneracy of the polytope $\mathcal P$ is equivalent to the identifiability of $\mu$ in Theorem~\ref{thm:mu_identifiability}.

\begin{assumption}[Polytope structure]
\label{ass:polytope}
The polytope $\mathcal P=\{z\in\mathbb R^d:Az\le b\} \subset \mathbb R^d$ has nonempty interior in $\mathbb R^d$.
\end{assumption}

\begin{theorem}[Intercept-only model identifiability]
\label{thm:mu_identifiability}
Consider the intercept-only model \eqref{eq:intercept_only}. For any $\mu,\mu'\in\mathbb R^d$,
\[
P_{\mu}=P_{\mu'} 
\quad\text{if and only if}\quad 
\mu-\mu'\in \text{Span}\{e-e':e,e'\in\mathcal E\}^{\perp}.
\]
Consequently, the model \eqref{eq:intercept_only} is identifiable (i.e., $P_\mu$ is injective in $\mu$) if and only if Condition~\ref{ass:polytope} holds. When Condition~\ref{ass:polytope} fails, the model is identifiable only up to shifts in directions orthogonal to $\text{Span}\{e-e':e,e'\in\mathcal E\}$.
\end{theorem}

 A commonly used approach for establishing posterior consistency is Schwartz's Theorem \citep{schwartz1965onbayes}, which involves the construction of an exponentially consistent sequence of tests as introduced below. 
\begin{definition}
\label{def:exp_consistent_test_mu}
Let $V^c\subset\mathbb R^{d}$ and $\mu_0\in\mathbb R^d$. A sequence of tests $\psi_n(Y^n):\mathcal E^n\to[0,1]$ is said to be \textit{exponentially consistent} for testing
$
H_0:\mu=\mu_0\text{ vs }H_1:\mu\in V^c,
$
if there exists $C_1, C_2, b>0$ such that
\begin{enumerate}
\item[\rm(a)] $\mathbb E_{P^n_{\mu_0}}[\psi_n]\le C_1e^{-nb}$;
\item[\rm(b)] $\inf_{\mu\in V^c}\mathbb E_{P^n_\mu}[\psi_n]\ge 1-C_2e^{-nb}$.
\end{enumerate}
\end{definition}

With these ingredients, we now establish the posterior consistency of the intercept-only model for recovering $\mu_0 \in \mathbb{R}^d$.
\begin{theorem}[Posterior consistency for the intercept-only model]
\label{thm:consistency_mu}
Under Conditions~\ref{ass:prior} and \ref{ass:polytope}, the posterior of the  model \eqref{eq:intercept_only} is consistent at estimating $\mu_0$, i.e., for any open neighborhood of $\mu_0$, denoted by $V_{\mu_0}$, the following holds:
\(
\Pi( V^c_{\mu_0}\mid Y^n)\stackrel{n\to\infty}{\to}0\quad P^\infty_{\mu_0}-\text{almost surely}.
\)
\end{theorem}

\subsubsection{Recovering coefficients in regression model}
We now establish posterior consistency for the regression model \eqref{eq:regression}.
Following \cite{amewou2003posterior} and \cite{sriram2013posterior}, we treat the covariates $x_i$ as fixed.
A key distinction from the intercept-only model is that the observations $Y_1,\ldots,Y_n$ are independent but not identically distributed, since the distribution of each $Y_i$ depends on the fixed covariate $x_i$. 
Let $Q_{i,\beta}$ denote the conditional law of $Y_i$ given $\beta x_i$. Let $Q^n_\beta$ denote the conditional law of $Y^n=(Y_1,\ldots,Y_n)$ given $\{\beta x_i\}_{i=1}^n$. We have $Q^n_\beta=\prod_{i=1}^n Q_{i,\beta}$.
We have the posterior distribution of $\beta$ given $Y^n$, for any $\pi$-measuable set $\mathcal{B} \subset \mathbb{R}^{d \times p}$:
\(
\Pi(\mathcal B|Y_1,\ldots,Y_n)=\frac{\int_{\mathcal B}Q^n_\beta(Y_1,\ldots,Y_n)d\pi(\beta)}{\int_{\mathbb R^{p\times d}}Q^n_\beta(Y_1,\ldots,Y_n)d\pi(\beta)}.
\)
For two discrete probability distributions $G$ and $G'$ on a common support $\mathcal E_0$, we define the KL divergence
$
\mathrm{KL}(G \,\|\, G')
:=
\sum_{e\in\mathcal E_0} G(e)\log\!\left(\frac{G(e)}{G'(e)}\right),
$
and the associated second-moment quantity
$
V(F,G)
:=
\sum_{e\in\mathcal E_0}
G(e)\Bigl[\,0 \vee \log\!\bigl(G(e)/G'(e)\bigr)\Bigr]^2 .
$

With the above notations, we now employ an adapted version of Schwartz's Theorem showed by \cite{amewou2003posterior}. For this article to be self-contained, we first state the theorem below using our notations.
\begin{theorem}\citep{amewou2003posterior}
\label{thm:adapt_sch}
 Suppose the following conditions hold:
\begin{enumerate}
    \item[\rm(i)] There exists an exponentially consistent sequence of tests $\psi_n(Y^n) : \mathcal E^n \to [0,1]$ for testing
    $
    H_0: \beta = \beta_0 \text{ vs. } H_1: \beta \in V^c_{\beta_0}.
    $
That is, there exist constants $C_1, C_2, b > 0$:
\begin{enumerate}
    \item[\rm(a)] $\mathbb{E}_{Q^n_{\beta_0}}[\psi_n] \leq C_1 e^{-nb}$;
    \item[\rm(b)] $\inf_{\beta \in V^c_{\beta_0}} \mathbb{E}_{Q^n_\beta}[\psi_n] \geq 1 - C_2 e^{-nb}$.
\end{enumerate}
    
    \item[\rm(ii)] For every $\delta_0 > 0$,
    \(
    \pi\left(\left\{\beta : \sup_i KL(Q_{i,\beta_0} \| Q_{i,\beta}) < \delta_0, \sum_{i=1}^\infty \frac{V(Q_{i,\beta_0}, Q_{i,\beta})}{i^2} < \infty \right\}\right) > 0.
    \)
\end{enumerate}
Then,
$
\Pi(V^c_{\beta_0} \mid Y^n) \stackrel{n\to\infty}{\to} 0\quad Q^\infty_{\beta_0}-\text{almost surely}.
$
\end{theorem}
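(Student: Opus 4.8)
The plan is to follow the standard Schwartz-type argument, adapted to the independent but non-identically distributed observations. First I would write the posterior mass on $\mathcal B$ as a ratio of integrated likelihood ratios,
\[
\Pi(\mathcal B\mid Y^n)=\frac{\int_{\mathcal B}\prod_{i=1}^n \frac{Q_{i,\beta}(Y_i)}{Q_{i,\beta_0}(Y_i)}\,d\pi(\beta)}{\int \prod_{i=1}^n \frac{Q_{i,\beta}(Y_i)}{Q_{i,\beta_0}(Y_i)}\,d\pi(\beta)}=:\frac{N_n}{D_n},
\]
obtained by dividing numerator and denominator of the posterior by $\prod_i Q_{i,\beta_0}(Y_i)$. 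The goal is then to show that $N_n$ decays exponentially while $D_n$ is bounded below by $e^{-n\delta}$ for arbitrarily small $\delta$, both $Q^\infty_{\beta_0}$-almost surely. Using the test $\psi_n$ from condition (i), I would split $\Pi(\mathcal B\mid Y^n)\le \psi_n+(1-\psi_n)(N_n/D_n)$ and treat the two pieces separately.

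For the first piece, condition (a) gives $\mathbb E_{Q^n_{\beta_0}}[\psi_n]\le C_1 e^{-nb}$, which is summable in $n$, so $\psi_n\to 0$ $Q^\infty_{\beta_0}$-almost surely by Markov's inequality and the Borel--Cantelli lemma. For the numerator of the second piece, I would interchange expectation and integration by Fubini and apply condition (b):
\[
\mathbb E_{Q^n_{\beta_0}}\Big[(1-\psi_n)\,N_n\Big]=\int_{\mathcal B}\mathbb E_{Q^n_{\beta}}[1-\psi_n]\,d\pi(\beta)\le C_2 e^{-nb}.
\]
A further Markov plus Borel--Cantelli argument then yields $(1-\psi_n)N_n\le e^{-n\gamma}$ eventually almost surely, for any fixed $\gamma\in(0,b)$.

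The remaining and most delicate step is the lower bound on the denominator $D_n$, which is where condition (ii) enters and which I expect to be the main obstacle. I would restrict the integral defining $D_n$ to the Kullback--Leibler neighborhood
\[
\mathcal K=\Big\{\beta:\sup_i KL(Q_{i,\beta_0}\|Q_{i,\beta})<\delta_0,\ \sum_{i=1}^\infty \frac{V(Q_{i,\beta_0},Q_{i,\beta})}{i^2}<\infty\Big\},
\]
which has $\pi(\mathcal K)>0$ by assumption. Writing $Z_i(\beta)=\log\{Q_{i,\beta}(Y_i)/Q_{i,\beta_0}(Y_i)\}$, one has $\mathbb E_{Q_{i,\beta_0}}Z_i(\beta)=-KL(Q_{i,\beta_0}\|Q_{i,\beta})$, and since $V(Q_{i,\beta_0},Q_{i,\beta})$ is exactly the second moment of the negative part of $Z_i(\beta)$, the summability condition supplies precisely the Kolmogorov variance criterion $\sum_i \mathrm{Var}(Z_i)/i^2<\infty$ required for the strong law of large numbers for independent, non-identically distributed summands. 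Thus for each fixed $\beta\in\mathcal K$, the averaged log-ratio $\tfrac1n\sum_{i=1}^n Z_i(\beta)$ concentrates around its mean, and the $\sup_i KL<\delta_0$ bound gives $\liminf_n \tfrac1n\sum_i Z_i(\beta)\ge-\delta_0$ almost surely. The hard part is to transfer this pointwise-in-$\beta$ convergence into a bound on the integral $D_n\ge\int_{\mathcal K}e^{\sum_i Z_i(\beta)}\,d\pi(\beta)$; I would handle this via a Fubini argument showing that the $Q^\infty_{\beta_0}$-null exceptional set can be chosen independent of $\beta$ for $\pi$-almost every $\beta\in\mathcal K$, followed by Fatou's lemma, to conclude $\liminf_n \tfrac1n\log D_n\ge-\delta_0$ almost surely. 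Combining this with the numerator estimate and choosing $\delta_0<\gamma$ gives $(1-\psi_n)(N_n/D_n)\le e^{-n(\gamma-\delta_0)}\to0$ up to a subexponential factor, which completes the argument.
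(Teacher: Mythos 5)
This theorem is not proved in the paper at all: it is stated, in the paper's notation, precisely so that it can be cited from \cite{amewou2003posterior} and invoked in the proof of Theorem~\ref{thm:consistency_beta}, so there is no in-paper proof to compare against. Measured against the standard argument (which is what the cited reference gives), your outline follows the same classical Schwartz-type route: the split $\Pi(\mathcal B\mid Y^n)\le\psi_n+(1-\psi_n)N_n/D_n$, Markov plus Borel--Cantelli using condition (a), and the change-of-measure/Fubini identity $\mathbb E_{Q^n_{\beta_0}}[(1-\psi_n)N_n]=\int_{\mathcal B}\mathbb E_{Q^n_{\beta}}[1-\psi_n]\,d\pi(\beta)\le C_2e^{-nb}$ from condition (b). Those parts, and your Fubini-plus-Fatou handling of the $\beta$-dependent null sets in the denominator step, are correct.

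There is, however, one genuine misstep, in exactly the place you flag as delicate. You assert that because $V(Q_{i,\beta_0},Q_{i,\beta})$ is the second moment of the negative part of $Z_i(\beta)$, condition (ii) ``supplies precisely the Kolmogorov variance criterion $\sum_i\mathrm{Var}(Z_i)/i^2<\infty$.'' It does not: $\mathrm{Var}(Z_i)$ also involves the positive part of $Z_i$, which condition (ii) leaves entirely uncontrolled (it need not even have a finite second moment), so Kolmogorov's strong law cannot be applied to $Z_i$ itself. This is exactly why the theorem is stated with the one-sided quantity $V$ rather than a variance. The repair is the standard one-sided argument: apply Kolmogorov's SLLN to the truncations $Z_i\wedge 0=-Z_i^-$, whose variances are bounded by $\mathbb E_{Q_{i,\beta_0}}[(Z_i^-)^2]=V(Q_{i,\beta_0},Q_{i,\beta})$, and use $Z_i\ge Z_i\wedge 0$ together with a bound on the means, e.g.\ $\mathbb E[Z_i^-]\le KL(Q_{i,\beta_0}\|Q_{i,\beta})+\sqrt{KL(Q_{i,\beta_0}\|Q_{i,\beta})/2}\le \delta_0+\sqrt{\delta_0/2}$ by Pinsker's inequality, to get
\(
\liminf_{n\to\infty}\frac{1}{n}\sum_{i=1}^nZ_i(\beta)\ge-\limsup_{n\to\infty}\frac{1}{n}\sum_{i=1}^n\mathbb E[Z_i^-(\beta)]\ge-\bigl(\delta_0+\sqrt{\delta_0/2}\bigr)\quad Q^\infty_{\beta_0}\text{-a.s.}
\)
for each $\beta$ in the KL set. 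Consequently your denominator bound becomes $\liminf_n n^{-1}\log D_n\ge-(\delta_0+\sqrt{\delta_0/2})$ rather than $-\delta_0$, which costs nothing because condition (ii) holds for \emph{every} $\delta_0>0$: the final step should read ``choose $\delta_0$ small enough that $\delta_0+\sqrt{\delta_0/2}<\gamma$'' instead of ``$\delta_0<\gamma$.'' With that one-line correction the proposal is a sound reconstruction of the cited result.
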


Condition~(ii) in Theorem~\ref{thm:adapt_sch} is the adapted KL-support condition for the non-i.i.d.\ case, which requires Condition~\ref{ass:prior}, and the boundedness of covariates (Condition~\ref{ass:covariates_bound}). The latter condition is ubiquitously assumed in establishing posterior consistency for regression models, and is needed to bound the supremum of the Kullback-Leibler divergence.

\begin{assumption}[Boundedness of Covariates]
\label{ass:covariates_bound}
There exists $\ell>0$ such that $\sup_i\|x_i\|_2\le \ell$.
\end{assumption}

To distinguish $\beta\neq\beta_0$ from $\beta_0$, the covariates must satisfy additional directional condition. We introduce the following definition. Then we present the additional condition on the covariates.

\begin{definition}[Face selection mapping]
For a polyhedron $\mathcal P$, a face is a subset $F= \{z\in \mathcal P : v^\top z = \max_{z\in \mathcal P} v^\top z\}$ for some $v\in \mathbb R^d$.  A \textit{proper face} is one that is not equal to $\mathcal P$. 
We define \textit{face selection} mapping as:
\(
\mathcal F(v):=\{e^\star\in\mathcal P:v^\top e^\star=\max_{e\in\mathcal P}v^\top e\}.
\)
\end{definition}
\noindent We note that a face can be an empty set, a vertex, a line segment, or a higher-dimensional subset of $\mathcal P$. Our primary focus below is on $\mathcal F(v)$ that could yield a vertex subset in $\mathcal E$.

\begin{assumption}[Covariates richness]
\label{ass:anti_boundary}
For every $u\in\mathbb S^{dp-1}:=\{\beta\in\mathbb R^{d\times p}:\|\beta\|_F=1\}$ where $\|\cdot\|_F$ denotes the Frobenius norm, there exist a proper face set $F_u: \mathcal E\cap F_u\subsetneq\mathcal E$ and a constant $\delta_u>0$, such that 
\(
\liminf_{n\to\infty}\frac{1}{n}\sum_{i=1}^n1\Big\{\mathcal F(ux_i)=F_u,\min_{e\in \mathcal E\cap F_u,\ e'\in\mathcal E\setminus F_u}(e-e')^\top ux_i\ge\delta_u\Big\}>0.
\)
\end{assumption}

\begin{remark}
Intuitively, Condition~\ref{ass:anti_boundary} requires that for every unit
direction $u$ in the parameter space, there exists a nontrivial subset of
covariates along which the perturbation $\beta_0+u$ produces a detectable separation between $e^\top ux_i$ and $e'^\top ux_i$ for $e\in \mathcal E\cap F_u$ and $e'\in\mathcal E\setminus F_u$. 
{This ensures that any deviations from $\beta_0$ can be detected by a subset $\mathcal X$ of covariates such that $|\mathcal X\cap \{x_i:i\in[n]\}|/n$ is not vanishing as $n$ diverges.}
\end{remark}

\begin{remark}
Condition~\ref{ass:anti_boundary} is an essentially minimal requirement
for posterior consistency in our regression model. 
When it fails, $\beta_0+u_0$ and $\beta_0$ are distinguishable only on a increasingly smaller proportion of covariates as $n$ diverges, since  there exists a direction $u_0\in\mathbb S^{dp-1}$ such that 
$
\liminf_{n\to\infty}{(1/n)}\sum_{i=1}^n1\left\{\mathcal F(u_0x_i)=F,\min_{e\in \mathcal E\cap F,\ e'\in\mathcal E\setminus F}(e-e')^Tu_0x_i\ge\delta\right\}=0
$
for every proper face set $\mathcal E\cap F\subsetneq \mathcal E$ and every $\delta>0$,
\end{remark}

\begin{theorem}[Posterior consistency for the regression model]
\label{thm:consistency_beta}
Under Conditions~\ref{ass:prior}, \ref{ass:covariates_bound}, and \ref{ass:anti_boundary}, the posterior of the model \eqref{eq:regression} is consistent at estimating $\beta_0$, i.e., for any open neighborhood of $\beta_0$, denoted by $V_{\beta_0}$, the following holds:
\(
\Pi(V^c_{\beta_0}\mid Y^n)\stackrel{n\to\infty}{\to}0\quad Q^\infty_{\beta_0}-\text{ almost surely}.
\)
\end{theorem}

\begin{remark}
Throughout this section we treat the covariates $\{x_i\}_{i\ge1}$ as fixed. Nevertheless, Condition~\ref{ass:anti_boundary} has a natural interpretation under random design.
Suppose that $x_i$ are i.i.d.\ draws from a continuous distribution $P_X$ supported on a compact subset of $\mathbb R^p$. If the empirical frequencies in Condition~\ref{ass:anti_boundary} are replaced by probabilities under $P_X$, namely if for every $u\in\mathbb S^{dp-1}$ there exist a proper face set $F_u$ such that $\mathcal E\cap F_u\subsetneq\mathcal E$ and $\delta_u>0$ such that
\(
P_X\!\left(
\mathcal F(uX)=F_u,\;
\min_{e\in \mathcal E\cap F_u,\ e'\in\mathcal E\setminus F_u}(e-e')^\top uX\ge\delta_u
\right)>0,
\)
then Condition~\ref{ass:anti_boundary} holds almost surely by the law of large numbers. In this case, the posterior consistency result in Theorem~\ref{thm:consistency_beta} is expected to remain valid under random design.
\end{remark}

\section{Simulation studies}

\subsection{Fitted value curve}
We first conduct a simple simulation to demonstrate the importance of incorporating combinatorial constraints in the model. We revisit the example from Section~\ref{subsec:method} where responses $y_i \in \{0,1\}^2$ must satisfy the constraint $y_{i,1} + y_{i,2} \le 1$, meaning at most one component can be 1. The data generating process follows our latent variable framework where $\zeta_i \sim \text{N}(\mu_i, I_2)$ and the mean $\mu_i$ has a linear structure $\mu_i = \beta_0 + \beta_1 x_i$. Here $\beta_0, \beta_1 \in \mathbb{R}^2$ are coefficient vectors and $x_i \in \mathbb{R}$ is a scalar covariate. We use a one-dimensional covariate $x_i$ in order to clearly illustrate the fitted value curves.

We compare two models: (1) a standard bivariate probit regression that ignores the constraint (unconstrained model), and (2) our proposed model that incorporates the constraint (constrained model). Figure~\ref{plot:stack_comparison} shows the fitted value curves, as the estimated class probabilities from both models alongside the ground truth probabilities computed using the true coefficients $\beta_0$ and $\beta_1$.

Figure \ref{plot:stack_comparison}(b,c) show that our model successfully recovers the true probability distribution. On the other hand, panel (a) shows that the unconstrained model produces biased probability estimates.
Clearly, the unconstrained model incorrectly assigns positive probability to the outcome $(1,1)$ that should have had zero probability.

\begin{figure}[H]
\centering
\begin{subfigure}[t]{0.95\textwidth}
    \begin{overpic}[width=\textwidth, height = 3.8cm]{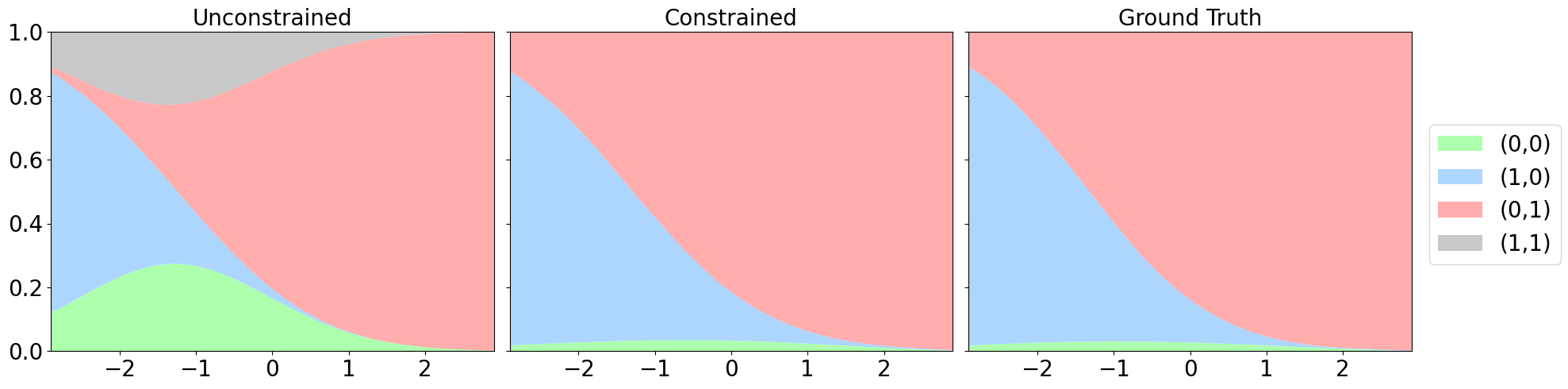}
    \put(46, -1){\scriptsize $x_i$}
    \put(16, -1){\scriptsize $x_i$}
    \put(76, -1){\scriptsize $x_i$}
    \put(-2,8){\rotatebox{90}{\scriptsize Probability}}
    \end{overpic}
\end{subfigure}
\caption{Comparison of predicted class probabilities under unconstrained and constrained models. Each panel shows the stacked predicted probabilities for the four response patterns as a function of the covariate $x_i$.}
\label{plot:stack_comparison}
\end{figure}

A natural question is if one could correct the bias of the unconstrained model by post-processing the predicted probabilities via conditioning on the constraint. Unfortunately, the answer is negative. As shown in Figure~\ref{plot:stack_comparison}(a), at $x_i \approx -1.4$, the unconstrained model incorrectly assigns almost equal probabilities to the three outcomes $(0,0)$, $(1,0)$, and $(0,1)$, whereas the ground truth probability of $(0,0)$ is close to zero.

\subsection{Signal recovery and computational efficiency under various dimensions}
{
We evaluate the performance of our method with simulation mimicking skip-logic surveys, a common source of combinatorial response data with structural zeros. To illustrate,
we first consider a short survey consisting of $d=5$ binary questions, where later questions are conditionally presented based on earlier responses.
For instance, respondents are first asked a screening question (Question~1), and only those answering positively are presented with follow-up questions (Questions~2 and~3).
Additional questions (Questions~4 and~5) are shown only if specific combinations of earlier responses are observed.
As a result, certain response patterns are infeasible by design, and show as structural zeros in the response vector. Table~\ref{tb:skip_logic_example} shows a short online health survey for studying preventive care behavior and its skip-logic conditions.

\begin{table}[H]
\centering
\begin{tabular}{clp{6.5cm}}
\toprule
Index & Question & Skip-logic condition \\
\midrule
$1$ & Has health insurance &
Always asked \\

$2$ & Had a routine checkup in past year &
Asked only if $y_{i,1} = 1$ \\

$3$ & Had preventive screening tests &
Asked only if $y_{i,1} = 1$ \\

$4$ & Checkup identified health risk &
Asked only if $y_{i,2} = 1$ \\

$5$ & Followed up with a specialist &
Asked only if $y_{i,4} = 1$ \\
\bottomrule
\end{tabular}
\caption{Skip-logic survey example with structural zeros.}
\label{tb:skip_logic_example}
\end{table}

This skip logic can be encoded using partial order constraints on the response vector.
For the example in Table~\ref{tb:skip_logic_example}, we can use $y_{i,2}\le y_{i,1}, y_{i,3}\le y_{i,1}, y_{i,4}\le y_{i,2}, y_{i,5}\le y_{i,4}$, which corresponds to $Az\le 0 \; (A\in\{-1,0,1\}^{4\times 5})$, where each row of $A$ has exactly one $-1$ and one $1$. Equivalently, the matrix $A$ can be viewed as the directed incidence matrix of a directed graph, hence guaranteed to be TUM.

Our simulation study adopts this skip-logic survey mechanism to form constraints and simulate data over different dimensions. Specifically, we generate a random graph containing $m$ nodes and $d$ edges, and use its directed incidence matrix as $A$. We use a binary vector for $b$. 
We simulate data from a regression model with a true coefficient matrix $\beta\in\mathbb R^{d\times p}$. Each simulation uses $n=1000$, $p=5$ and $d$ from a varying range. We examine how the method scales with response dimension by considering nine different values of $d$, ranging from small ($d=2,5,10,20,50$) to large ($d=100,200,500,1000$). The number of constraints $m$ is choosen from $\{1, 5, 10, 20, 50, 100\}$. For prior specification, we assign $\beta\sim\text{Mat-N}(0,I_d,I_p\tau)$ with $\tau$ set as $10$.
}

For each simulation, we perform MCMC sampling with 50000 total iterations, discarding the first 5000 as burn-in and retaining every 25th sample as thinnings thereafter. The resulting trace plots and autocorrelation function (ACF) plots shown in Figure~\ref{plot:beta_ACF_and_trace} indicate efficient mixing of the Markov chain, with quick decay in the autocorrelation structure.

\begin{table}[H]
  \centering
  \small
  \begin{tabular}{cccccccccc}
    \hline
        & \multicolumn{9}{c}{d}                                                 \\ \cline{2-10} 
    m   & 2     & 5     & 10    & 20    & 50    & 100   & 200   & 500   & 1000  \\ \hline
    1   & 0.046 & 0.066 & 0.076 & 0.065 & 0.086 & 0.079 & 0.090 & 0.094 & 0.091 \\
    5   & -     & - & 0.066 & 0.084 & 0.084 & 0.081 & 0.082 & 0.088 & 0.087 \\
    10  & -     & -     & - & 0.088 & 0.099 & 0.084 & 0.095 & 0.807 & 0.088 \\
    20  & -     & -     & -     & -     & 0.116 & 0.104 & 0.090 & 0.091 & 0.091 \\
    50  & -     & -     & -     & -     & -     & {0.209}     & 0.171 & 0.125 & 0.104 \\
    100 & -     & -     & -     & -     & -     & -     & -     & {0.263}     & 0.128 \\ \hline
    \end{tabular}
    \caption{Root mean squared error (RMSE) for estimating $\beta$ using posterior mean under different dimensionality $(d,m)$. We choose $d>m$ so that the number of constraints is smaller than the number of elements in $y_i$.}
  \label{tb:RMSE}
  \end{table}

For the running time per $1000$ iterations, it takes 0.07 minutes for $(d,m)=(2,1)$, 8 minutes for $(d,m)=(1000,1)$, and 52 minutes for $(d,m)=(1000,100)$, using Rcpp on MacBook Pro with a 8-core CPU. We provide the R source code in the Supplementary Materials.

To assess estimation accuracy, we calculate the root mean squared error (RMSE) between the posterior mean estimates and the ground-truth values, presented in Table~\ref{tb:RMSE}. Figure~\ref{plot:beta_violin} displays the posterior distribution for each $\beta_{i,j}$ parameter, demonstrating that the credible intervals effectively capture the true parameter values. {We provide additional details in the Supplementary Materials S4.}

\begin{figure}[H]
\centering
\begin{subfigure}[t]{0.3\textwidth}
    \begin{overpic}[width=\textwidth, height = 3.8cm]{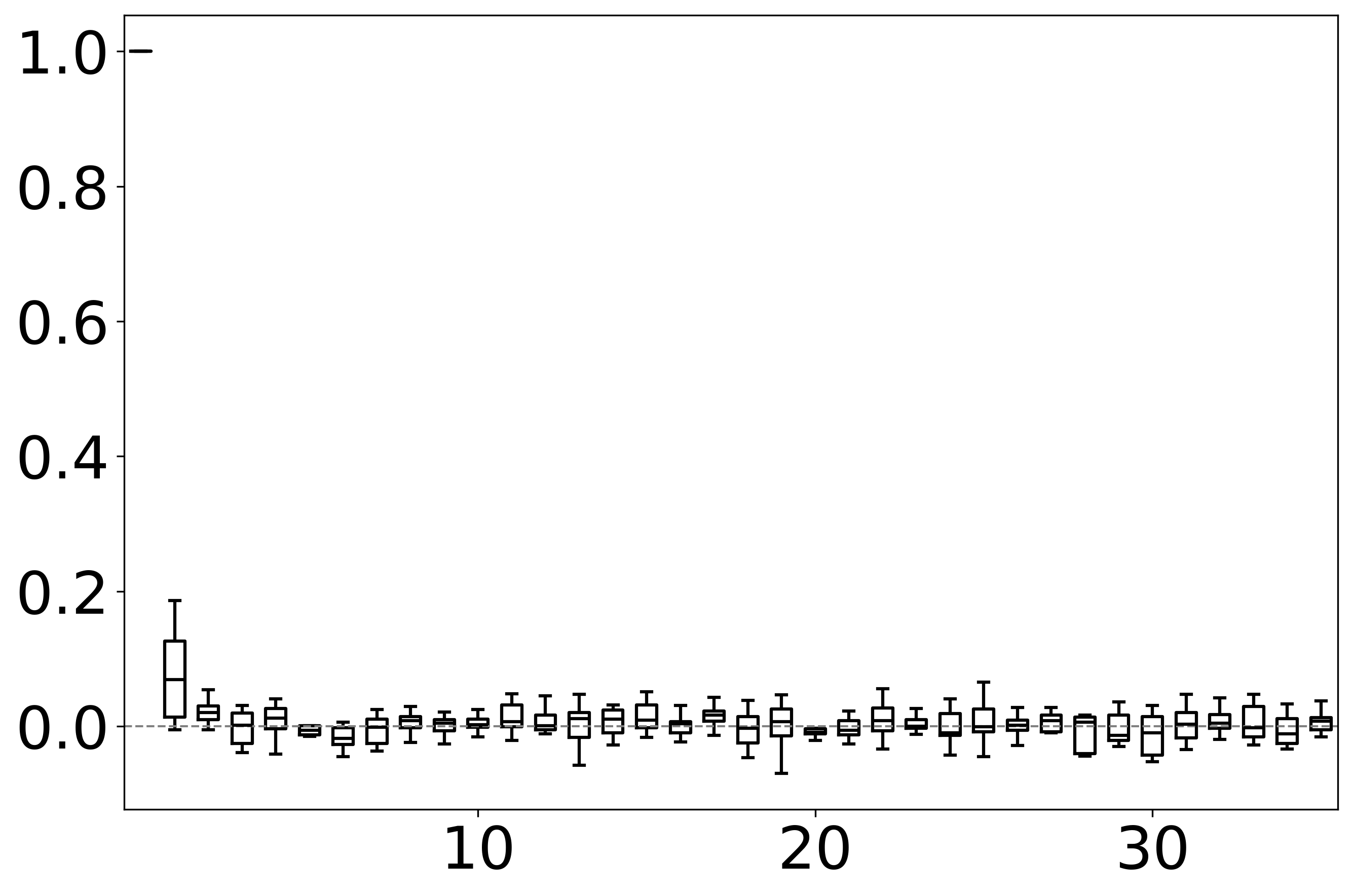}
    \put(50, -4){\scriptsize Lag}
    \end{overpic}
    \caption{$(d,m) = (2,1)$.}
\end{subfigure}\;
\begin{subfigure}[t]{0.3\textwidth}
    \begin{overpic}[width=\textwidth, height = 3.8cm]{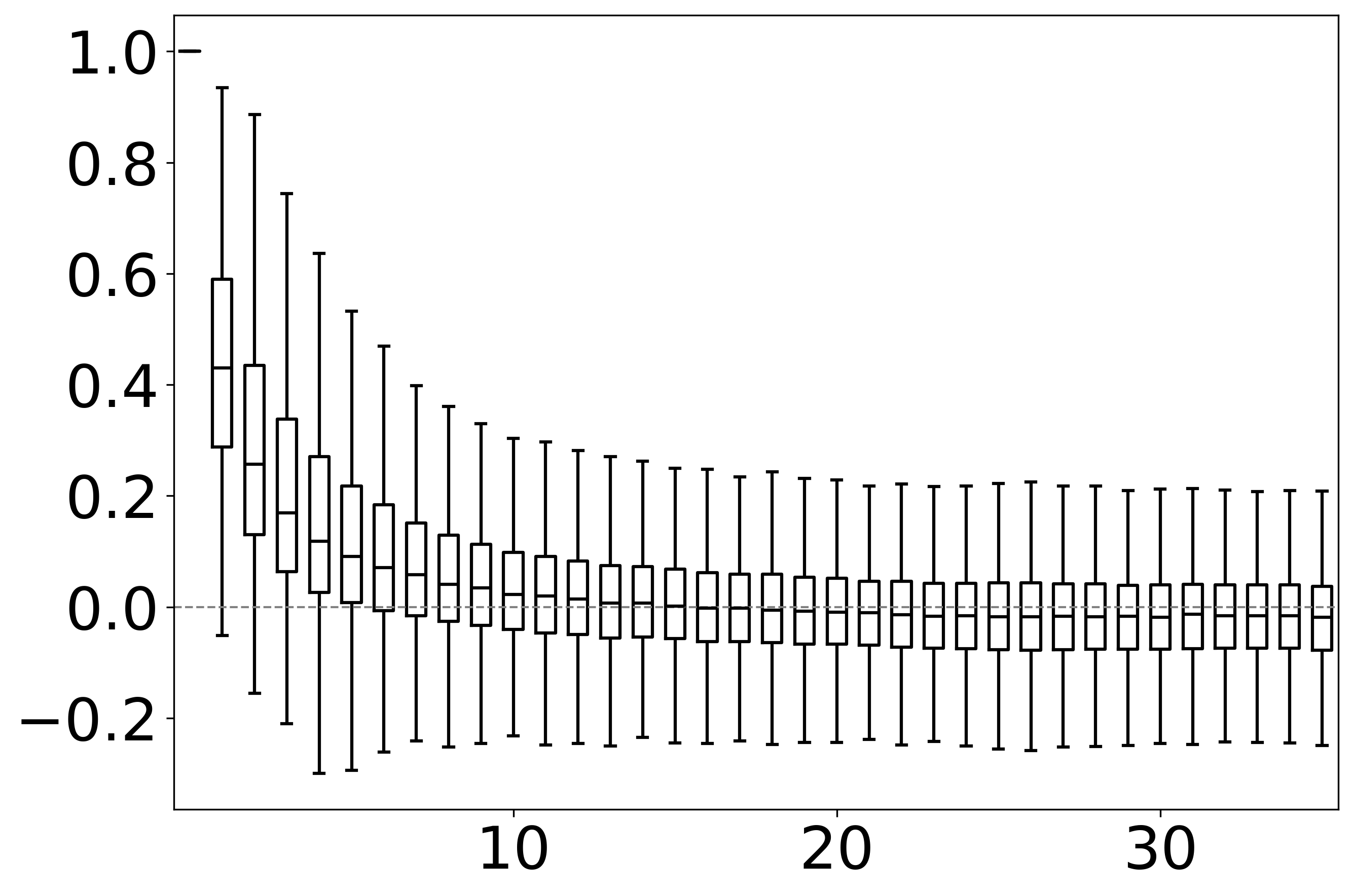}
    \put(50, -4){\scriptsize Lag}
    \end{overpic}
    \caption{$(d,m) = (500,50)$.}
\end{subfigure}\;
\begin{subfigure}[t]{0.3\textwidth}
    \begin{overpic}[width=\textwidth, height = 3.8cm]{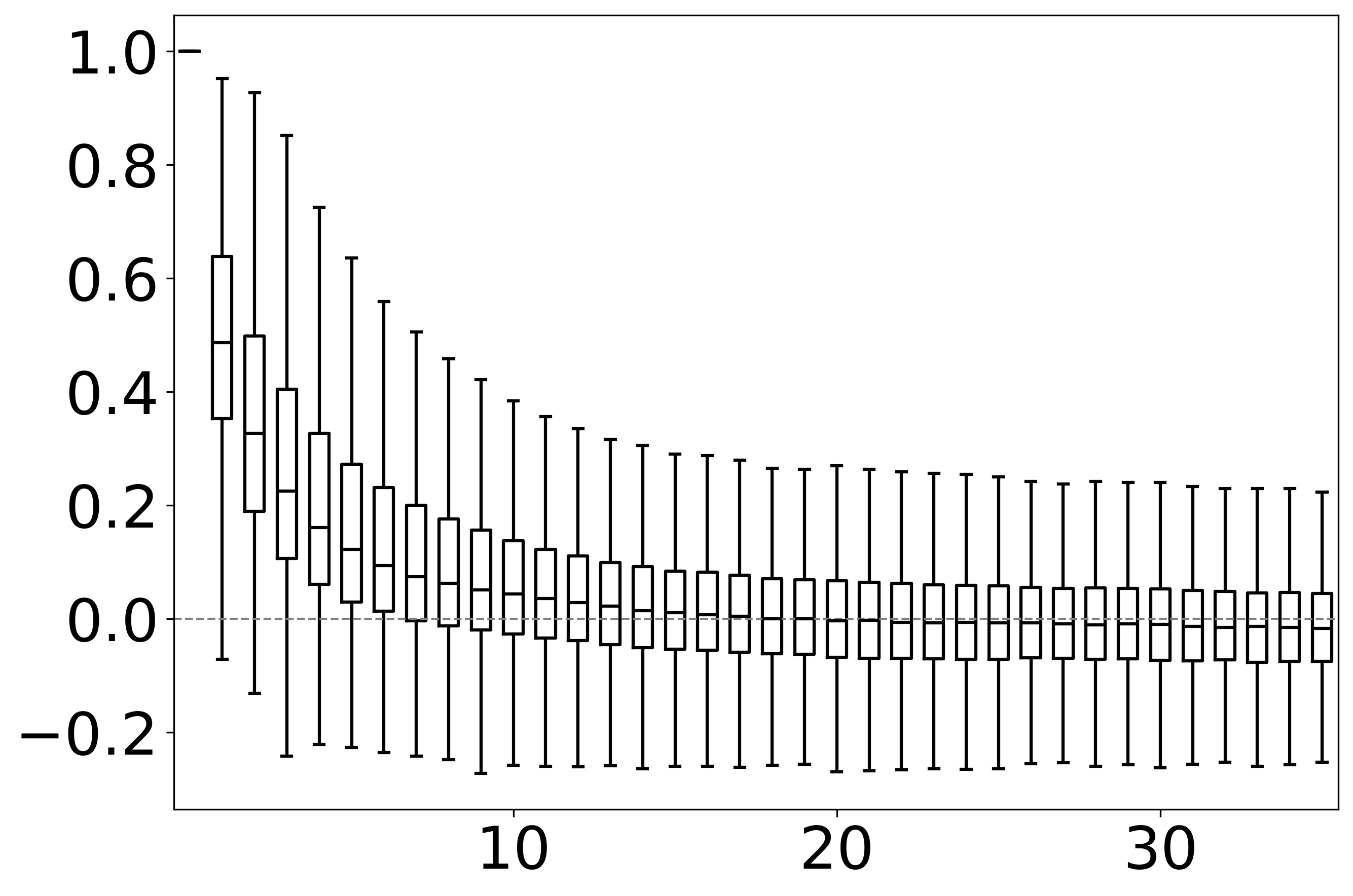}
    \put(50, -4){\scriptsize Lag}
    \end{overpic}
    \caption{$(d,m) = (10^3,10^2)$.}
\end{subfigure}\\
\begin{subfigure}[t]{0.3\textwidth}
    \begin{overpic}[width=\textwidth, height = 3.8cm]{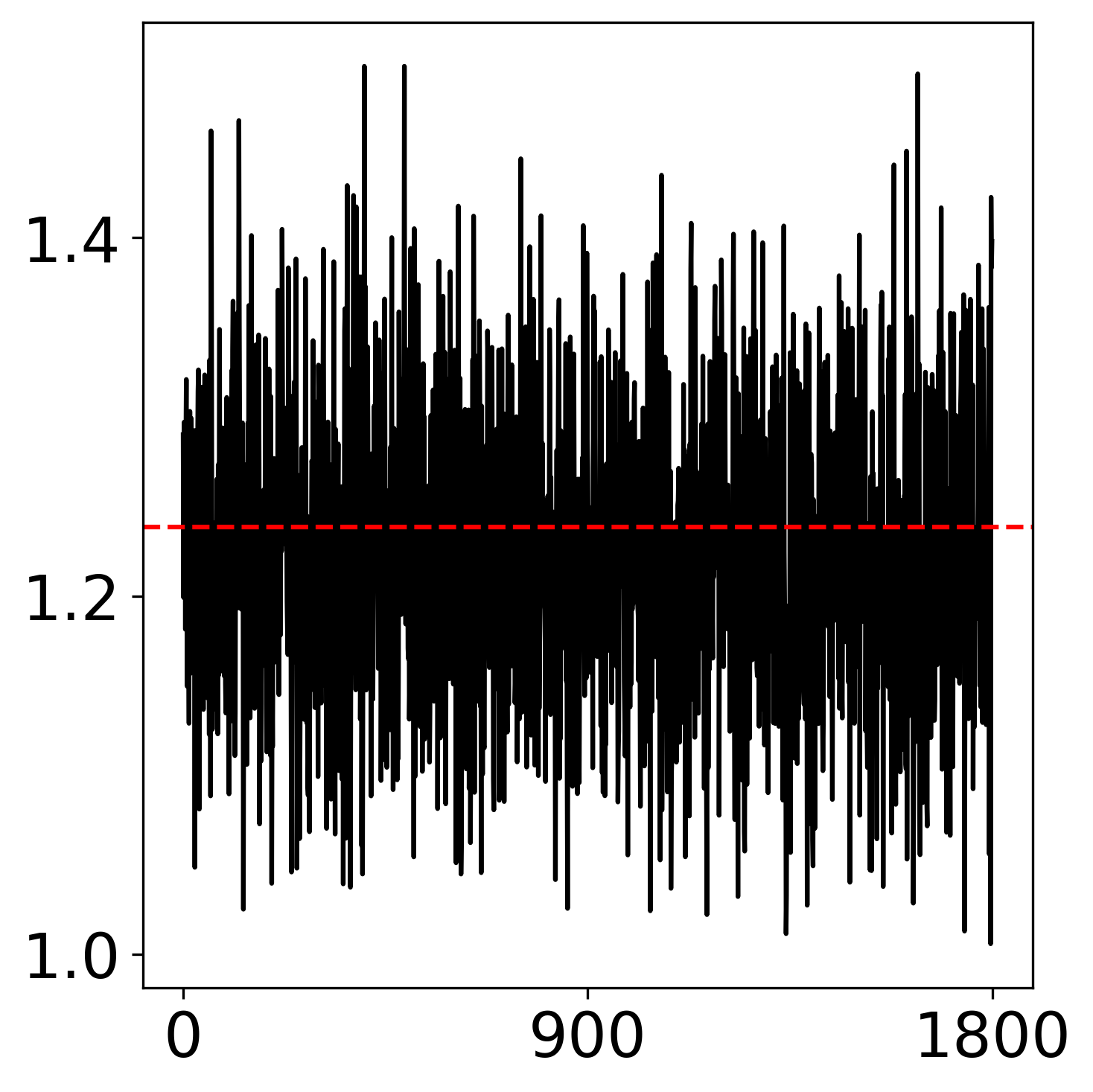}
    \put(44, -4){\scriptsize Iteration}
    \end{overpic}
    \caption{$(d,m) = (2,1)$.}
\end{subfigure}\;
\begin{subfigure}[t]{0.3\textwidth}
    \begin{overpic}[width=\textwidth, height = 3.8cm]{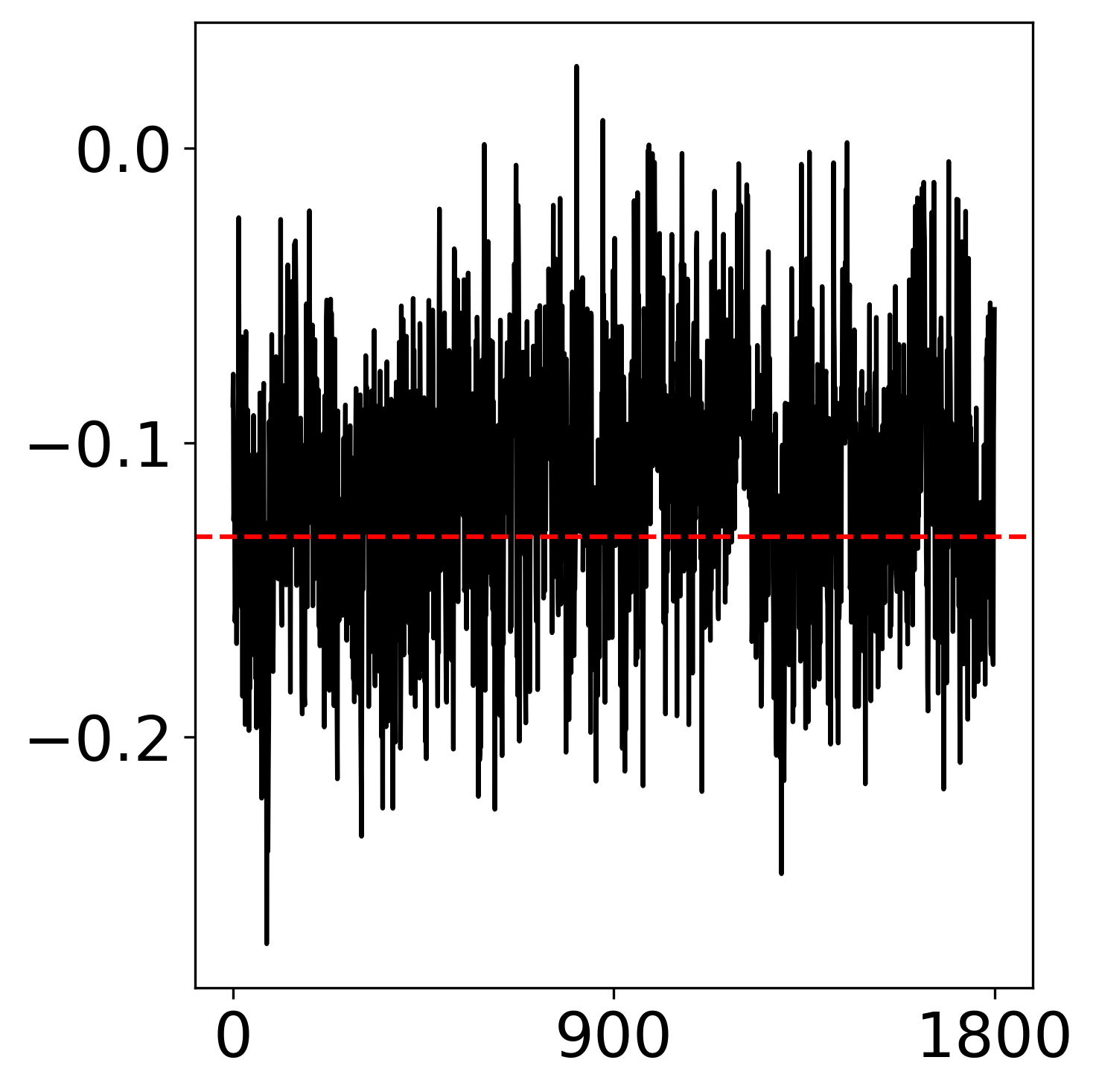}
    \put(44, -4){\scriptsize Iteration}
    \end{overpic}
    \caption{$(d,m) = (500,50)$.}
\end{subfigure}\;
\begin{subfigure}[t]{0.3\textwidth}
    \begin{overpic}[width=\textwidth, height = 3.8cm]{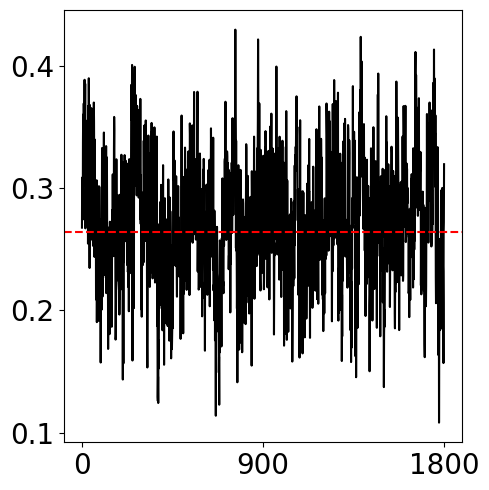}
    \put(44, -4){\scriptsize Iteration}
    \end{overpic}
    \caption{$(d,m) = (10^3,10^2)$.}
\end{subfigure}
\caption{Autocorrelation function (ACF) and trace plots for different values of $(d,m)$. Each box displays the ACF of all $\beta_{i,j}$ parameters. The ACF is computed after thinning the posterior samples by retaining every 25th sample. The trace plots are based on a randomly chosen parameter under each setting.}
\label{plot:beta_ACF_and_trace}
\end{figure}

\begin{figure}[H]
\centering
\begin{subfigure}[t]{0.48\textwidth}
    \begin{overpic}[width=\textwidth,height=4cm]{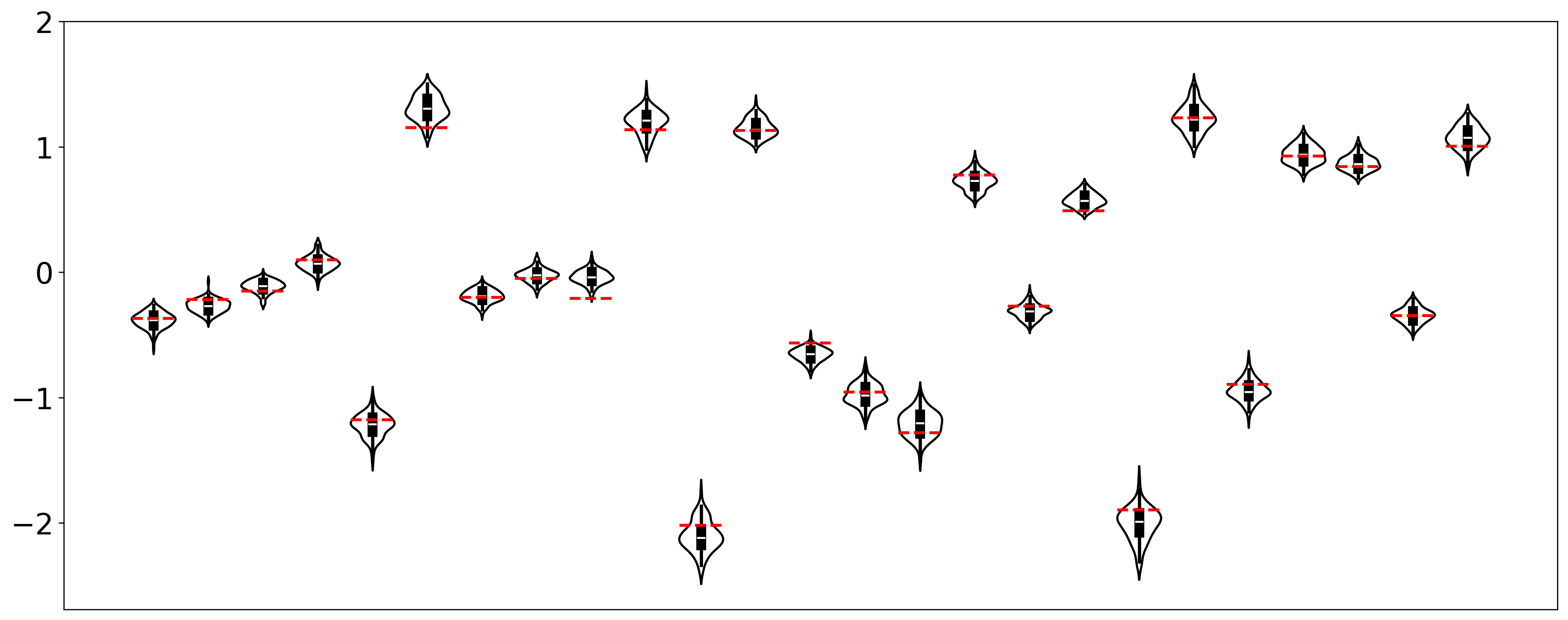}
    \put(43, -4){\scriptsize Parameter}
    \end{overpic}
    \caption{Posterior distributions of $\beta_{i,j}$ for $(d,m) = (5,1)$.}
\end{subfigure}\;
\begin{subfigure}[t]{0.48\textwidth}
    \begin{overpic}[width=\textwidth,height=4cm]{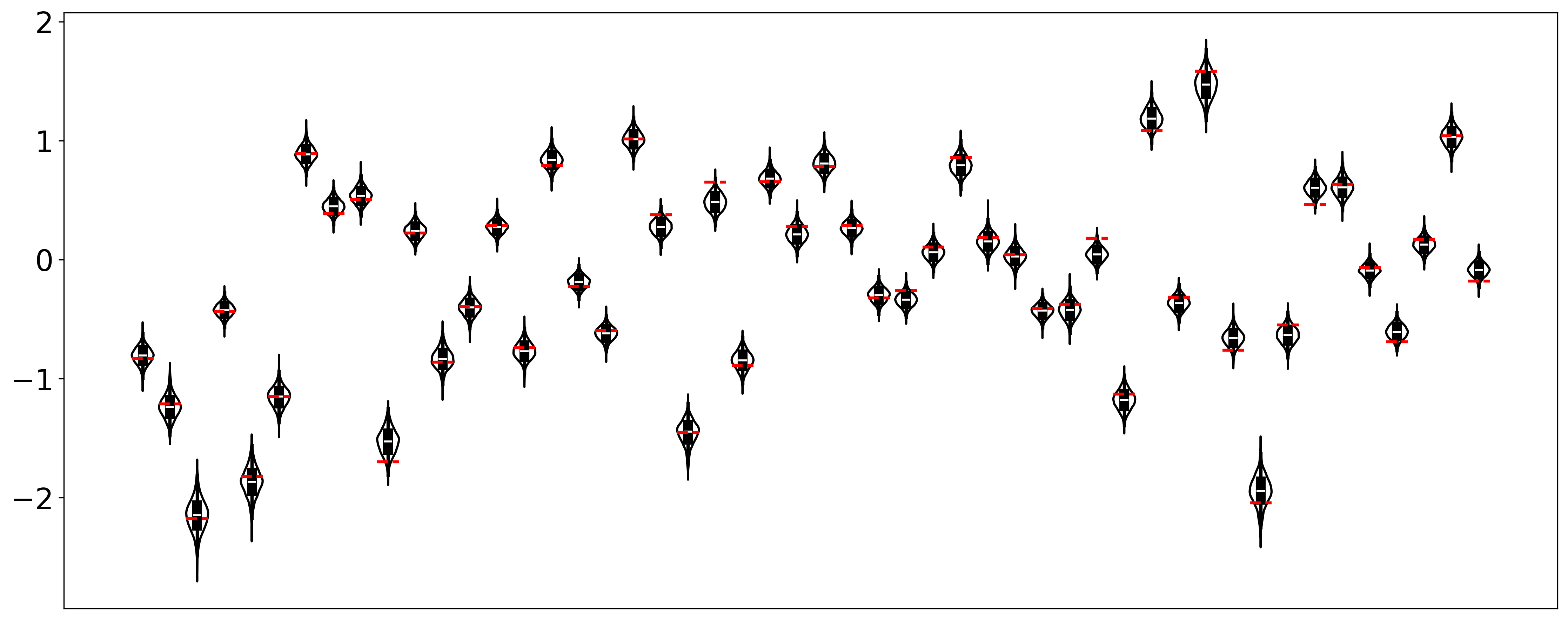}
      \put(43, -4 ){\scriptsize Parameter}
    \end{overpic}
    \caption{Posterior distributions of $\beta_{i,j}$ for $(d,m) = (10,5)$.}
\end{subfigure}
\caption{Violin plots of the posterior distributions of $\beta_{i,j}$ from MCMC sampling for different values of $(d,m)$. The horizontal lines represent the ground truth.}
\label{plot:beta_violin}
\end{figure}

\section{Data analysis on waterfowl matching}
\label{sec:data_application}
We conduct data analysis on the seasonal matching of waterfowl. All waterfowl are monogamous, though their mating patterns differ across species. Geese, swans, and whistling ducks exhibit perennial monogamy with lifelong pair bonds, while ducks practice seasonal monogamy where pair bonds last 4 to 8 months, and they choose new mates each year. The pair formation process follows a clear seasonal pattern: pairing initiates during fall migration in September–-October, most pairs are formed by December--January, and pair bonds continue to strengthen through winter and spring.

Our dataset is the temporal observation of the pairing of male and female ducks from November to March. In the dataset, there are 95 ducks ($|V|=95$) in total, and from 7 different species: American Black Duck, Mallard, Gadwall, Scaup, Redhead, Canvasback, and Ring-necked Duck. Since pairs can only be formed between a male and a female within the same species, the observed matching $M^t=(V,E^t_M)$ at time $t$ is the subgraph of a bipartite graph $G=(V,E_G)$, with 339 possible edges ($|E_G|=339$). As described in Example 4 of Section 2, we use $A$ as the incidence matrix for $G$, and convert each $M^{t}$ into a binary vector $y_{t}\in \{0,1\}^{d}$ with $d=339$, corresponding to whether the $e$-th edge in $G$ is chosen into $M^{t}$. The polyhedron associated with matching is then defined by $Ay_{t}\le 1$.  The maximal number of ones in $y_{t}$ is $46$, which is substantially smaller than $339$ due to the polyhedron constraint.

For the covariates, time is modeled using B-splines corresponding to weeks. We use B-spline bases corresponding to {$\kappa= 5$} columns in the spline matrix. Additionally, we incorporate the weights of both male and female ducks measured at the end of November. Our model for each latent variable $\zeta_t \in \mathbb{R}^d$ is
\(
\zeta_t = \beta_0 1_{d} +   x_{\text{weights}} \beta_1 + C_{\text{species}}\beta_2 x_{\text{splines}:t}  + \epsilon_t,\quad \epsilon_t \sim \text{N}(0,I_d)
\)
where $\beta_0\in \mathbb{R}$ is the common intercept, $x_{\text{weights}}\in \mathbb{R}^{d\times 2}$ has each row as the weights of two ducks on each edge of $G$, $\beta_1\in\mathbb R^2$ is the coefficient vector for duck weights, $x_{\text{splines}:t} \in \mathbb{R}^{\kappa\times 1}$ corresponds to the B-spline basis function value at time $t$, $\beta_2\in \mathbb{R}^{7 \times \kappa}$ is the coefficient matrix corresponding to 7 individual curves for species, $C_{\text{species}} \in \{0,1\}^{d\times 7}$ is the indicator matrix for species in which each row has only one element equal one. For prior specification, we assign $\beta_0,\beta_{1:1},\beta_{1:2}\stackrel{i.i.d.}{\sim}\text{N}(0,\tau^{-1}_\alpha)$ and $\beta_{2:(i,\cdot)}\stackrel{indep}{\sim}\text{N}(0,(\tau_\beta s_i)^{-1}I_\kappa)$ with $s_i$ denoting the number of ducks of species $i$. We use the parameterization with $s_i$ in order for the full conditional of $\beta_{2}$ to be a matrix normal distribution. We set $\tau_\alpha=\tau_\beta=0.1$. The posterior sampling method can be found in the Supplementary Materials S2.

\begin{figure}[H]
  \centering
  \begin{subfigure}[t]{0.24\textwidth}
      \begin{overpic}[width=\textwidth]{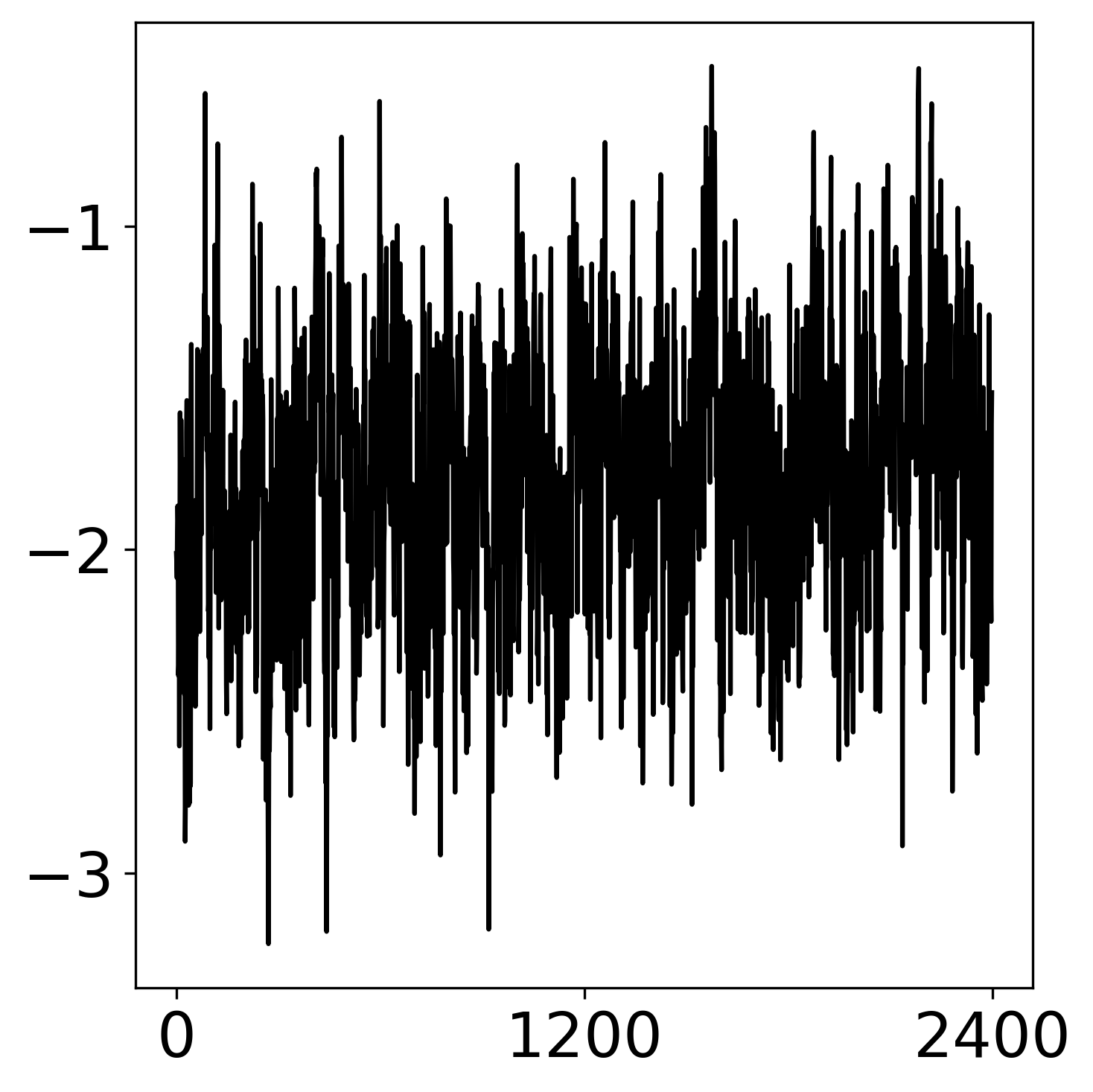}
      \put(44, -4){\scriptsize Iteration}
      \end{overpic}
      \caption{For $\beta_0$.}
  \end{subfigure}
  \begin{subfigure}[t]{0.24\textwidth}
      \begin{overpic}[width=\textwidth]{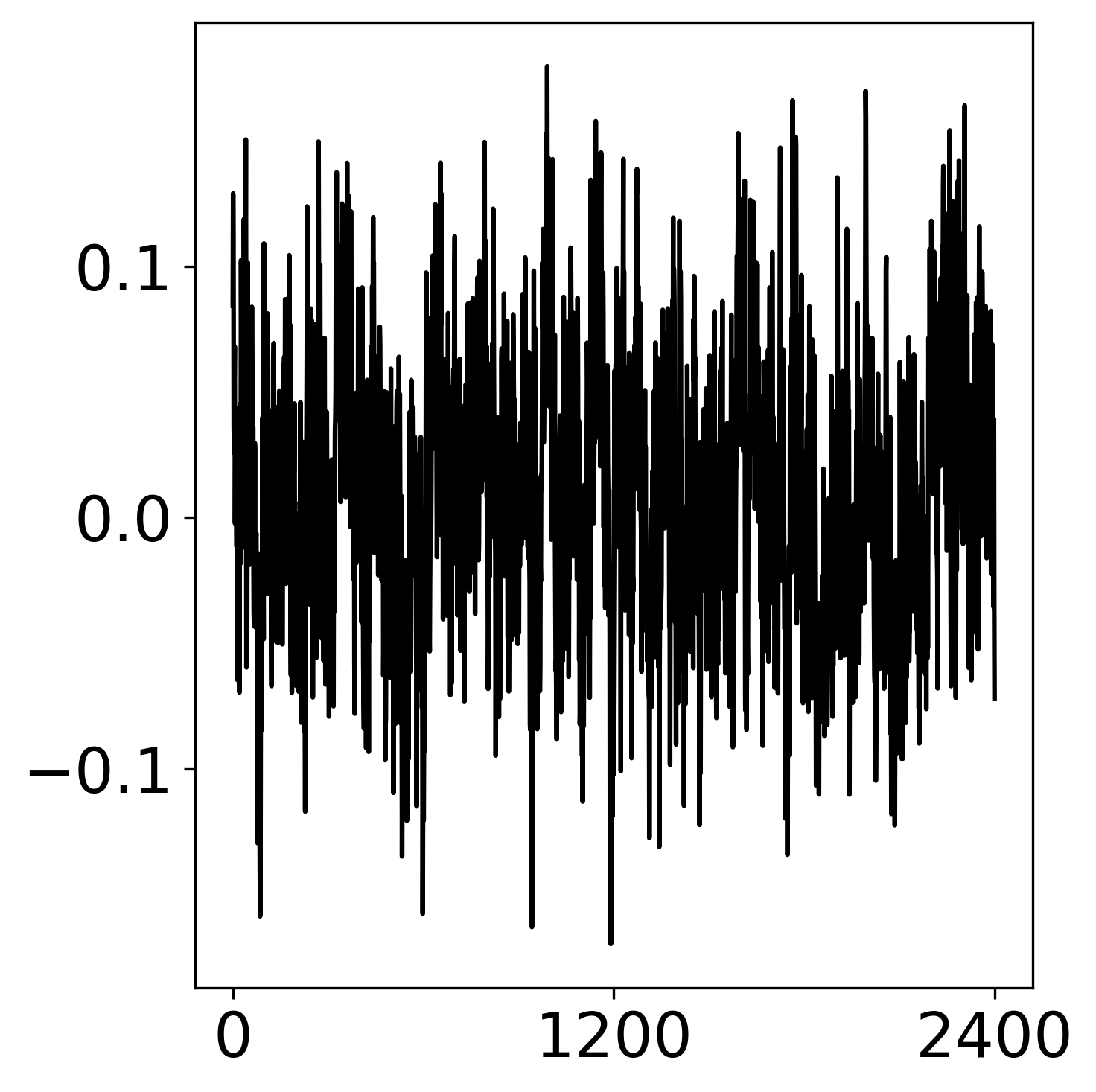}
      \put(44, -4){\scriptsize Iteration}
      \end{overpic}
      \caption{For $\beta_{1:1}$.}
  \end{subfigure}
  \begin{subfigure}[t]{0.24\textwidth}
      \begin{overpic}[width=\textwidth]{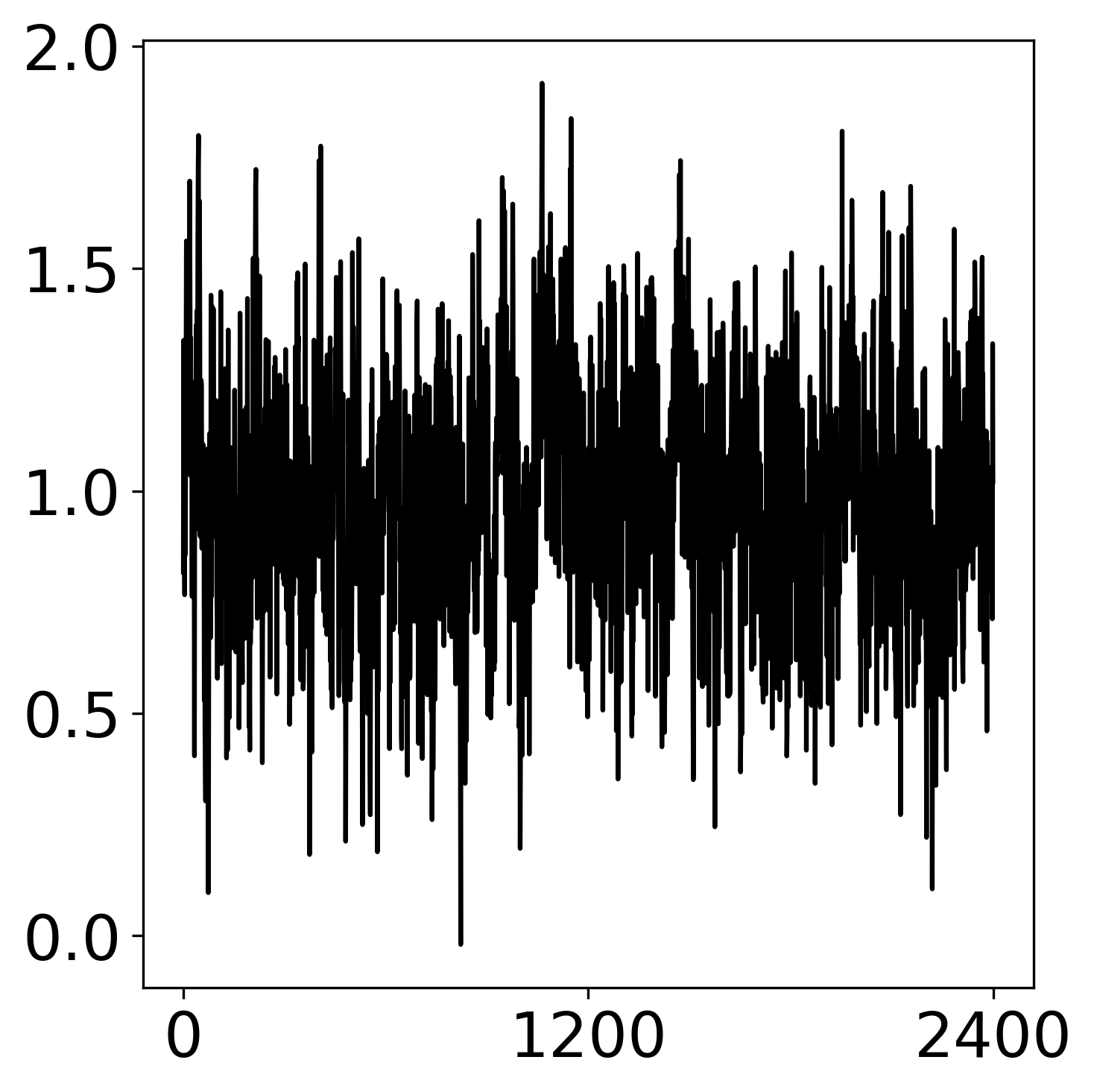}
      \put(44, -4){\scriptsize Iteration}
      \end{overpic}
      \caption{For $\beta_{2:(2,2)}$.}
  \end{subfigure}
  \begin{subfigure}[t]{0.24\textwidth}
      \begin{overpic}[width=\textwidth]{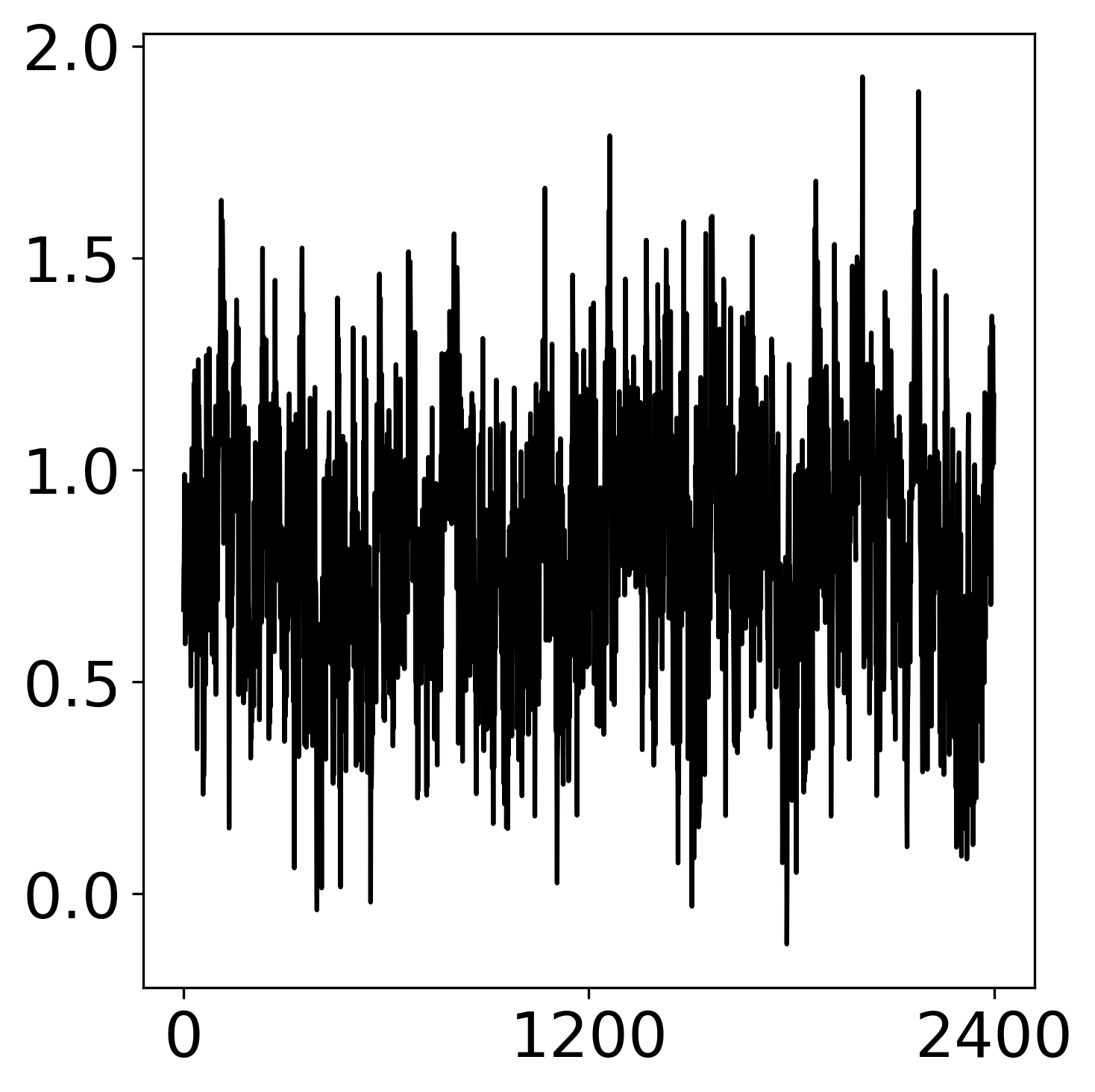}
      \put(44, -4){\scriptsize Iteration}
      \end{overpic}
      \caption{For $\beta_{2:(7,5)}$.}
  \end{subfigure}
  \caption{Trace plots of randomly selected parameters for the duck matching model.}
  \label{plot:duck_trace}
  \end{figure}

We run $50000$ MCMC iterations, with the first $2000$ treated as burn-ins and the rest thinned at every $20$th iteration. 
Figure~\ref{plot:duck_trace} and \ref{plot:duck_ACF} show the trace plots for four randomly selected parameters and the ACF plots, which show excellent mixing.
Table~\ref{tb:est_duck} presents the posterior means and 95\% credible intervals for the coefficient estimates
of $\beta_{1:1}$ and $\beta_{1:2}$, corresponding to the fixed effects of the male and female duck weights, respectively. The effect of female duck weight is statistically significant, while male duck weight is not.
  
   \begin{figure}[H]
   \centering
   \begin{subfigure}[t]{0.45\textwidth}
       \begin{overpic}[width=\textwidth]{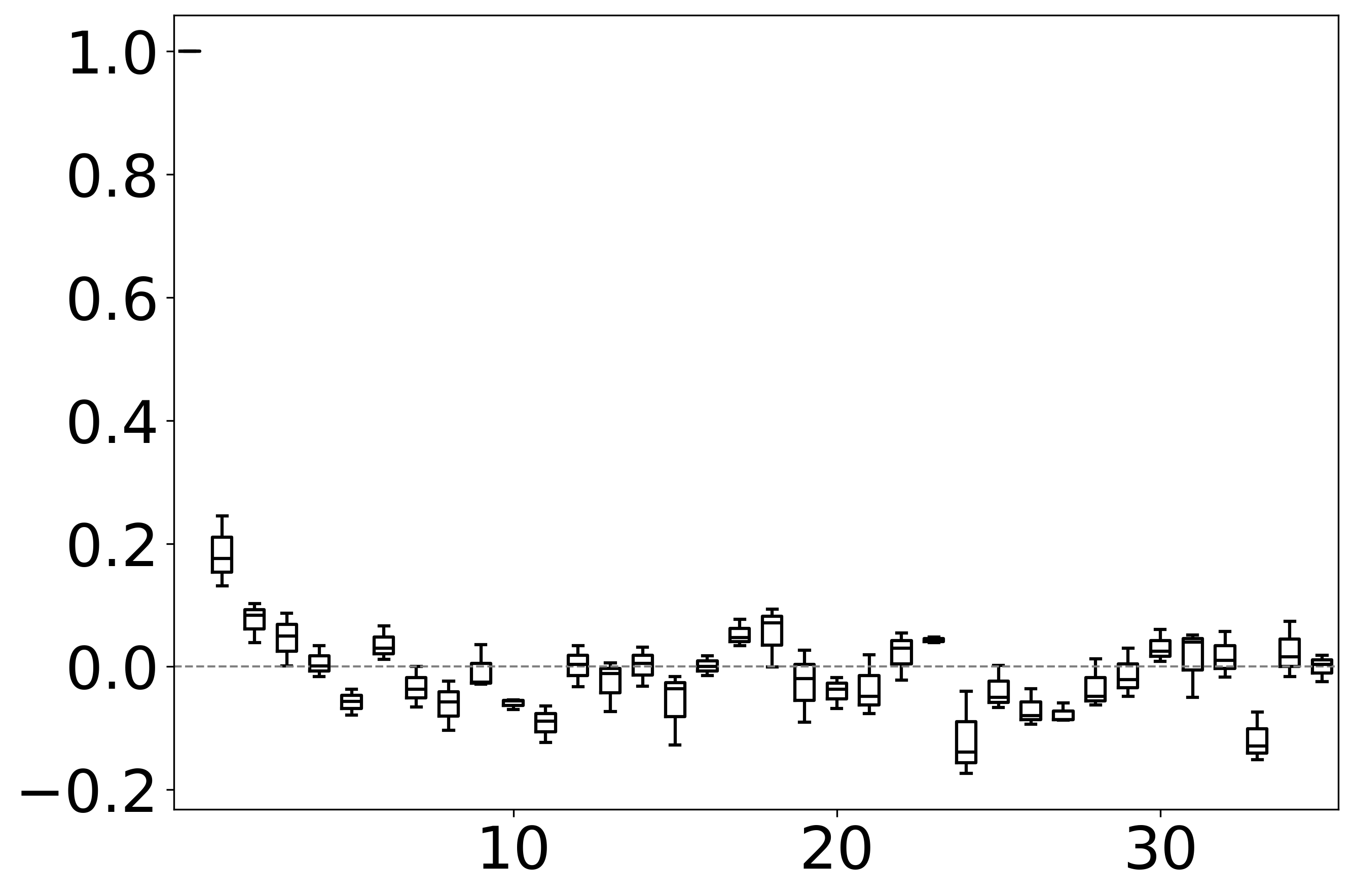}
       \put(50, -2){\scriptsize Lag}
       \end{overpic}
       \caption{ACF for $(\beta_0,\beta_{1:1}, \beta_{1:2})$.}
   \end{subfigure}
   \begin{subfigure}[t]{0.45\textwidth}
       \begin{overpic}[width=\textwidth]{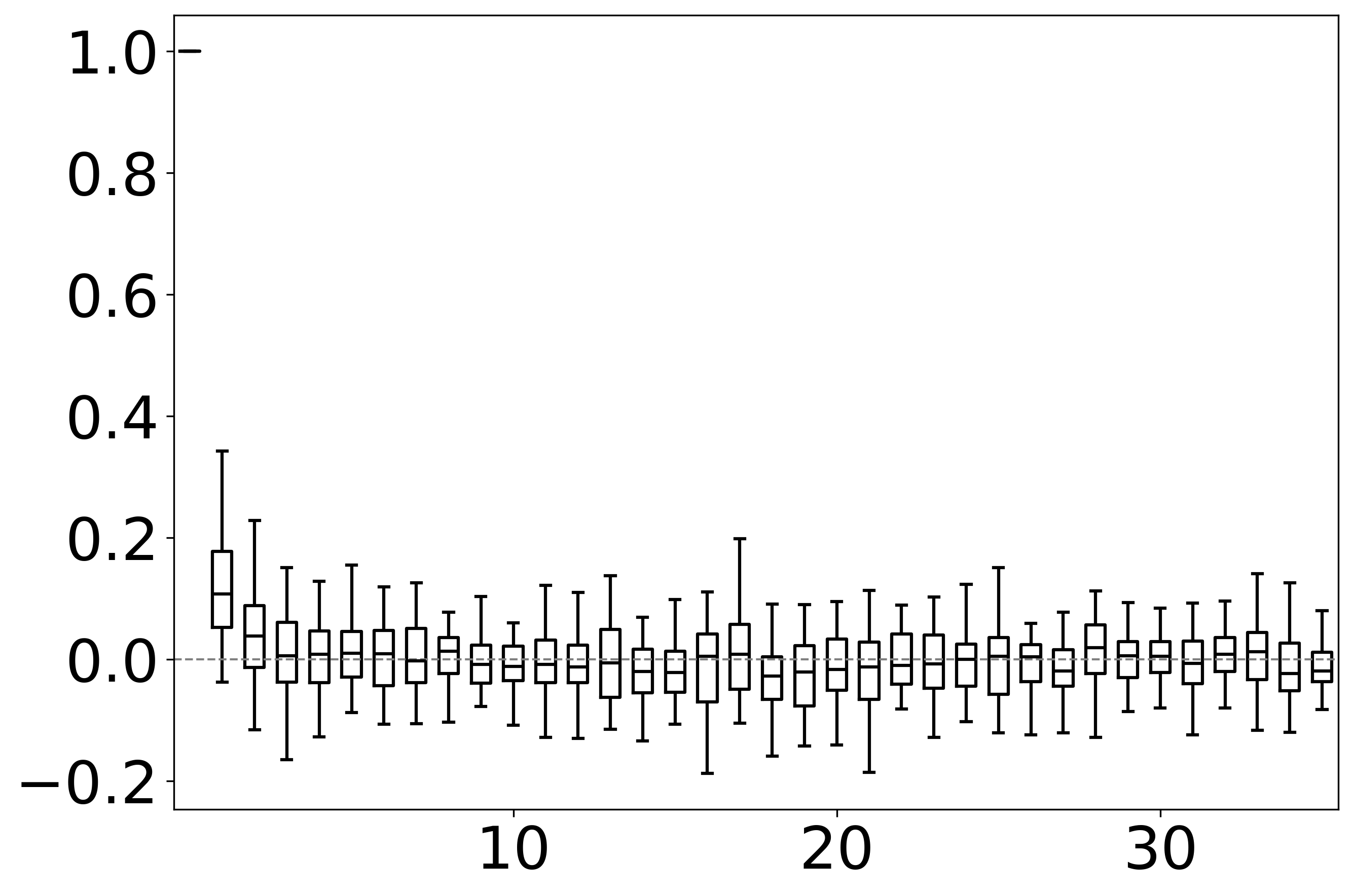}
       \put(50, -2){\scriptsize Lag}
       \end{overpic}
       \caption{ACF for all $\beta_{2:(i,j)}$.}
   \end{subfigure}
   \caption{ACF plots for the ducking matching model. Each box displays the ACF of all parameter components.}
   \label{plot:duck_ACF}
   \end{figure}

\begin{table}[H]
  \footnotesize
\centering
\begin{tabular}{lcc}
\hline
Parameter     & Estimate & Credible Interval \\ \hline
Intercept $\beta_0$     & -1.786   & (-2.566, -1.006)  \\
Male weight $\beta_{1:1}$ & 0.010    & (-0.090, 0.116)   \\
Female weight $\beta_{1:2}$ & 0.187    & (0.012, 0.362)    \\ \hline
\end{tabular}
\caption{Coefficient estimates and 95\% credible intervals for $\beta_0$ and $\beta_1$.}
\label{tb:est_duck}
\end{table}

We now show how the matching probability changes over time for each species.
We plot the probability curves separately for male and female ducks, while fixing the weight to be the medium of the males and females, respectively. The probabilities are computed by Monte Carlo integration, which involves sampling $1000$ $\zeta_t$'s given $\beta$ at each time point $t$ and computing the associated $\tilde y_{t} = T(\zeta_t)$ using linear program. We evaluate the probability of $\sum_{e: e = (i, j^*) \text{ or } (j^*, i)} \tilde y_{t,e} = 1$ ($j^*$ is the index of the medium-weight duck), corresponding to the event that the duck $j^*$ forms a match with another one. We calculate these probabilities for each Markov chain sample of $\beta$, allowing us to obtain point-wise 95\% credible bands. We plot the curves in Figure \ref{plot:matching_probs_error_band}. In addition, we overlay the mean curves between the two sexes and show comparison in the Supplementary Materials S3.

\begin{figure}[H]
  \centering
  \begin{subfigure}[t]{0.93\textwidth}
      \begin{overpic}[width=\textwidth]{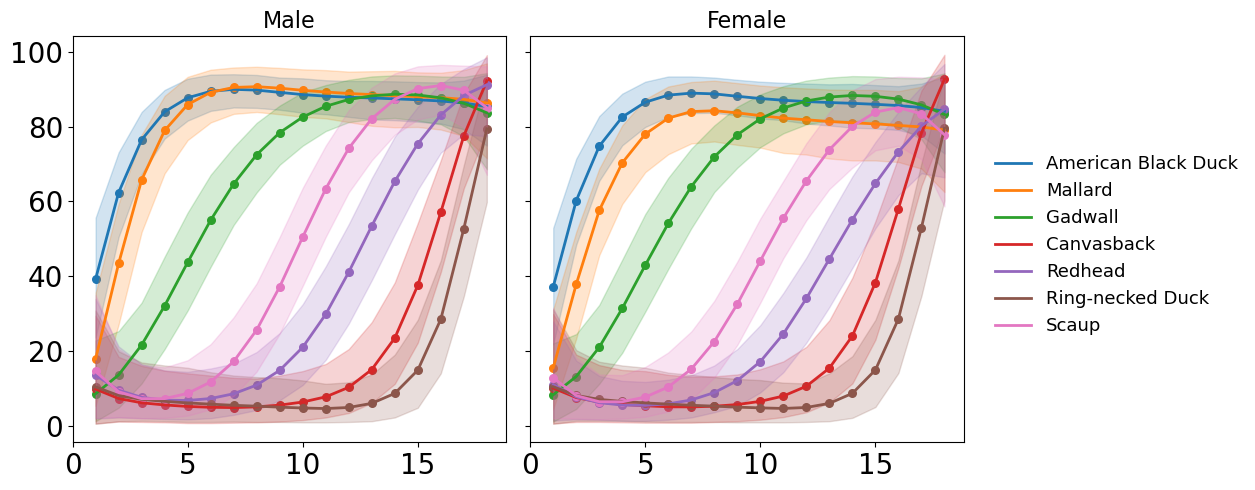}
      \put(40, -2){\scriptsize Week}
      \put(-2,8){\rotatebox{90}{\scriptsize Matching Probability (\%)}}
      \end{overpic}
  \end{subfigure}
  \caption{Matching probabilities over time with pointwise 95\% credible bands.}
  \label{plot:matching_probs_error_band}
  \end{figure}

  We can see a clear difference between dabbling ducks (American Black Duck, Mallard, Gadwall) and diving ducks (Scaup, Redhead, Canvasback, Ring-necked Duck). This is consistent with the biological knowledge that dabbling ducks tend to form pairs much earlier than diving ducks. On the other hand, we can see that Gadwall form matchings slightly later than American Black Duck and Mallard in the year.

 Next, given that there are more females than males for some species in this dataset, we assess the effects of competition on the matching probabilities. To be exact, we compute the conditional probability given that there is at least one opposite-sex duck of the same species that is not matched to any other duck, and view it as a surrogate measure of matching probability if there were no competition. We provide the result in Figure~\ref{plot:matching_probs_female}. For Mallard, the effects of competition start to show as early as in the fourth week, whereas the effects show for Redhead and Scaup later in the season.

\begin{figure}[H]
  \centering
  \begin{subfigure}[t]{0.8\textwidth}
      \begin{overpic}[width=\textwidth]{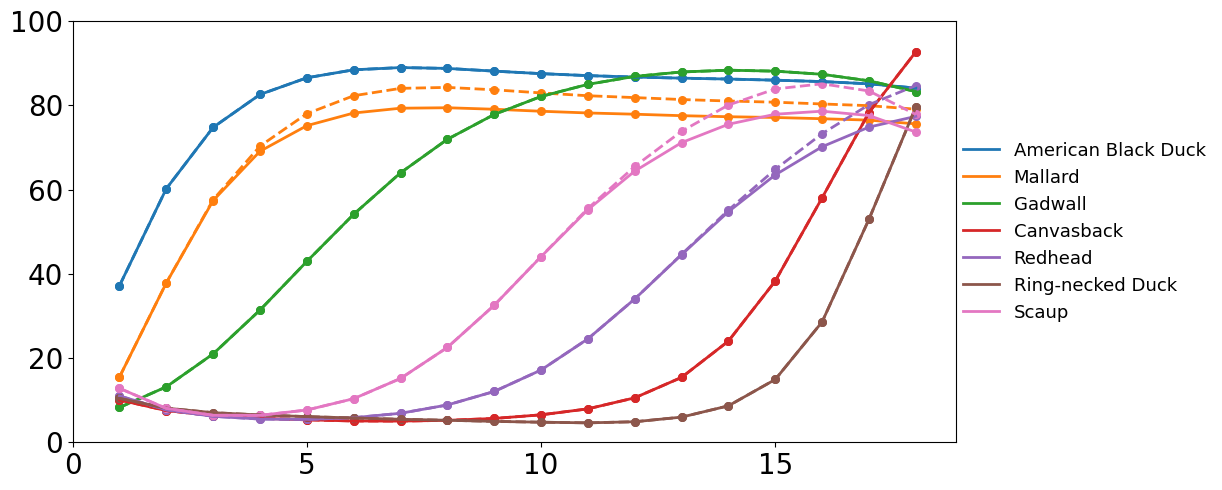}
      \put(42, -1){\scriptsize Week}
      \put(-1,10){\rotatebox{90}{\scriptsize Matching Probability (\%)}}
      \end{overpic}
  \end{subfigure}
  \caption{Mean matching probabilities over time for female ducks. Solid lines correspond to having competition, and dashed lines correspond to no competition.}
  \label{plot:matching_probs_female}
  \end{figure}

\vspace*{-1cm}

\section{Discussion}
In this article, we propose a likelihood-based approach for handling combinatorial response data. We take advantage of the properties of integer linear program, which lead to a data augmentation strategy that enjoys both tractable theoretic analysis and efficient computation. There are several interesting directions worth further pursuing. First, for consistency theory, we consider the case when the ground truth is covered by the family of our specified models. On the other hand, it remains to be seen if one could recover the fitted value functional under some level of model misspecification. Second, we could extend the strategy of augmenting observed data with optimization decision variables beyond linear program. For example, there is a class of optimization approaches based on semi-definite program, which are applicable to inference tasks such as covariance modeling and metric learning.

\noindent \textbf{Disclosure on use of AI tools:} During the preparation of this manuscript, we used ChatGPT 5.2 for code assistance. All outputs were reviewed, verified, and, where necessary, corrected by the authors, who take full responsibility for the content.

\spacingset{1}

 \section*{Supplementary Materials}

\section{Proof of Theorems}

\subsection{Proof for Section 2}

\begin{proof}[Proof of Theorem~\ref{thm:unique_solution_as}]
    We first prove the claim that if the solution to continuous linear program is not unique, then there exists another vertex that is also optimal.
    
    Given any $\zeta_i\in\mathbb R^d$ fixed. Suppose that $z_1\not=z_2\in \mathcal P$ are both optimal with $\zeta_i^\top z_1=\zeta_i^\top z_2=v$, $z_1$ is a vertex of $\mathcal P$, and $z_2$ is not a vertex of $\mathcal P$. Take any vertex $z_3\in\mathcal P$ different than $z_1$. There exist $z_4\in\mathcal P$ and $\lambda\in(0,1)$ such that $z_2=\lambda z_3+(1-\lambda)z_4$. By the optimality of $z_2$,
    \(
    v=\zeta^\top _iz_2=\lambda \zeta^\top _iz_3+(1-\lambda)\zeta^\top _iz_4\le \lambda v+(1-\lambda)v=v,
    \)
    with the equality holds if and only if $\zeta^\top _iz_3=\zeta^\top _iz_4=v$. Therefore, $z_3$ is another vertex of $\mathcal P$ that is also optimal.
    Noting that $z_1-z_3\in\mathbb Z^d$ and $\zeta_i^\top(z_1-z_3)=0$, we have
    \(
    &P(\text{Solution to \eqref{eq:clp} is NOT unique})\\
    \le& P(\text{There exist distinct vertices }z_1,z_3\in\mathcal P\text{ that are both solutions to \eqref{eq:clp}})\\
    \le& \sum_{z\in\mathbb Z^d}P(\zeta_i^\top z=0)=0.
    \)
    \end{proof}
  
    \begin{proof}[Proof of Theorem~\ref{thm:z_and_mu}]
      Rewriting the both sides of strong duality $\eta^\top \tilde z=u^\top b+1^\top (\eta-A^\top u)_+$, we have the sufficient and necessary condition:
      \(
        \sum_j \eta_j 1(\tilde z_j=1)= u^\top b+ \sum_j 1(\tilde z_j=1)(\eta_j-A_{.j}^\top u),
      \)
      rearranging terms yields $u^\top b= \sum_j 1(\tilde  z_j=1)(A_{.j}^\top u)=u^\top A\tilde z$, the first condition.
      
      For sufficiency, note that the value of $\eta_i$ does not impact the strong duality equality, as long as $\eta_j \ge A_{\cdot j}^\top u \text{ for } j: \tilde  z_j=1,$ and $\eta_j \le A_j^\top u \text{ for } j:\tilde  z_j=0$, hence $\tilde z$ is optimal. For necessity, when $\tilde z$ is optimal, we know we cannot have $\eta_j<A_{\cdot j}^\top u$ for $\tilde z_j=1$ as $\hat w_j= A_{.j}^\top \hat u-\eta_j>0$ would violate complementary slackness; similarly, we cannot have $\eta_j>A_{\cdot j}^\top u$ for $\tilde z_j=0$.
    \end{proof}

\subsection{Notations}

We consider a setting in which the response variables $y_i$ take values in the set of vertices, denoted by $\mathcal E$, of an integral polytope $\mathcal P \subset \mathbb R^d$, where $\mathcal P = \{ z \in \mathbb R^d : Az \le b \}$.

For each vertex $e \in \mathcal E$, the associated \emph{normal cone} is
\(
\mathcal N_0(e)
:= \left\{ \zeta \in \mathbb R^d : \zeta^\top e \ge \zeta^\top e' \;\; \text{for all } e' \in \mathcal E \right\}=
\bigcap_{e' \in \mathcal E}
\left\{ \zeta \in \mathbb R^d : \zeta^\top (e-e') \ge 0 \right\}.
\)
The normal cone $\mathcal N_0(e)$ consists of all directions whose inner product with every displacement $e-e'$, $e'\in\mathcal E$, is nonnegative. 
Geometrically, $\mathcal N_0(e)$ is a closed convex polyhedral cone formed by the intersection of finitely many halfspaces and represents the set of directions for which $e$ maximizes the linear functional $\zeta^\top z$ over the polytope $\mathcal P$. The collection $\{\mathcal N_0(e)\}_{e \in \mathcal E}$ forms the \emph{normal fan} of the polytope $\mathcal P$, which induces a polyhedral partition of $\mathbb R^d$.

\begin{remark}[Geometry determines the response distribution]
\label{rmk:geometry}
For any $\mu \in \mathbb R^d$ and $\zeta \sim N(\mu, I_d)$, the distribution of $Y = T(\zeta)$ is completely determined by the polytope $\mathcal P$ and the mean $\mu$. Specifically, for each $e \in \mathcal E$,
\(
P(Y = e \mid \mu) = \int_{\mathcal N_0(e)} \phi_d(\zeta; \mu)\, d\zeta,
\)
where $\phi_d(\zeta; \mu) = (2\pi)^{-d/2} \exp(-\|\zeta - \mu\|_2^2/2)$ denotes the $d$-dimensional Gaussian density.
\end{remark}

Aside from the interior of the normal cones is a Lebesgue-null set, which is comprised of sets where at least two vertex values give the same maximum of $\zeta^Te$. To establish the posterior consistency, we would examine every possible direction along which the parameter is deviated from the ground truth, so those ``tie sets" need to be studied. 

Since in the proof we always focus on vertice, as opposed to the entire face of $\mathcal P$, we use the following simplified face selection mapping:
\(
\mathcal F(v):=\{e^\star\in\mathcal E:v^\top e^\star=\max_{e\in\mathcal E}v^\top e\},
\)
which is equivalent to the intersection between the face selection mapping and the set of vertices defined in the main text.

Here are some simple facts about face selection mapping and face sets. 
Since maximum is always achieved by at least one of $e\in\mathcal E$, a face set must be nonempty. 
Therefore, $\mathcal F(v)$ is a nonempty set of vertices in $\mathcal E$ that satisfies:
\begin{enumerate}
  \item[(a)] $v^\top(e-e')=0$ for all $e,e'\in \mathcal F(v)$;
  \item[(b)] $v^\top(e-e'')>0$ for all $e\in \mathcal F(v), e''\in \mathcal E\setminus\mathcal F(v)$.
\end{enumerate}
When $v\in\mathbb R^d\setminus C$, $T(v)$ is unique, and we have $\mathcal F(v)=\{T(v)\}$, or equivalently, $T(v)\in\mathcal F(v)$. Hence, face selection mapping $\mathcal F$ can be consider as the extension of argmax mapping $T$ onto $\mathbb R^d$.

For a vector $v\in\mathbb R^k(k\in\mathbb N^+)$, $\|v\|_2$ denotes its $L_2$ norm. For $\beta \in \mathbb R^{d\times p}$, we equip the parameter space with the Frobenius norm $\|\beta\|_F = \sqrt{\mathrm{tr}(\beta^\top \beta)}$. For some $\theta_0$ in a normed parameter space with norm $\|\cdot\|$ and $\varepsilon > 0$, we write the open $\varepsilon$-ball centered at $\theta_0$ as 
\(
B_\varepsilon(\theta_0)
:=
\bigl\{ \theta : \|\theta - \theta_0\| < \varepsilon \bigr\}
\)
and the unit sphere as
\(
\mathbb S^{k-1}
:=
\bigl\{ u \in \mathbb R^{k} : \|u\| = 1 \bigr\}.
\)

\subsection{Proof for Section~4.2.1}
\begin{proof}[Proof of Theorem~\ref{thm:law_consistency}]
The support of $P_{\mu}$ is invariant of $\mu$ because only $T$ determines the support. Suppose $\mathcal E_0\subset\mathcal E$ is the support of $P_{\mu}$ for any $\mu\in\mathbb R^d$.
The intercept-only model \eqref{eq:intercept_only} produces i.i.d.\ observations $Y_1, \ldots, Y_n$ from a discrete distribution on the finite set $\mathcal{E}_0$. The map $\mu \mapsto P(Y = e \mid \mu)$ is continuous for each $e \in \mathcal{E}_0$ (as an integral of a Gaussian density over a fixed polyhedral cone). Since $\mathcal{E}_0$ is finite, continuity in total variation follows.

By Condition~\ref{ass:prior}, the prior places positive mass on every neighborhood of $\mu_0$. Since $\mu \mapsto P(Y \mid \mu)$ is continuous, for any $\varepsilon > 0$ there exists $\delta > 0$ such that $\|\mu - \mu_0\|_2 < \delta$ implies $\|P(Y \mid \mu) - P(Y \mid \mu_0)\|_1 < \varepsilon$. Thus the prior places positive mass in every KL-neighborhood of $P(Y \mid \mu_0)$.

Applying Schwartz's theorem \citep{schwartz1965onbayes} for i.i.d.\ observations on a finite space, the posterior is consistent for the law $P(Y \mid \mu_0)$. The $L_1$-convergence of the posterior predictive $\int P(Y \mid \mu) \Pi(\mu \mid Y^n) d\mu$ to $P(Y \mid \mu_0)$ follows from dominated convergence, since all probability vectors on $\mathcal{E}_0$ are bounded.
\end{proof}

\subsection{Additional results and proofs for Section 4.2.2}

Proposition~\ref{prop:uniform_positivity} plays a key role in constructing
exponentially consistent tests for both models
\eqref{eq:intercept_only} and \eqref{eq:regression}.
Its full strength is mainly needed for the regression setting.
For the intercept-only model, a substantially simpler consequence (Corollary~\ref{cor:uniform_positivity}) suffices.

\begin{proposition}[Uniform Positivity of Probability Gains]
\label{prop:uniform_positivity}
Take any proper face set $F \subseteq \mathcal{E}$ and any $\delta>0$. Let $V(F,\delta) \subset \mathbb{R}^d$ be a set of vectors $v_0\in\mathbb R^d$ satisfying:
\begin{itemize}
\item[\textbf{(F1)}] $v_0^T(e-e') = 0$ for all $e,e' \in F$;
\item[\textbf{(F2)}] $v_0^T(e-e'')>\delta$ for all $e \in F$, $e'' \in \mathcal{E} \setminus F$.
\end{itemize}
Then $t \mapsto P(Y\in F|\mu+tv_0)$ is nondecreasing on $[0, \infty)$ for all $v_0\in V(F,\delta)$ and all $\mu\in\mathbb R^d$.
Moreover, for any $t_0 > 0$ and $L_0 > 0$, there exist $r^*=r^*(F,\delta,t_0) > 0$ and $c^*=c^*(F,\delta,t_0,L_0) > 0$ such that for all $t\ge t_0$,
\(
 \inf_{v_0\in V(F,\delta)}\inf_{\|\mu\|_2 \leq L_0} \inf_{\|v - v_0\|_2 < r^*} \big(P(Y\in F|\mu+tv) - P(Y\in F|\mu)\big) \geq c^*.
\)
\end{proposition}

\begin{corollary}
\label{cor:uniform_positivity}
For any $v_0 \in \mathbb{R}^d \setminus \{0\}$ and any $\mu_0\in\mathbb R^d$,
the function
\(
t \mapsto P(Y\in \mathcal F(v_0)\mid \mu_0 + t v_0)
\)
is nondecreasing on $[0,\infty)$.
Moreover, if $\mathcal F(v_0)\subsetneq \mathcal E$, then for any $t_0>0$,
there exists constant $r^*=r^*(v_0, t_0)>0$, $c^*=c^*(v_0,t_0)>0$ such that for all $t\ge t_0$,
\(
\inf_{v:\|v-v_0\|_2\le r^*}[P(Y\in \mathcal F(v_0)\mid \mu_0 + t v_0)
-
P(Y\in \mathcal F(v_0)\mid \mu_0)]
\ge c^*.
\)
\end{corollary}

\begin{proof}
Let $F=\mathcal F(v_0)$. By definition of the face selection mapping,
$v_0$ satisfies conditions (F1) and (F2) for some $\delta>0$.
The monotonicity follows directly from Proposition~\ref{prop:uniform_positivity}.
If $F\subsetneq\mathcal E$, the uniform positivity statement of
Proposition~\ref{prop:uniform_positivity} with $\mu=\mu_0$ yields the claimed
strict lower bound for all $t\ge t_0$.
\end{proof}

\begin{remark}
Corollary~\ref{cor:uniform_positivity} shows that for any nonzero direction
$v_0$, there exists a face set $\mathcal F(v_0)$ whose probability mass
strictly increases as the parameter moves in the direction $v_0$.
Consequently, in the identifiable intercept-only model, any deviation
$\mu-\mu_0$ induces a detectable and uniformly positive change in the
probability of $Y$ falling in the set $\mathcal F(\mu-\mu_0)$.
\end{remark}

\begin{proof}[Proof of Proposition~\ref{prop:uniform_positivity}]
For any $v \in \mathbb{R}^d$ and $t \geq 0$, let
\(
S_F(t; v) := \bigcup_{e \in F} \bigcap_{e' \in \mathcal{E} \setminus F} \{z \in \mathbb{R}^d : (z + tv)^T(e-e'')\geq 0\}.
\)
For $Z_\mu \sim N(\mu, I_d)$, let $p_{v,\mu}(t) := P(Z_\mu \in S_F(t; v))$. We first prove the claim that
\[
\label{eq:claim_P_equal_p}
P(Y\in F|\mu+tv)=p_{v,\mu}(t)
\]

By definition:
\(
P(Y\in F|\mu+tv)
=P\big(Z_\mu \in \bigcup_{e \in F} \bigcap_{e'' \not = e} \{z \in \mathbb{R}^d : (z + tv)^T(e-e'')\geq 0\}\big).
\)
It suffices to prove the following set equivalence:
\(
\bigcup_{e \in F} \bigcap_{e'' \in \mathcal{E} \setminus F} \{z \in \mathbb{R}^d : (z + tv)^T(e-e'')\geq 0\}=\bigcup_{e \in F} \bigcap_{e'' \not= e} \{z \in \mathbb{R}^d : (z + tv)^T(e-e'')\geq 0\}.
\)
Clearly, RHS is a subset of LHS, so we only need to prove LHS is also a subset of RHS. For $z$ belongs to LHS, there exists $e\in F$, $(z+tv)^Te\ge \max_{e''\in\mathcal E\setminus F}(z+tv)^Te''$. In addition, there must exist $e^\star\in F$ such that $(z+tv)^Te^\star= \max_{e\in F}(z+tv)^Te$. Hence, $(z+tv)^T(e^\star-e)\ge0$ for all $e\not=e^\star$, thus $z$ belonging to RHS, finishing the proof of the claim.

Define the subspace
\(
\mathcal S_F:=\text{Span}\{e-e'':\,e\in F,\ e''\in\mathcal E\setminus F\}\subseteq\mathbb R^d
\)
and let $\Pi_F$ be the orthogonal projection onto $\mathcal S_F$.
Since each constraint in $S_F(t;v)$ has normal vector $e-e''\in\mathcal S_F$, we have for every $t\ge0$ and $v\in\mathbb R^d$,
\[
\label{eq:proj_invariance}
S_F(t;v)=S_F(t;\Pi_F v),\qquad\text{and hence}\qquad p_{v,\mu}(t)=p_{\Pi_F v,\mu}(t).
\]

Fix $v_0\in V(F,\delta)$ and $\mu\in\mathbb R^d$. For $e\in F$ and $e''\in\mathcal E\setminus F$, define
\(
H_{e,e''}(t;v_0):=\{z\in\mathbb R^d:(z+tv_0)^\top(e-e'')\ge 0\}.
\)
By (F2), $v_0^\top(e-e'')>0$, so $t_1\le t_2$ implies $H_{e,e''}(t_1;v_0)\subseteq H_{e,e''}(t_2;v_0)$.
Taking intersections over $e''\in\mathcal E\setminus F$ and unions over $e\in F$ yields $S_F(t_1;v_0)\subseteq S_F(t_2;v_0)$.
Therefore $t\mapsto p_{v_0,\mu}(t)$ is nondecreasing on $[0,\infty)$, and so is $t\mapsto P(Y\in F|\mu+tv_0)$.

Next, we prove the uniform positive gap between $P(Y\in F|\mu+tv_0)$ and $P(Y\in F|\mu)$ for all $t\ge t_0$.

Let $M:=\max_{e,e'\in\mathcal E}\|e-e'\|_2<\infty$ and set
\(
r^*:=\frac{\delta}{2M},\qquad \rho:=\frac{t_0\delta}{8M}.
\)
Fix $v_0\in V(F,\delta)$, $\|\mu\|_2\le L_0$, and any $v$ with $\|v-v_0\|_2<r^*$.
Then for all $e\in F$ and $e''\in\mathcal E\setminus F$,
\[
\label{eq:margin_v}
v^\top(e-e'') \ge v_0^\top(e-e'')-\|v-v_0\|_2\|e-e''\|_2
> \delta-r^*M=\delta/2.
\]
Define the center point $z^*:=-\frac{t_0}{2}\,\Pi_F v\in\mathcal S_F$.
Let $z$ satisfy $\|z-z^*\|_2\le \rho$. Using \eqref{eq:proj_invariance} and \eqref{eq:margin_v}, for any $e\in F$ and $e''\in\mathcal E\setminus F$,
\(
(z+t_0 v)^\top(e-e'')&=(z+t_0\Pi_F v)^\top(e-e'')\\
&\ge (z^*+t_0\Pi_F v)^\top(e-e'')-\rho\|e-e''\|_2\\
&= \frac{t_0}{2}(\Pi_F v)^\top(e-e'')-\rho M\\
&\ge \frac{t_0\delta}{4}-\frac{t_0\delta}{8}=\frac{t_0\delta}{8}>0,
\)
so $z\in S_F(t_0;v)$. Similarly,
\(
z^\top(e-e'') \le (z^*)^\top(e-e'')+\rho\|e-e''\|_2
&= -\frac{t_0}{2}(\Pi_F v)^\top(e-e'')+\rho M\\
&\le -\frac{t_0\delta}{4}+\frac{t_0\delta}{8}=-\frac{t_0\delta}{8}<0,
\)
so $z\notin S_F(0;v)$. Hence
\[
\label{eq:ball_in_diff}
B(z^*,\rho)\subset S_F(t_0;v)\setminus S_F(0;v),
\qquad\Rightarrow\qquad
p_{v,\mu}(t_0)-p_{v,\mu}(0)\ge P(Z_\mu\in B(z^*,\rho)).
\]

It remains to lower bound the Gaussian ball probability uniformly.
Define the margin functional on $\mathcal S_F$,
\(
m_F(w):=\min_{e\in F,\ e''\in\mathcal E\setminus F} w^\top(e-e'').
\)
By \eqref{eq:proj_invariance}, $p_{v,\mu}(t)=p_{\Pi_F v,\mu}(t)$, and by \eqref{eq:margin_v} we have $m_F(\Pi_F v)\ge \delta/2>0$.
For any $w\in\mathcal S_F$ with $m_F(w)>0$, define the normalized vector
\(
\bar w := \frac{\delta}{m_F(w)}\,w\in\mathcal S_F,
\quad\text{so that}\quad m_F(\bar w)=\delta.
\)
Then $S_F(t;w)=S_F\!\big(t\,m_F(w)/\delta;\,\bar w\big)$, hence for all $\mu$,
\[
\label{eq:time_scaling}
p_{w,\mu}(t)=p_{\bar w,\mu}\!\Big(t\,\frac{m_F(w)}{\delta}\Big).
\]
Since $m_F(w)/\delta\ge 1/2$ and $t\mapsto p_{\bar w,\mu}(t)$ is nondecreasing, we have in particular
\(
p_{w,\mu}(t_0)-p_{w,\mu}(0)\ge p_{\bar w,\mu}(t_0/2)-p_{\bar w,\mu}(0).
\)
The set of normalized directions
\(
\mathcal K_{F,\delta}:=\Big\{w\in\mathcal S_F:\ w^\top(e-e')=0\ \forall e,e'\in F,\ \ m_F(w)=\delta\Big\}
\)
is compact (it is a bounded slice of the pointed normal cone associated with $F$ inside $\mathcal S_F$).
Therefore the map $(\mu,w)\mapsto p_{w,\mu}(t_0/2)-p_{w,\mu}(0)$ is continuous and strictly positive on
$\{\|\mu\|_2\le L_0\}\times\mathcal K_{F,\delta}$, hence attains a positive minimum:
\(
c^*:=\inf_{\|\mu\|_2\le L_0}\ \inf_{w\in\mathcal K_{F,\delta}}\ \big(p_{w,\mu}(t_0/2)-p_{w,\mu}(0)\big)\ >\ 0.
\)
Combining with \eqref{eq:claim_P_equal_p}, \eqref{eq:proj_invariance} and \eqref{eq:time_scaling} yields the claimed uniform lower bound at time $t_0$
(up to replacing $t_0$ by $t_0/2$ in the construction), completing the proof.

\end{proof}

\begin{proof}[Proof of Theorem~\ref{thm:mu_identifiability}]
We first characterize when two parameters induce the same distribution.

Let $v=\mu-\mu'$. Suppose $v\in \text{Span}\{e-e':e,e'\in\mathcal E\}^{\perp}$. Then for all $e,e'\in\mathcal E$, $v^\top(e-e')=0$.
It follows that for any $e\in\mathcal E$,
\(
P_{\mu'+v}(e)
&=\int \phi(\zeta;\mu'+v,I_d)
    1\{\zeta^\top(e-e')\ge0,\ \forall e'\in\mathcal E\}\,d\zeta \\
&=\int \phi(\zeta;\mu',I_d)
    1\{(\zeta-v)^\top(e-e')\ge0,\ \forall e'\in\mathcal E\}\,d\zeta \\
&=\int \phi(\zeta;\mu',I_d)
    1\{\zeta^\top(e-e')\ge0,\ \forall e'\in\mathcal E\}\,d\zeta \\
&=P_{\mu'}(e),
\)
where the second equality follows from a change of variables and the third from
$v^\top(e-e')=0$ for all $e,e'\in\mathcal E$. Hence $P_{\mu'+v}=P_{\mu'}$.

Conversely, suppose $v\notin \text{Span}\{e-e':e,e'\in\mathcal E\}^{\perp}$. Then there exist $e,e'\in\mathcal E$ such that
$v^\top(e-e')\neq0$. Without loss of generality, assume $v^\top(e-e')>0$. This implies that the face set
$\mathcal F(v)=\arg\max_{e\in\mathcal E} v^\top e$ is a proper subset of $\mathcal E$.
By Corollary~\ref{cor:uniform_positivity}, there exists $c^*>0$ such that
\[
P_{\mu'+v}\big(\mathcal F(v)\big)
-
P_{\mu'}\big(\mathcal F(v)\big)
\ge c^*,
\]
which implies $P_{\mu'+v}\neq P_{\mu'}$. This establishes the stated equivalence.

Condition~\ref{ass:polytope} holds if and only if $\text{Span}\{e-e':e,e'\in\mathcal E\}=\mathbb R^d$.
In this case,
$\text{Span}\{e-e':e,e'\in\mathcal E\}^{\perp}=\{0\}$, and hence
$P_\mu=P_{\mu'}$ implies $\mu=\mu'$. Therefore, the model is identifiable.
If Condition~\ref{ass:polytope} fails, then
$\text{Span}\{e-e':e,e'\in\mathcal E\}^{\perp}\neq\{0\}$, and the model is identifiable only up to shifts in directions belonging to this orthogonal complement.
\end{proof}

Next we prove the posterior consistency theorem for $\mu$. Here is the proof sketch:
For any $\mu\neq\mu_0$, let $v_0=\mu-\mu_0$. By Condition~\ref{ass:polytope}, there exists a proper face set
$\mathcal F(v_0)\subsetneq\mathcal E$ and positive constants $r(\mu)$ and $c(\mu)$ such that $\inf_{v:\|v-v_0\|_2\le r(\mu)}[P_{\mu}\big(\mathcal F(v_0)\big)-P_{\mu_0}\big(\mathcal F(v_0)\big)]>c(\mu)$ for all $\mu=\mu_0+tv_0$, $t\ge 1$.
This yields a test based on the empirical frequency of
$\{Y_i\in\mathcal F(v_0)\}$ with exponentially decaying type-I and type-II
errors.
Together with the prior positivity condition in Condition~\ref{ass:prior},
the posterior consistency follows from the adapted Schwartz theorem.

\begin{proof}[Proof of Theorem~\ref{thm:consistency_mu}]
Fix $\epsilon>0$ and define the complement of an $\epsilon$-ball around $\mu_0$,
\(
B_\epsilon^c:=\{\mu\in\mathbb R^d:\|\mu-\mu_0\|_2\ge \epsilon\}.
\)
We prove that
\(
\Pi(B_\epsilon^c\mid Y^n)\to 0
\)
$P_{\mu_0}^\infty$-almost surely, which implies the stated convergence for any
open neighborhood $V_{\mu_0}$ of $\mu_0$.

Take any $\mu\in B_\epsilon^c$ and let $v_0=\mu-\mu_0\neq 0$.
Under Condition~\ref{ass:polytope}, we have
\(
\text{Span}\{e-e':e,e'\in\mathcal E\}=\mathbb R^d,
\)
hence there exist $e,e'\in\mathcal E$ with $v_0^\top(e-e')\neq 0$ and therefore
\(
\mathcal F(v_0):=\arg\max_{e\in\mathcal E} v_0^\top e
\subsetneq \mathcal E.
\)
By Corollary~\ref{cor:uniform_positivity}, applied with the direction $v_0$ and
base point $\mu_0$, there exists constants $r(\mu)>0$ and $c(\mu)>0$ such that
\(
\inf_{v:\|v-v_0\|_2\le r(\mu)}[P_{\mu}\big(\mathcal F(v_0)\big)-P_{\mu_0}\big(\mathcal F(v_0)\big)]\ge c(\mu)>0.
\)

To obtain a test that is uniform over $B_\epsilon^c$, we use compactness of the
direction space. Let
\(
\mathbb S^{d-1}:=\{u\in\mathbb R^d:\|u\|_2=1\}.
\)
Since $B_\epsilon^c$ can be parameterized by $(t,u)$ with $t\ge\epsilon$ and
$u\in\mathbb S^{d-1}$ via $\mu=\mu_0+tu$, it suffices to construct tests
uniformly over directions $u\in\mathbb S^{d-1}$ and magnitudes $t\ge\epsilon$.

By Condition~\ref{ass:polytope}, for each $u\in\mathbb S^{d-1}$ the face set
$\mathcal F(u)$ is proper. Fix $\delta>0$ and cover $\mathbb S^{d-1}$ by finitely
many balls $\{B(u_j,r_j)\}_{j=1}^J$ such that on each ball there exists a proper
face set $F_j\subsetneq\mathcal E$ with $u_j\in V(F_j,\delta)$ (as in
Proposition~\ref{prop:uniform_positivity}). Applying the uniform positivity part
of Proposition~\ref{prop:uniform_positivity} with $t_0=\epsilon$ and $L_0=\|\mu_0\|_2+1$,
there exist constants $c_\epsilon>0$ and $r_\epsilon>0$ (depending on
$\epsilon$ but not on $j$) such that for any $j\in\{1,\dots,J\}$, any
$u\in B(u_j,r_j)$, and any $t\ge \epsilon$,
\(
P_{\mu_0+tu}(Y\in F_j)-P_{\mu_0}(Y\in F_j)\ge c_\epsilon.
\)

For each $j\in\{1,\dots,J\}$ define the empirical frequency statistic
\(
\widehat p_{n,j}:=\frac1n\sum_{i=1}^n \mathbf 1\{Y_i\in F_j\}.
\)
Let
\(
p_{0,j}:=P_{\mu_0}(Y\in F_j).
\)
Define the test
\(
\phi_{n,j}:=\mathbf 1\Big\{\widehat p_{n,j}-p_{0,j}\ge \frac{c_\epsilon}{2}\Big\}.
\)
Under $P_{\mu_0}$, $\widehat p_{n,j}$ is the mean of i.i.d.\ Bernoulli random
variables with mean $p_{0,j}$, hence by Hoeffding's inequality,
\(
P_{\mu_0}(\phi_{n,j}=1)\le \exp\big(-2n(c_\epsilon/2)^2\big).
\)
Now take any $\mu=\mu_0+tu$ with $t\ge \epsilon$ and $u\in B(u_j,r_j)$.
Then $P_\mu(Y\in F_j)\ge p_{0,j}+c_\epsilon$. Again by Hoeffding's inequality,
\(
P_\mu(\phi_{n,j}=0)
=
P_\mu\Big(\widehat p_{n,j}-P_\mu(Y\in F_j)\le -\frac{c_\epsilon}{2}\Big)
\le \exp\big(-2n(c_\epsilon/2)^2\big).
\)
Therefore, for each patch $j$ the test $\phi_{n,j}$ has exponentially decaying
type-I error and exponentially decaying (uniform) type-II error over the
corresponding alternative set.

\(
\phi_n:=\max_{1\le j\le J}\phi_{n,j}.
\)
By the union bound,
\(
P_{\mu_0}(\phi_n=1)\le J\exp\big(-2n(c_\epsilon/2)^2\big),
\)
which is still exponentially small. Moreover, for any $\mu\in B_\epsilon^c$,
there exists some $j$ such that $\mu=\mu_0+tu$ with $t\ge\epsilon$ and
$u\in B(u_j,r_j)$; hence
\(
P_\mu(\phi_n=0)\le P_\mu(\phi_{n,j}=0)\le \exp\big(-2n(c_\epsilon/2)^2\big).
\)
Thus $\{\phi_n\}$ is an exponentially consistent sequence of tests for
$H_0:\mu=\mu_0$ versus $H_1:\mu\in B_\epsilon^c$.

To sum up, Condition~\ref{ass:prior} implies that $\Pi$ assigns positive mass to every
neighborhood of $\mu_0$, and in particular to the Kullback--Leibler neighborhood
required by the Schwartz theorem (\cite{schwartz1965onbayes}).
Combining this prior positivity with the exponentially consistent tests
constructed above yields
\(
\Pi(B_\epsilon^c\mid Y^n)\to 0
\)
$P_{\mu_0}^\infty$-almost surely. This step of proof is standard Schwartz Theorem application so we omit the details.
Since $\epsilon>0$ is arbitrary, the posterior is consistent at $\mu_0$.
\end{proof}

\subsection{Additional results and proofs for Section 4.2.3}

\begin{lemma}[KL support and denominator control]
\label{lem:KL_condition}
Under Conditions~\ref{ass:prior} and \ref{ass:covariates_bound}, Condition~(ii) in Theorem~\ref{thm:adapt_sch} holds.
\end{lemma}

\begin{proof}[Proof of Lemma~\ref{lem:KL_condition}]
Suppose $\mathcal E_0\subset\mathcal E$ is the support of $P_{\mu}$ for any $\mu\in\mathbb R^d$. Clearly, for any $e\in\mathcal E_0$, $P_\mu(e)$ is bounded away from both zero and one for $\mu$ in a compact set in $\mathbb R^d$.
It follows that both of the following functions, defined on $\mathbb{R}^d\times\mathbb{R}^d$, are continuously differentiable with respect to $(\mu,\mu^*)\in\mathbb{R}^d\times\mathbb{R}^d$:
\(
f_1(\mu,\mu^*)
&=\sum_{e\in\mathcal E_0}P(Y=e\mid \mu^*)\log\frac{P(Y=e\mid \mu^*)}{P(Y=e\mid \mu)},\\
f_2(\mu,\mu^*)&=\sum_{e\in\mathcal E_0}P(Y=e\mid \mu^*)\left[\log\frac{P(Y=e\mid \mu^*)}{P(Y=e\mid \mu)}\right]^2.
\)
Define
\(
\mathcal{K}_{\tau}:=\{v\in\mathbb{R}^d:\|v\|_2\le\tau\} \quad \text{for } \tau\ge0.
\)
Since $\mathcal{K}_\tau\times\mathcal K_\tau$ is compact, both $f_1$ and $f_2$ are Lipschitz continuous on $\mathcal{K}_\tau\times\mathcal K_\tau$. Therefore, there exists a constant $C_\tau>0$ such that for all $(\mu,\mu^*)\in\mathcal{K}_\tau\times\mathcal K_\tau$,
\(
|f_1(\mu,\mu^*)|\vee|f_2(\mu,\mu^*)|&= |f_1(\mu,\mu^*)-f_1(\mu^*,\mu^*)|\vee |f_2(\mu,\mu^*)-f_2(\mu^*,\mu^*)|\\
&\le C_\tau\|\mu-\mu^*\|_2.
\)

Fix $\beta_0\in\mathbb R^{d\times p}$ and $\delta_0>0$. By Condition~\ref{ass:covariates_bound}, $\sup_{i\ge 1}\|\beta x_i-\beta_0x_i\|_2\le \ell\|\beta-\beta_0\|_F$. Taking $\tau_0=\ell(1+\|\beta_0\|_F)$, we have for every $\varepsilon\in(0,1)$: if $\|\beta-\beta_0\|_F<\varepsilon$, then $f_1(\beta x_i,\beta_0x_i)\vee f_2(\beta x_i,\beta_0 x_i)<\varepsilon C_{\tau_0}\ell$ for every $i\ge 1$. Taking $\varepsilon_0=(\delta_0\wedge 1)/(2C_{\tau_0}\ell \vee 1)\in(0,1)$, we obtain
\(
\{\beta:\|\beta-\beta_0\|_F<\varepsilon_0\}\subset\left\{\beta:\sup_{i\ge 1}\mathrm{KL}(Q_{i,\beta_0}\|Q_{i,\beta})<\delta_0,\;\sum_{i=1}^\infty \frac{V(Q_{i,\beta_0}, Q_{i,\beta})}{i^2}<\infty\right\}.
\)
Since the set on the left has positive measure under $\pi$ by Condition~\ref{ass:prior}, condition~(ii) of Theorem~\ref{thm:adapt_sch} is verified.
\end{proof}

\begin{lemma}[Local exponentially consistent tests]
\label{lem:local_exponentially_consistent_tests}
Fix $u_0 \in \mathbb{S}^{dp-1}$ and $t_0>0$.
Under Condition~\ref{ass:anti_boundary}, there exists a radius $r_0>0$, such that there exists an exponentially consistent sequence of tests for testing
\(
H_0:\beta=\beta_0\text{ vs }H_1:\beta=\beta_0+tu,u\in B(u_0,r_0)\cap\mathbb S^{dp-1},t\ge t_0.
\)
\end{lemma}

\begin{proof}[Proof of Lemma~\ref{lem:local_exponentially_consistent_tests}]
Fix $u_0\in\mathbb S^{dp-1}$ and $t_0>0$.
By Condition~\ref{ass:anti_boundary}, there exist a proper face set $F_0\subsetneq\mathcal E$,
constant $\delta_0>0$, such that 
\(
\liminf_{n\to\infty}\frac{1}{n}\sum_{i=1}^n 
1\Big\{
F(u_0x_i)=F_0,\ 
\min_{e\in F_0,\ e''\in\mathcal E\setminus F_0} (u_0x_i)^\top(e-e'')\ge \delta_0
\Big\}>0.
\)
Define
\(
\mathcal X_{u_0}:=\Big\{x_i:\ 
F(u_0x_i)=F_0,\ 
\min_{e\in F_0,\ e''\in\mathcal E\setminus F_0} (u_0x_i)^\top(e-e'')\ge \delta_0
\Big\}.
\)
Then 
\[
\label{eq:informative_subset}
\rho_0:=\liminf_{n\to\infty}n^{-1}\sum_{i=1}^n 1\{x_i\in\mathcal X_{u_0}\}>0.
\]

Let $L_0:=\ell\|\beta_0\|_F$. For any $x\in\mathcal X_{u_0}$, set $v_0:=u_0x$ and $\mu:=\beta_0x$.
By Condition~\ref{ass:covariates_bound}, $\|\mu\|_2\le \|\beta_0\|_F\|x\|_2\le L_0$.
Also, by construction, $v_0\in V(F_0,\delta_0)$, so Proposition~\ref{prop:uniform_positivity} yields constants
$r_0^*>0$ and $c_0^*>0$ such that for any $v$ with $\|v-v_0\|_2<r_0^*$, and any $t\ge t_0$
\(
P(Y\in F_0|\mu+tv)-P(Y\in F_0|\mu)\ge c_0^*.
\)
Set $r_0:=r_0^*/\ell$ and take any $u\in B(u_0,r_0)\cap\mathbb S^{dp-1}$. Then
$\|ux-u_0x\|_2\le \|u-u_0\|_F\|x\|_2<r_0\ell=r^*$, hence
\(
P(Y\in F_0|\beta_0x+tux)-P(Y\in F_0|\beta_0x)\ge c_0^*.
\)
Consider the test $\Psi_{n,u_0}:\mathcal E^n\to[0,1]$:
\(
\Psi_{n,u_0}(Y^n):=1\left(\sum_{i=1}^n\psi_i>\sum_{i=1}^n\alpha_i\right),
\)
where
\(
\psi_i:=1(Y_i\in F_0)1(x_i\in\mathcal X_{u_0}),\quad \alpha_i:=[Q_{i,\beta_0}(F_0)+c^*_0/2]1(x_i\in\mathcal X_{u_0}).
\)
Then by invoking Hoeffding's inequality \citep[pp.~14]{Dudley1999uniform} twice on $\Psi_{n,u_0}$ and $1-\Psi_{n,u_0}$, respectively, we get
\(
\mathbb E_{Q^n_{\beta_0}}[\Psi_{n,u_0}(Y^n)]=&Q^n_{\beta_0}\left(\sum_{i=1}^n[\psi_i-\mathbb E_{Q_{i,\beta_0}}\psi_i]> |\mathcal X_{u_0}|c^*_0/2\right)\\
\le&\exp\left(-\frac{1}{4}n(c_0^*)^2\left(\frac{|\mathcal X_{u_0}|}{n}\right)^2\right),
\)
and for $\forall \beta=\beta_0+tu$ with $u\in B(u_0,r_0)\cap\mathbb S^{dp-1}$ and $t\ge t_0$,
\(
\mathbb E_{Q^n_{\beta}}[1-\Psi_{n,u_0}(Y^n)]=&Q^n_{\beta}\left(\sum_{i=1}^n[(1-\psi_i)-(1-\mathbb E_{Q_{i,\beta}}\psi_i)]>\sum_{i=1}^n[-\alpha_i+\mathbb E_{Q_{i,\beta}}\psi_i]\right)\\
\le&Q^n_{\beta}\left(\sum_{i=1}^n[(1-\psi_i)-(1-\mathbb E_{Q_{i,\beta}}\psi_i)]>|\mathcal X_{u_0}|c^*_0/2\right)\\
\le&\exp\left(-\frac{1}{4}n(c_0^*)^2\left(\frac{|\mathcal X_{u_0}|}{n}\right)^2\right).
\)
Let $b(u_0):=(c_0^*\rho_0)^2/4>0$ with $\rho_0$ being defined in \eqref{eq:informative_subset}. Then for sufficiently large $n$,
\(
\mathbb E_{Q^n_{\beta_0}}[\Psi_{n,u_0}(Y^n)]\le \exp(-nb(u_0)),\qquad \inf_{\beta:\beta=\beta_0+tu}\mathbb E_{Q^n_{\beta}}[1-\Psi_{n,u_0}(Y^n)]\le\exp(-nb(u_0)).
\)

\end{proof}

With both Lemma~\ref{lem:KL_condition} and Lemma~\ref{lem:local_exponentially_consistent_tests}, we can prove the posterior consistency for the regression model.
\begin{proof}[Proof of Theorem~\ref{thm:consistency_beta}]
Invoking Lemma~\ref{lem:local_exponentially_consistent_tests}, we get an opn cover of the compact sphere $\mathcal S^{dp-1}$:
\(
\{B(u_0,r(u_0))\cap\mathbb S^{dp-1}:u_0\in\mathbb S^{dp-1}\},
\)
where on each open cover, there exists an exponentially consistent sequence of tests $\{\psi_{n,u_0}\}$ such that $\mathbb E_{Q^n_{\beta_0}}[\psi_{n,u_0}(Y^n)]\le \exp(-nb(u_0))$ and $\mathbb E_{Q^n_{\beta}}[1-\psi_{n,u_0}(Y^n)]\le\exp(-nb(u_0))$ for all $\beta=\beta_0+tu$ with $t\ge t_0$ and $u\in B(u_0,r(u_0))\cap\mathbb S^{dp-1}$. Hence, there exist $u_1,\ldots,u_M\in\mathbb S^{dp-1}$ such that
\(
\mathbb S^{dp-1}\subset\bigcup_{m=1}^M(B(u_m,r(u_m))\cap\mathbb S^{dp-1}).
\)
Write $r_m:=r(u_m)$, $\psi_{n,m}:=\psi_{n,u_m}$, and $b:=\min_{1\le m\le M}b(u_m)$.

Define the global test $\Psi_n:=\max_{1\le m\le M}\psi_{n,u_m}$.
Then by the union bound and the local type I error bounds,
\(
\mathbb E_{\beta_0}[\Psi_n]\le \sum_{m=1}^M \mathbb E_{\beta_0}[\psi_{n,m}]
\le M\max_{1\le m\le M}\exp(-b(u_m)n)\ \le\ M e^{-bn}.
\)

For the type II error, fix any $u\in\mathbb S^{dp-1}$. Choose $m$ such that $u\in B(u_m,r_m)\cap\mathbb S^{dp-1}$.
Then $\Psi_n\ge \psi_{n,u_m}$ and therefore
\(
\sup_{t\ge t_0}\mathbb E_{\beta_0+t u}[1-\Psi_n]\le \sup_{t\ge t_0}\mathbb E_{\beta_0+t u}[1-\psi_{n,m}]
\le \exp(-b(u_m)n)\ \le\ C e^{-bn},
\)
Taking supremum over $u\in\mathbb S^{dp-1}$ gives the uniform bound:
\(
\sup_{\beta:\|\beta-\beta_0\|_F\ge t_0}\mathbb E_{Q^n_{\beta}}[1-\Psi_n]\le \exp(-bn).
\)
Hence, we have verified condition (ii) in Theorem~\ref{thm:adapt_sch}. Combining it with Lemma~\ref{lem:KL_condition}, we have proved the posterior consistency of $\beta$ for the regression model \eqref{eq:regression}.
\end{proof}

\section{Posterior sampling in data application}

We propose a posterior sampling scheme for the following hierarchical model, which will be applied in Section~\ref{sec:data_application} to sample the parameters $\beta_0$, $\beta_1$, and $\beta_2$:
\(
&\zeta \sim \text{Mat-N}(\mu, I_n, I_d), \\
&\mu = \sum_{m=1}^M \alpha_m W_m + B\beta C, \\
&\alpha=(\alpha_1,\ldots,\alpha_M)^\top \sim \text{N}(0, V_\alpha), \\
&\beta \sim \text{Mat-N}\left(0, I_\kappa, (\tau_\beta C^\top C)^{-1} \right),
\)
where $W_m \in \mathbb{R}^{n \times d}$, $B \in \mathbb{R}^{n \times \kappa}$, $\beta \in \mathbb{R}^{\kappa \times K}$, and $C \in \mathbb{R}^{K \times d}$ is a matrix whose columns are one-hot vectors. 

For each $k = 1, \ldots, K$, define the index set $J_k := \{j : C_{k,j} = 1\}$, and let $s_k := |J_k|$ denote its cardinality. Then, $C^\top C = \text{diag}(s_1, \ldots, s_K)$, and we assume $s_k > 0$ for all $k$. Our goal is to infer the parameters $(\alpha, \beta)$.

We employ a MH-Within-Gibbs sampling algorithm that alternately samples the regression coefficients $\alpha$ and the matrix $\beta$ from their respective full conditional distributions. The steps are as follows:

\begin{itemize}
\item Sample $\beta$ given $\alpha$: \\
Define the matrix 
\(
Q := (B^\top B + \tau_\beta I_\kappa)^{-1} B^\top \left(\zeta - \sum_{m=1}^M \alpha_m W_m\right),
\)
and let $Q_j$ denote the $j$-th column of $Q$. Then, the columns of $\beta = (\beta_1, \ldots, \beta_K)$ are conditionally independent and sampled as
\(
\beta_k \sim \text{N} \left( \frac{1}{s_k} \sum_{j \in J_k} Q_j, \, \frac{1}{s_k}(B^\top B + \tau_\beta I_\kappa)^{-1} \right).
\)

\item Sample $\alpha$ given $\beta$: \\
Let $\Gamma \in \mathbb{R}^{M \times M}$ be the Gram matrix with entries $\Gamma_{s,t} = \operatorname{trace}(W_s^\top W_t)$, and define the vector $\gamma \in \mathbb{R}^M$ with entries 
\(
\gamma_m = \operatorname{trace}\left(W_m^\top (\zeta - B \beta C)\right).
\)
Then, $\alpha$ is sampled from
\(
\alpha \sim \text{N} \left((\Gamma + V^{-1}_\alpha)^{-1} \gamma, \, (\Gamma + V^{-1}_\alpha)^{-1} \right).
\)
\end{itemize}

\section{Additional results for the data application}
Figure~\ref{plot:matching_probs_2in1} shows the estimated matching probabilities over time for both sexes to facilitate comparison. Overall, the matching probabilities for males are generally higher than or equal to those for females.

\begin{figure}[H]
\centering
\begin{subfigure}[t]{\textwidth}
    \begin{overpic}[width=\textwidth]{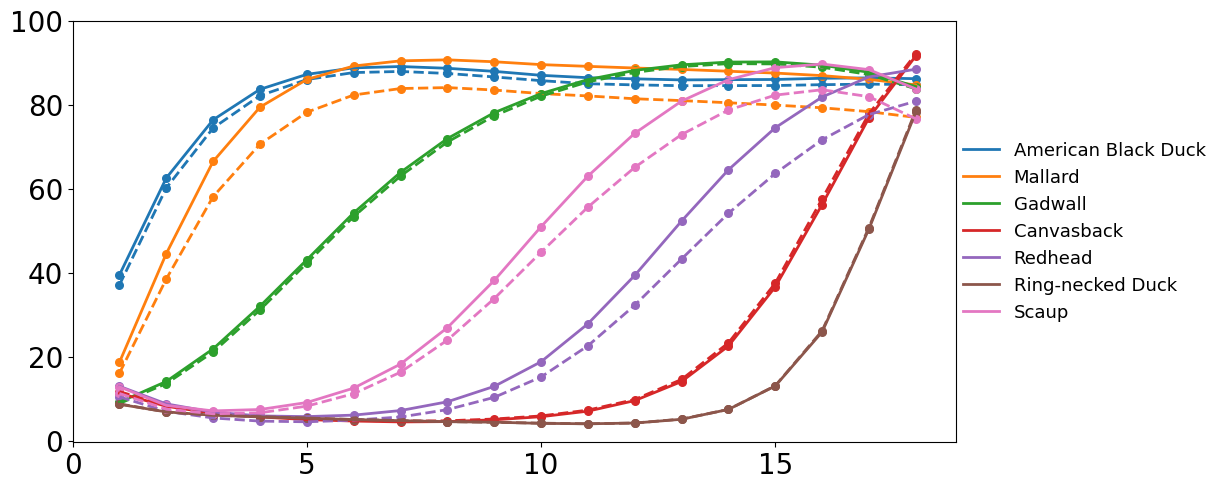}
    \put(42, -1){\scriptsize Week}
    \put(-1,10){\rotatebox{90}{\scriptsize Matching Probability (\%)}}
    \end{overpic}
\end{subfigure}
\caption{Matching probabilities over time. Solid lines correspond to male ducks of medium weight, while dashed lines correspond to female ducks of medium weight.}
\label{plot:matching_probs_2in1}
\end{figure}

Figure~\ref{plot:reduced_matching_probs_with_error_band} shows the estimated matching probabilities over time for the two-group model. Dabbling ducks tend to form matches earlier than diving ducks.

\begin{figure}[H]
\centering
\begin{subfigure}[t]{\textwidth}
    \begin{overpic}[width=\textwidth]{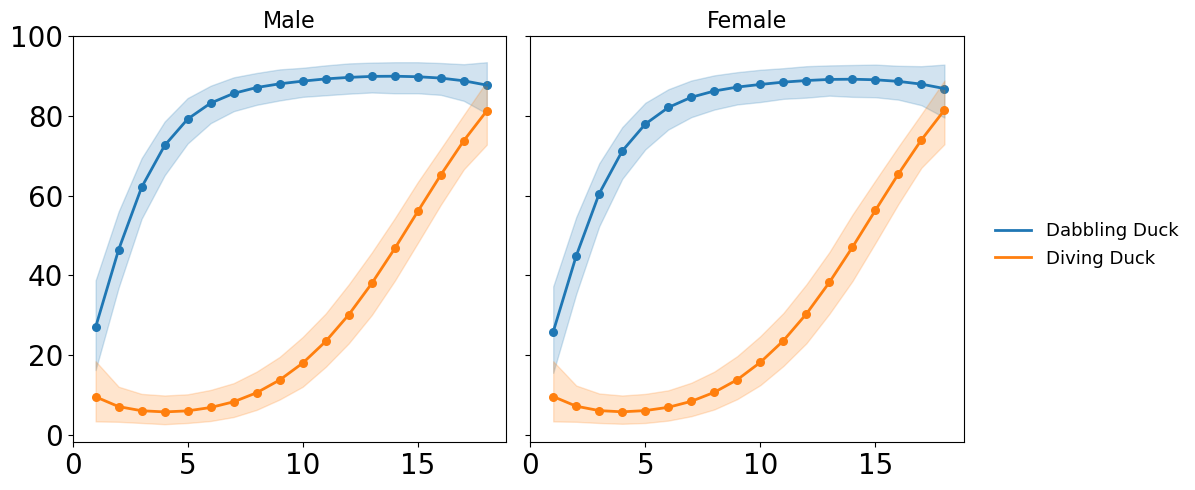}
    \put(42, -1){\scriptsize Week}
    \put(-1,10){\rotatebox{90}{\scriptsize Matching Probability (\%)}}
    \end{overpic}
\end{subfigure}
\caption{Matching probabilities over time with 95\% credible bands for the two-group model.}
\label{plot:reduced_matching_probs_with_error_band}
\end{figure}

\section{Additional results for the simulation}
Figure~\ref{plot:beta_violin_more} contains the violin plots of the posterior distributions of $\beta_{i,j}$ in two relatively high-dimensional settings.

\begin{figure}[H]
\centering
\begin{subfigure}[t]{\textwidth}
    \begin{overpic}[width=\textwidth,height=4cm]{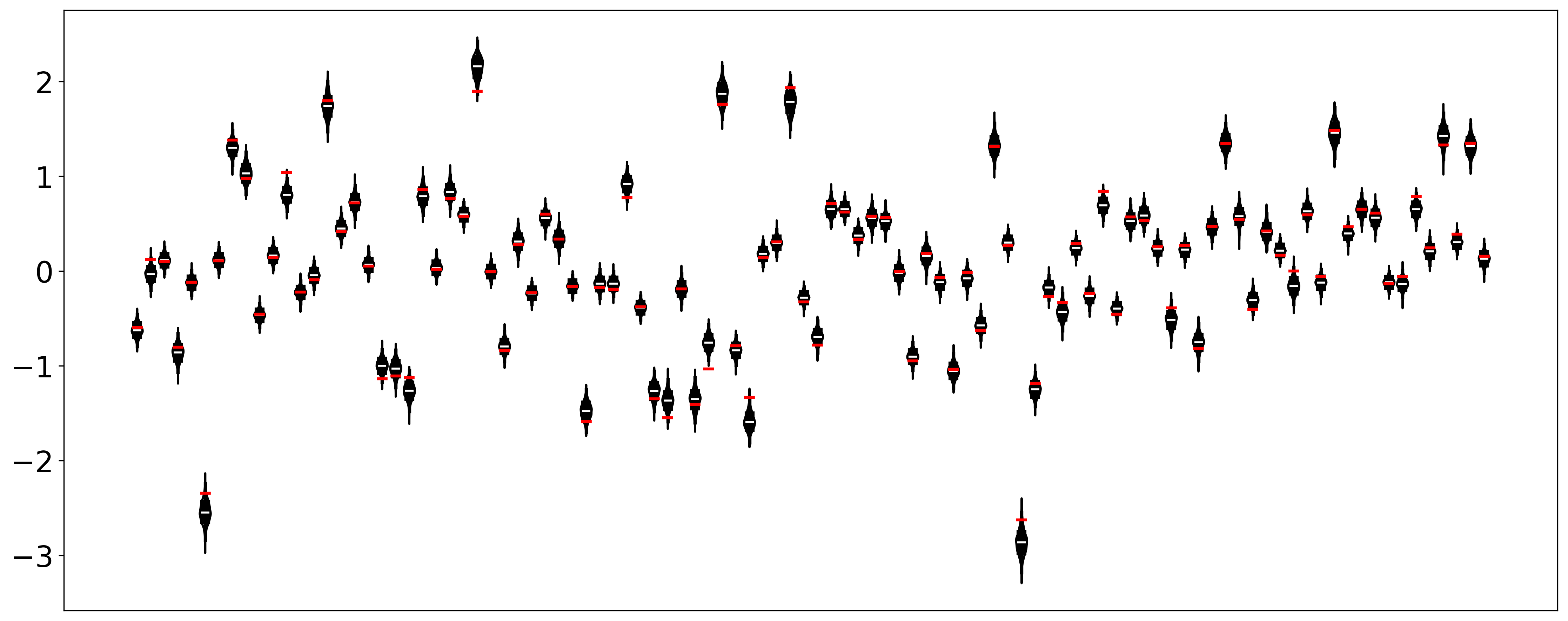}
    \put(43, -1){\scriptsize Parameter}
    \end{overpic}
    \caption{Posterior distributions of $\beta_{i,j}$ under $(d,m) = (20,10)$.}
\end{subfigure}\;
\begin{subfigure}[t]{\textwidth}
    \begin{overpic}[width=\textwidth,height=4cm]{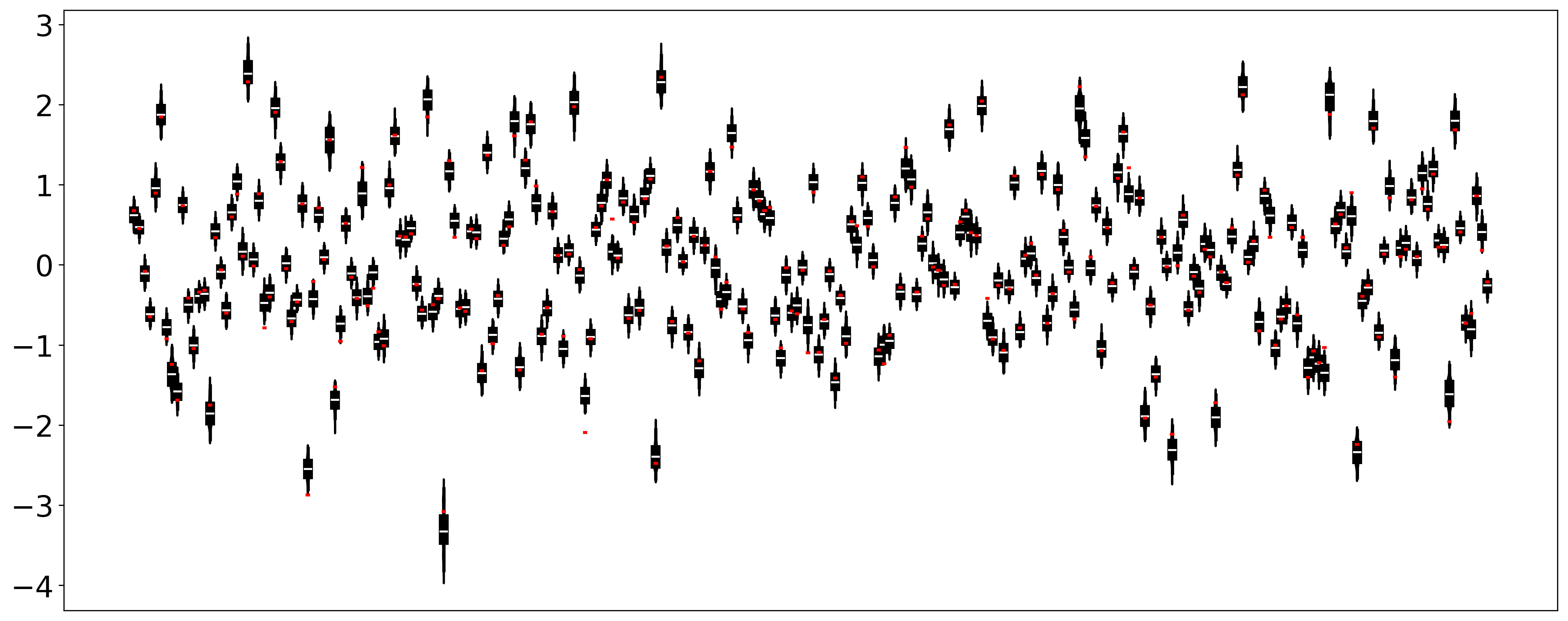}
    \put(43, -1){\scriptsize Parameter}
    \end{overpic}
    \caption{Posterior distributions of $\beta_{i,j}$  under $(d,m) = (50,20)$.}
\end{subfigure}\;
\caption{Violin plots of the posterior distributions of $\beta_{i,j}$ from MCMC sampling for different values of $(d,m)$. The horizontal lines represent the ground truth.}
\label{plot:beta_violin_more}
\end{figure}

\section{Alternative MCMC samplers}

\subsection{Hamiltonian Monte Carlo}\label{sec:alg_nuts}
Hamiltonian Monte Carlo (HMC) is a popular MCMC algorithm for sampling from high-dimensional posterior distributions.
However, for our model, a direct application of HMC is complicated by (i) the constraint-induced latent structure and (ii) the fact that an intractable normalization constant that depends on the latent variables.
Below we describe a practically implementable HMC scheme that addresses (i) via a reparameterization that satisfies the constraints, and (ii) via an approximate posterior that ignores the normalization constant.

Using the reparameterization $v_i=\zeta_i-A^\top u^{(i)}$ hence $\zeta_i=v_i+A^\top u^{(i)}$, we obtain
\(
\mathcal L(y_i,v_i,\zeta_i;\mu_i)
&=
\frac{
\mathcal F(v_i+A^\top u^{(i)};\mu_i)\,\pi_u(u^{(i)})
}{
  c[\mathcal U(y_i,\,v_i+A^\top u^{(i)})]
}
1\{u^{(i)}\in\mathcal U(y_i)\}
\prod_{j=1}^d 1\{s_{i,j}v_{i,j}>0\}.
\)
with $s_{i,j}:=2y_{i,j}-1$ for simplicity. Here the supports of $u^{(i)}$ and $v_i$ become independent:
\[\label{eq:U_V_supp}
& u^{(i)}\in\mathcal U(y_i):=
\{u:
u_k=0 \text{ for }k:(Ay_i-b)_k<0,\; u_k>0 \text{ for }k:(Ay_i-b)_k=0
\},\\
& v_{i,j}\in\{v:s_{i,j}v>0\},
\]
On the other hand, the likelihood remains intractable because the normalization $c[(\mathcal U(y_i,\,v_i+A^\top u^{(i)})]$ depends jointly on $u^{(i)}$ and $v_i$.

To enable tractable computation via HMC, we consider a different posterior by dropping the normalizing constant $c(y_i,\,v_i+A^\top u^{(i)})$. The resulting posterior can be viewed as an approximation to the one presented in the main text, and takes the form
\(
\pi^{\mathrm{HMC}}(\beta, U, V)
\propto
\pi(\beta)
\prod_{i=1}^n
\mathcal F(v_i + A^\top u^{(i)}; \beta x_i)
\;\pi_u(u^{(i)}),
\)
with support \eqref{eq:U_V_supp}. To improve mixing and computational efficiency, we marginalize out the auxiliary $V$, yielding
\[
\label{eq:drop_c_marginal}
\pi^{\text{HMC}}_{\text{marginal}}(\beta, U) \propto \pi(\beta) \prod_{i=1}^{n} \left[\pi_u(u^{(i)}) \prod_{j=1}^{d} \Phi(s_{i,j}(\mu_{ij} - (UA)_{ij}))\right].
\]

 As shown in a later section, the No-U-Turn Hamiltonian Monte Carlo (NUTS) algorithm based on \eqref{eq:drop_c_marginal} shows excellent computational efficiency and can be applied to high dimension settings.

{
\subsection{Pseudo-Marginal Metropolis--Hastings}\label{subsec:pmmh}
We also consider a pseudo-marginal Metropolis--Hastings (PMMH) algorithm \citep{andrieu2009pseudo}, which targets the marginal posterior distribution of the regression coefficient matrix $\beta$:
\(
\Pi(\beta|y_1,\ldots,y_n)\propto \pi(\beta)\prod_{i=1}^nP(y_i \mid \beta, x_i)
= \pi(\beta)\mathbb E_{\zeta_i \sim \mathcal N(\beta x_i, I_d)}
\left[1(T(\zeta_i) = y_i )\right].
\)
PMMH algorithm replaces this quantity with an unbiased Monte Carlo estimator. 
Specifically, for each $i$, we draw $M$ independent Gaussian perturbations
\(
\zeta_i^{(t)} = \beta x_i + \varepsilon_i^{(t)}, 
\qquad \varepsilon_i^{(t)} \sim N(0, I_d), 
\quad t = 1, \ldots, M,
\)
and estimate the likelihood contribution by the empirical proportion
\(
\widehat P(y_i \mid \beta, x_i)
= \frac{1}{M} \sum_{t=1}^M  1(T(\zeta_i^{(t)}) = y_i ).
\)

Let $\widehat L(\beta)$ denote the product of the estimated likelihood contributions across observations. Given a proposal $\beta' \sim q(\beta' \mid \beta)$, the PMMH acceptance probability takes the standard form
\(
\alpha(\beta, \beta') 
= \min\left\{1,\;
\frac{\widehat L(\beta') \, \pi(\beta')}
     {\widehat L(\beta) \, \pi(\beta)}
\frac{q(\beta \mid \beta')}{q(\beta' \mid \beta)} \right\},
\)
where $\pi(\beta)$ denotes the prior distribution.

Empirically, we found that the PMMH algorithm works well for low-dimensional settings with $d\le 10$. As $d$ increases further, zero-likelihood estimates $\widehat P(y_i \mid \beta, x_i)$ become increasingly common, hence requiring  substantially larger $M$.

}

\subsection{Simulation using Hamiltonian Monte Carlo}
{
We follow the same simulation setup as in Section 5.2 of the main text, and test the performance of the HMC algorithm. In each setting, we run NUTS-HMC sampler for 2000 warm-up iterations and 8000 post-warm-up iterations.

As discussed above, this HMC sampler targets a different posterior than the true posterior because the normalizing constant related to the latent variables cannot be evaluated. On the other hand, since the constraint-induced dependency between latent variables $U$ and $\zeta$ is now removed thanks to the reparameterization, HMC can mix more efficiently in higher dimension than the MH-Within-Gibbs sampler. In addition to the settings we presented for the MH-Within-Gibbs sampler, we conduct additional simulations for HMC with $d\in\{2000,3000\}$. 

Across all simulated settings, the HMC chains were stable and showed no divergent transitions. Trace plots and ACF plots in Figure~\ref{plot:beta_ACF_and_trace_NUTS} show that most coordinates of $\beta$ mix reasonably well in both low and moderate dimensions, and the autocorrelation typically drops close to zero immediately. 
On average, it takes $1.9$ seconds to run 1000 iterations for $(d,m)=(2,1)$, $245$ seconds for $(d,m)=(500,100)$, $1589$ seconds for $(d,m)=(3000,100)$.

\begin{figure}[H]
\centering
\begin{subfigure}[t]{0.3\textwidth}
    \begin{overpic}[width=\textwidth, height = 3.8cm]{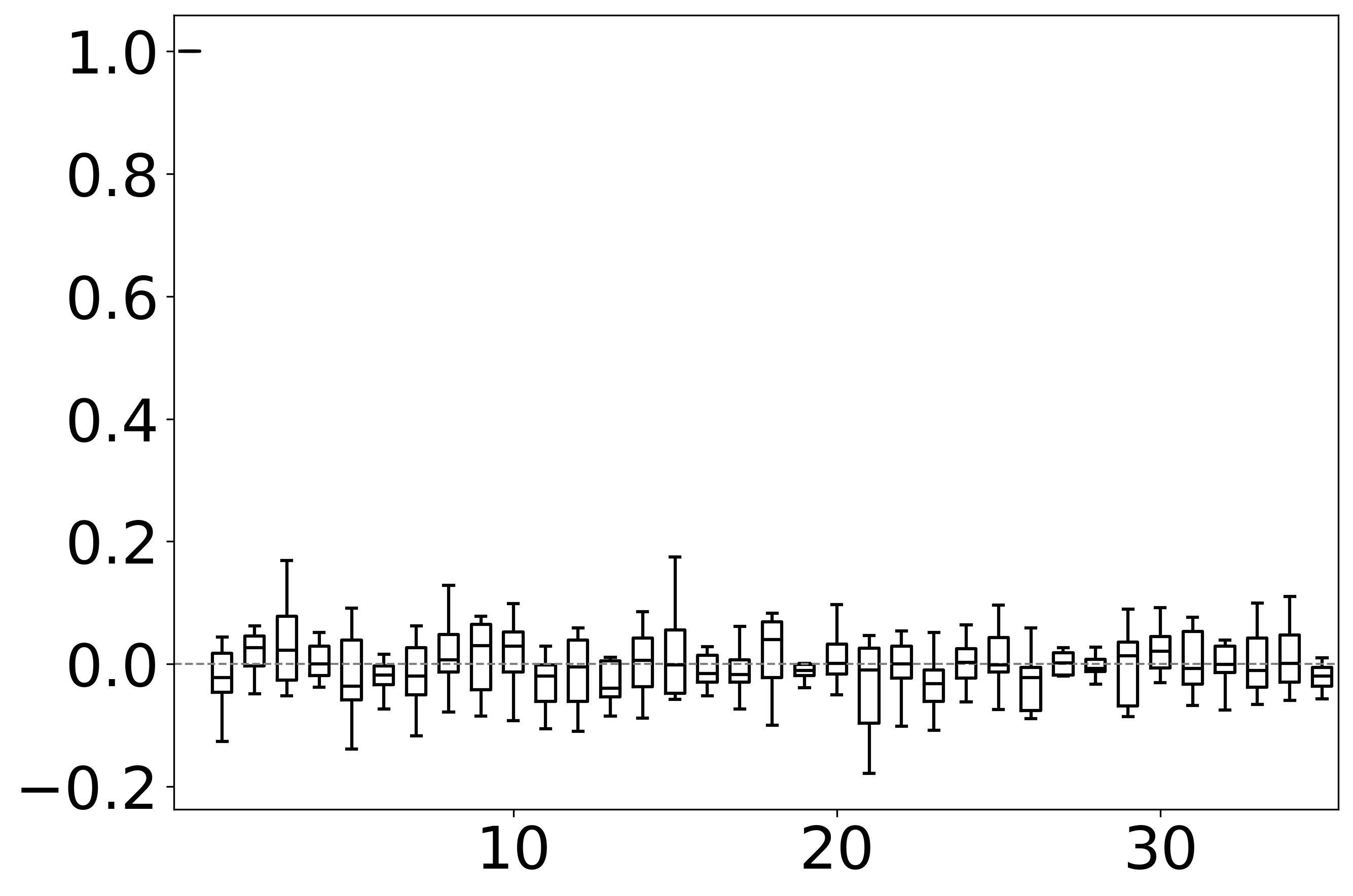}
    \put(50, -4){\scriptsize Lag}
    \end{overpic}
    \caption{$(d,m) = (2,1)$.}
\end{subfigure}\;
\begin{subfigure}[t]{0.3\textwidth}
    \begin{overpic}[width=\textwidth, height = 3.8cm]{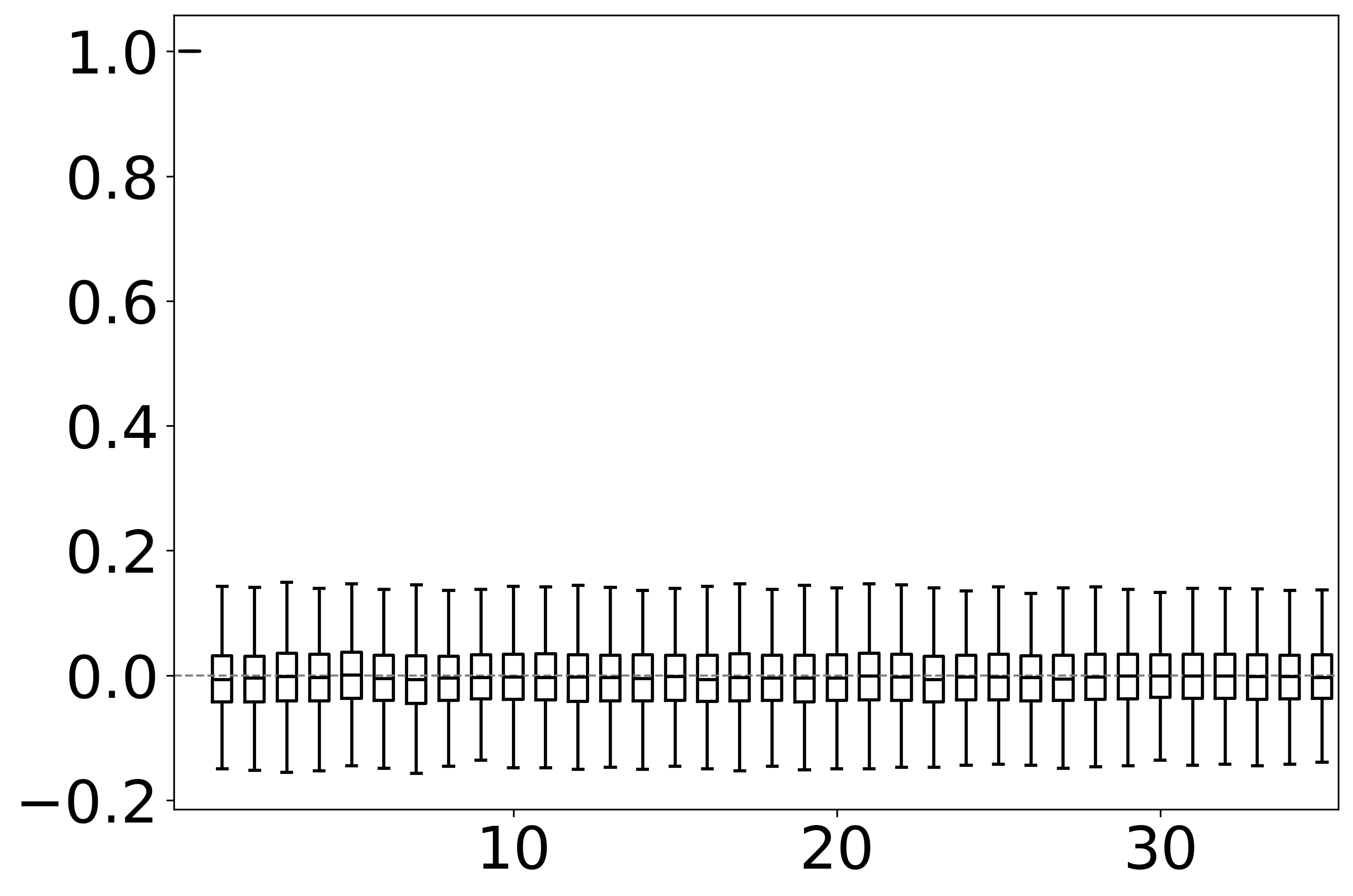}
    \put(50, -4){\scriptsize Lag}
    \end{overpic}
    \caption{$(d,m) = (500,100)$.}
\end{subfigure}\;
\begin{subfigure}[t]{0.3\textwidth}
    \begin{overpic}[width=\textwidth, height = 3.8cm]{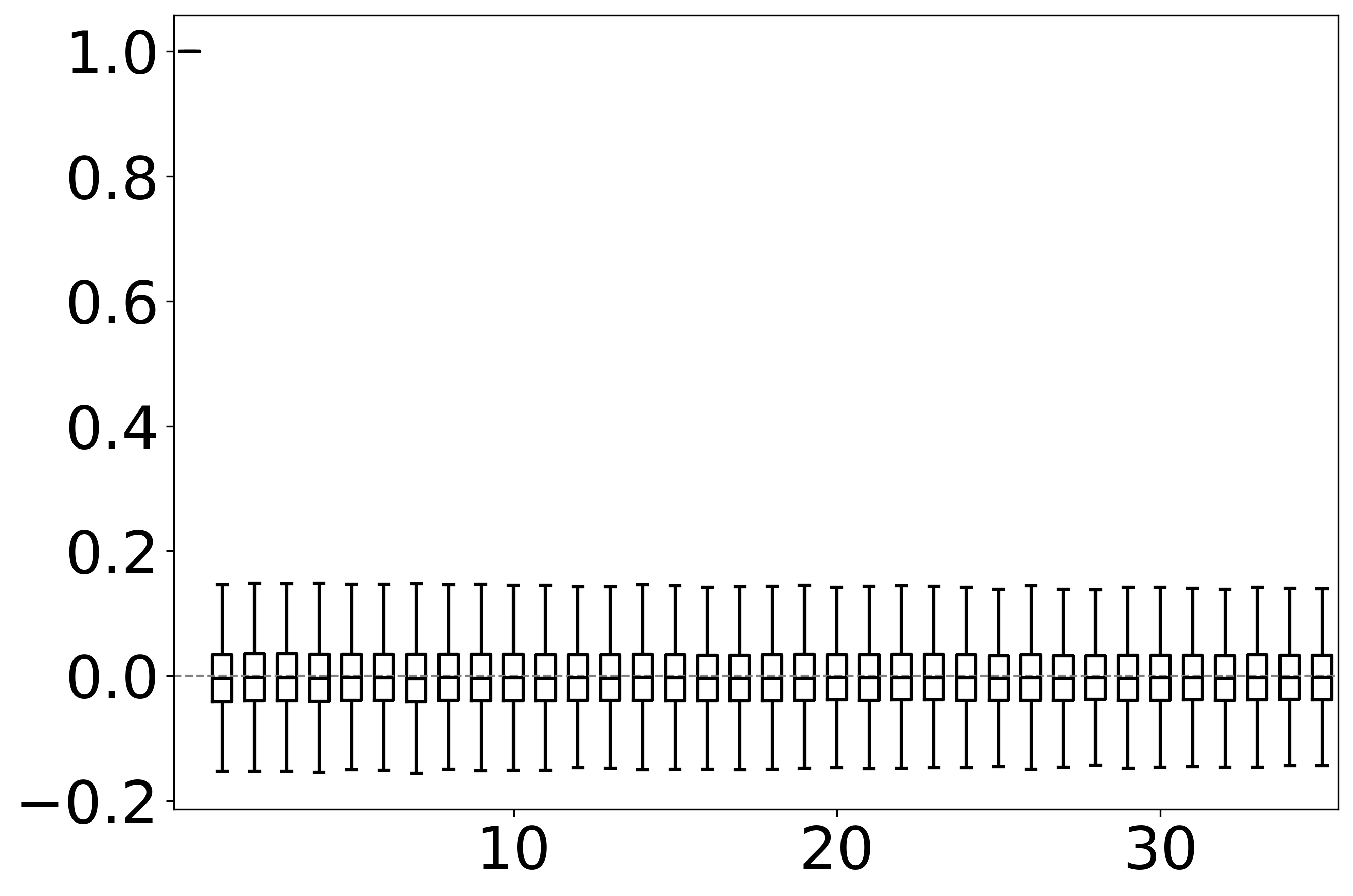}
    \put(50, -4){\scriptsize Lag}
    \end{overpic}
    \caption{$(d,m) = (3000,100)$.}
\end{subfigure}\\
\begin{subfigure}[t]{0.3\textwidth}
    \begin{overpic}[width=\textwidth, height = 3.8cm]{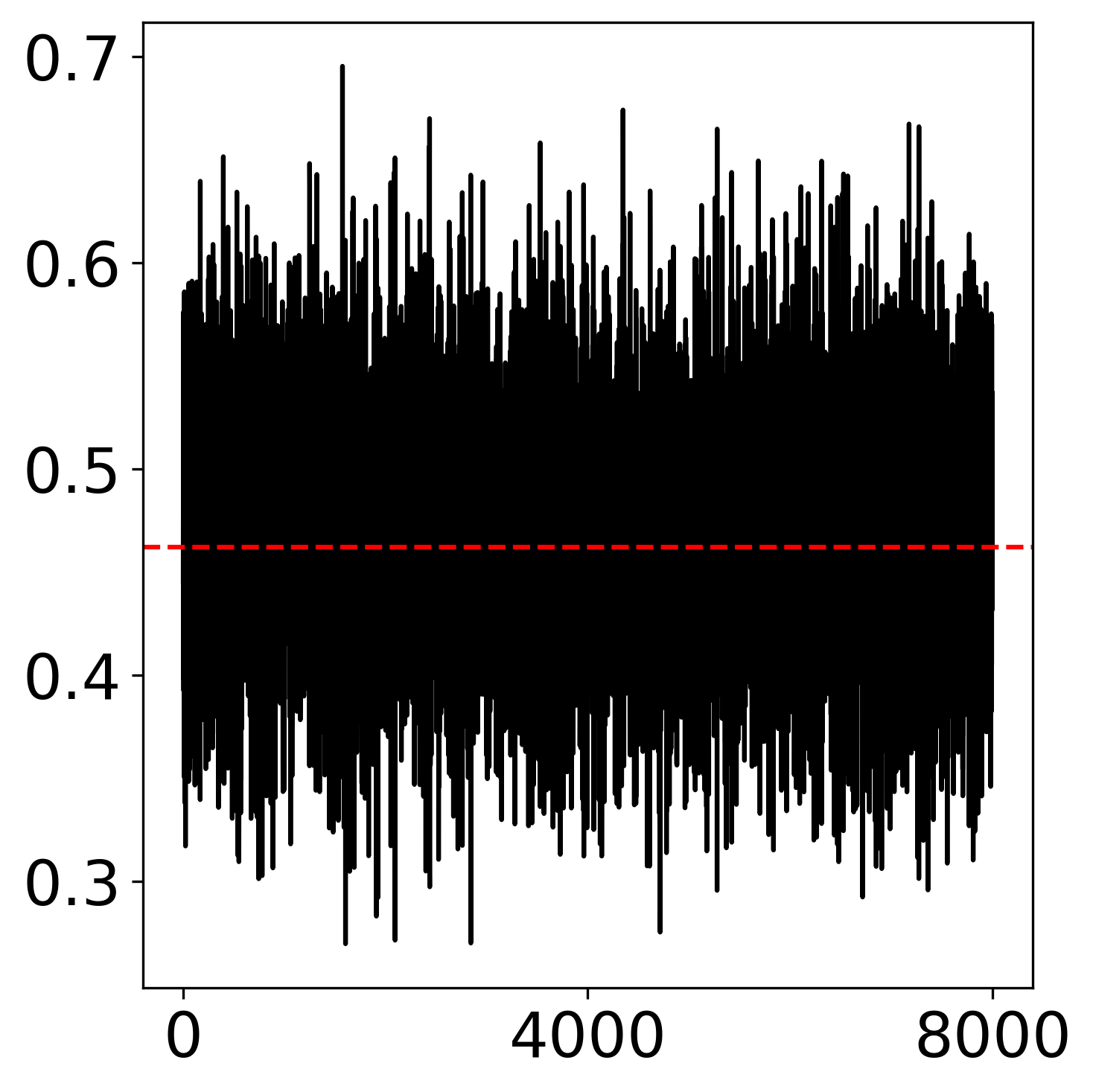}
    \put(44, -4){\scriptsize Iteration}
    \end{overpic}
    \caption{$(d,m) = (2,1)$.}
\end{subfigure}\;
\begin{subfigure}[t]{0.3\textwidth}
    \begin{overpic}[width=\textwidth, height = 3.8cm]{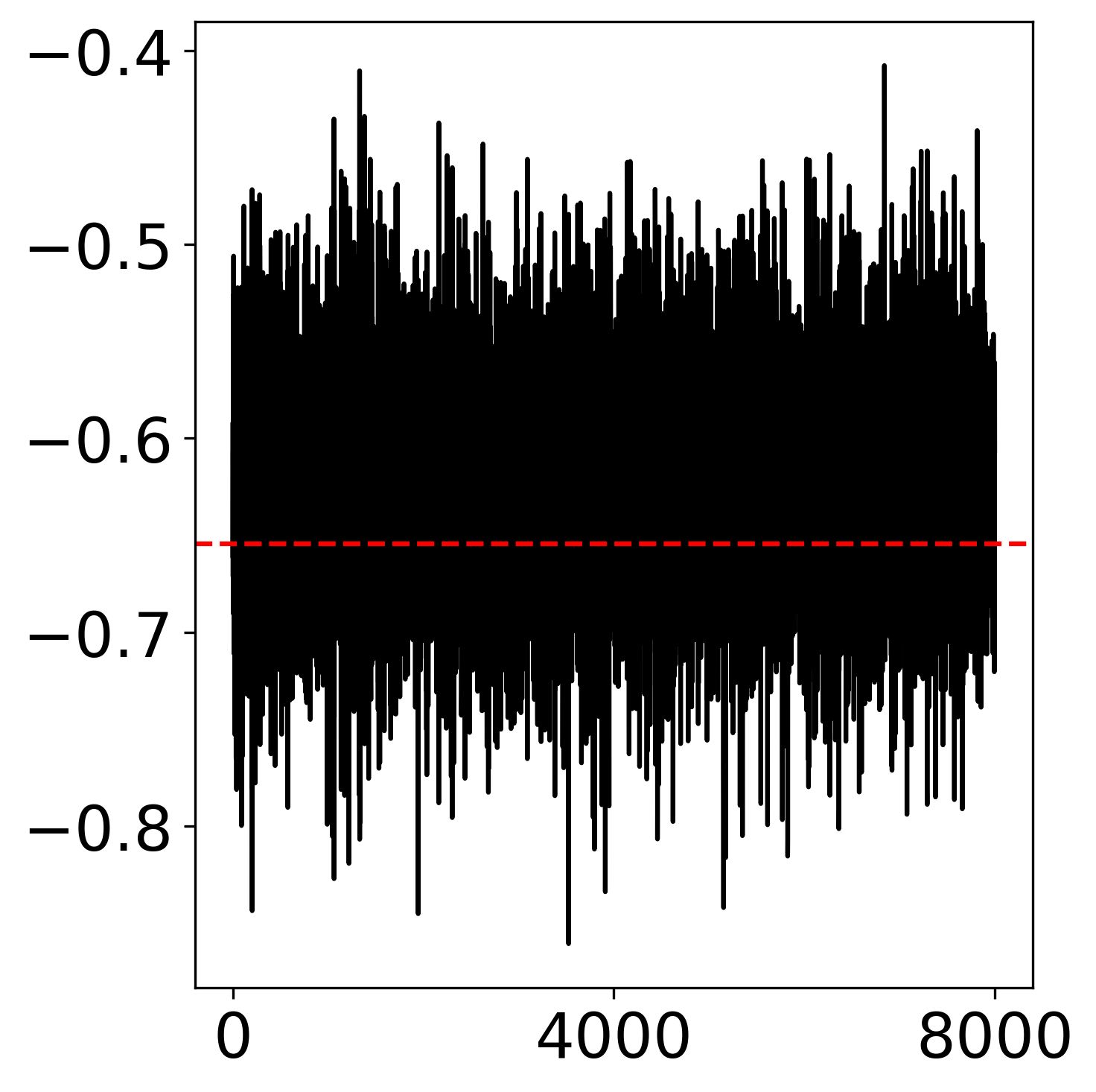}
    \put(44, -4){\scriptsize Iteration}
    \end{overpic}
    \caption{$(d,m) = (500,100)$.}
\end{subfigure}\;
\begin{subfigure}[t]{0.3\textwidth}
    \begin{overpic}[width=\textwidth, height = 3.8cm]{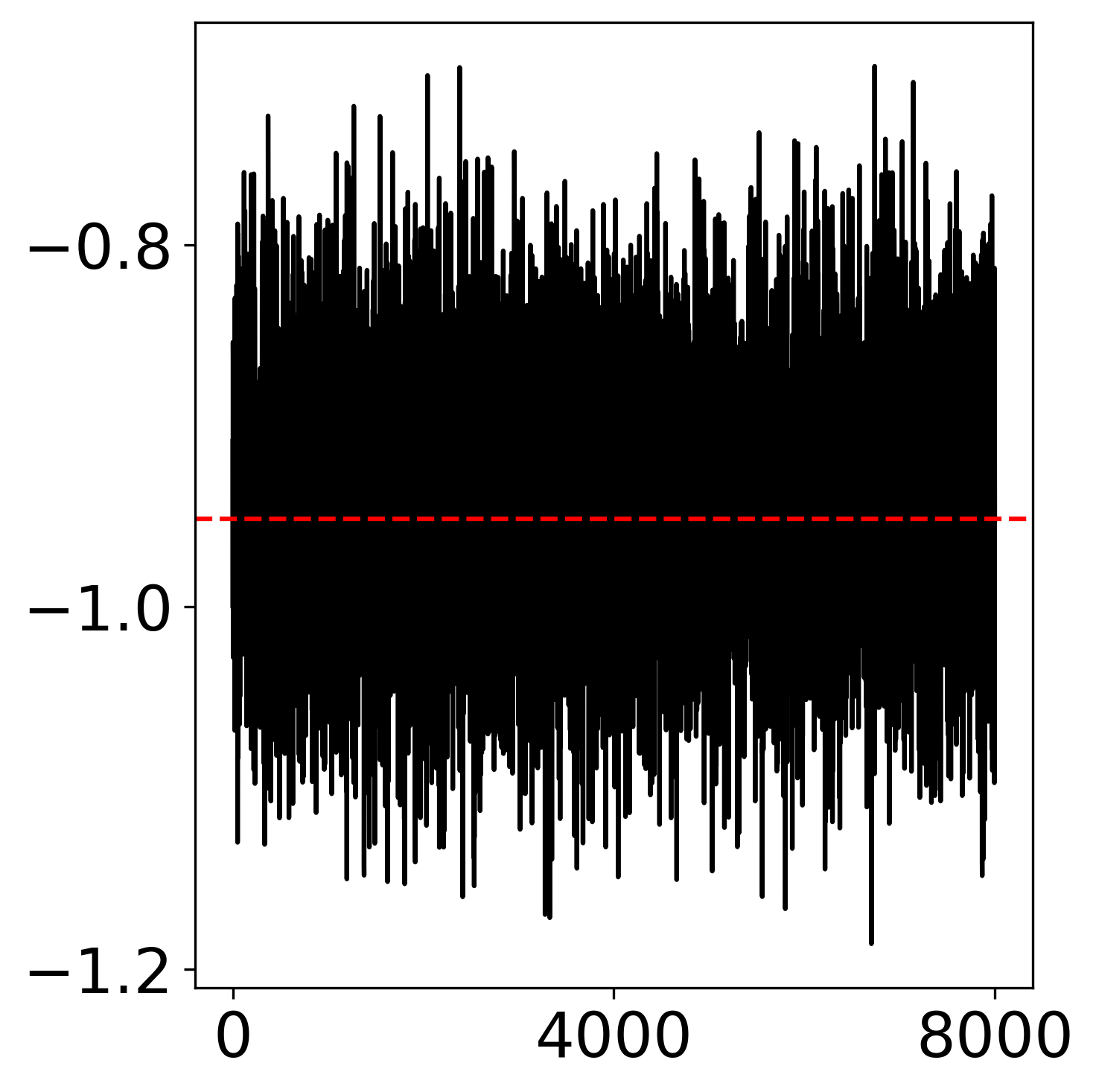}
    \put(44, -4){\scriptsize Iteration}
    \end{overpic}
    \caption{$(d,m) = (3000,100)$.}
\end{subfigure}
\caption{Autocorrelation function (ACF) and trace plots for different values of $(d,m)$ for HMC. Each box displays the ACF of all $\beta_{i,j}$ parameters. The ACF is computed after thinning the posterior samples by retaining every 25th sample. The trace plots are based on a randomly chosen parameter under each setting.}
\label{plot:beta_ACF_and_trace_NUTS}
\end{figure}

Since ignoring the normalizing constant alters the target distribution, posterior summaries such as the posterior mean of $\beta$ are less accurate than those from the MH-Within-Gibbs sampler and often exhibit larger RMSE than the MH-Within-Gibbs sampler. We report the RMSE for estimating $\beta$ in Table~\ref{tb:RMSE_NUTS}.
When $m$ is relatively small compared to $d$, the HMC algorithm achieves a comparable RMSE to that of the MH-Within-Gibbs sampler.
When $m$ increases, RMSE increases as the bias of ignoring the normalizing constant becomes non-negligible.

\begin{table}[H]
  \centering
  \small
  \begin{tabular}{ccccccccccc}
    \hline
        & \multicolumn{9}{c}{d}                                                 \\ \cline{2-11} 
    m   & 2     & 10     & 20    & 50    & 100    & 200   & 500   & 1000   & 2000 & 3000 \\ \hline
    1   & 0.113 & 0.161 & 0.112 & 0.091 & 0.089 & 0.086 & 0.083 & 0.082 & 0.080 & 0.080 \\
    5   & -     & 0.304 &  0.212& 0.114 & 0.096 & 0.087 & 0.081 & 0.084 & 0.082  &0.082 \\
    10  & -     & 0.379     & 0.299 & 0.189 & 0.139 & 0.116 & 0.093 & 0.087 & 0.083 & 0.083\\
    20  & -     & 1.175     & 0.595     &  0.217    & 0.179 & 0.131 & 0.105 & 0.094 & 0.088 & 0.085 \\
    50  & -     & -     & -     & -     & 0.283     &  0.211     & 0.136 & 0.108 & 0.093 & 0.092 \\
    100 & -     & -     & -     & -     & -     & 0.267    & 0.175    & 0.136     & 0.107 & 0.098 \\ \hline
    \end{tabular}
    \caption{Root mean squared error (RMSE) for estimating $\beta$ using posterior mean under different dimensionality $(d,m)$ for NUTS-HMC.}
  \label{tb:RMSE_NUTS}
  \end{table}
}

{
\subsection{Simulation using Pseudo-Marginal MCMC}\label{subsec:pmmh_simulation}
We next compare the PMMH algorithm with the MH-Within-Gibbs sampler. 
Specifically, we consider configurations $(d,m) \in \{(2,1), (5,2), (10,5)\}$ and we fix $n=1,000$ and $p=5$. We run $50,000$ iterations and discard the first $5,000$ as warm-up. We take $M=1,000$ for estimating the log-likelihood for each observation. 
A Gaussian random-walk proposal is employed for $\beta$, the scale of which is chosen from a grid, $\{0.1,0.05,0.01,0.005\}$, so that the acceptance rate is away from 0. 

Due to the higher computational cost, PMMH takes more time than the MH-Within-Gibbs sampler to run. Specifically, the running time per 1000 iterations of PMMH for $(d,m)=(2,1)$ is 1.5 minutes, for $(d,m)=(5,2)$ is 2.2 minutes, for $(d,m)=(10,5)$ is 17.3 minutes.
Figure~\ref{plot:beta_ACF_and_trace_PMMH} reports representative trace plots and ACF plots for PMMH. For $(d,m)=(10,5)$, the mixing is very slow. For comparison, we add one more column (plot (d) and plot (h)) for a trace plot and an ACF plot at $(d,m)=(10,5)$ for our MH-Within-Gibbs sampler. The results show that the MH-Within-Gibbs sampler has faster mixing than PMMH. For comparison of the estimator accuracy, we use RMSE for estimating $\beta$ and show it in Table~\ref{tb:RMSE_PMMH}. 
When $d$ exceeds $10$, the estimate of the likelihood becomes zero for most observations due to high rejection rate, leading to non-ideal performance of the PMMH algorithm, hence we do not report the results for $(d,m)=(10,5)$ for PMMH. Potential remedies for this issue may be found with some alternative unbiased Monte Carlo estimator, which is beyond the scope of this paper. Readers are recommended to  \citet{andrieu2009pseudo,lee2019unbiased} and the references therein for more details.

\begin{figure}[H]
\centering
\begin{subfigure}[t]{0.23\textwidth}
    \begin{overpic}[width=\textwidth, height = 3cm]{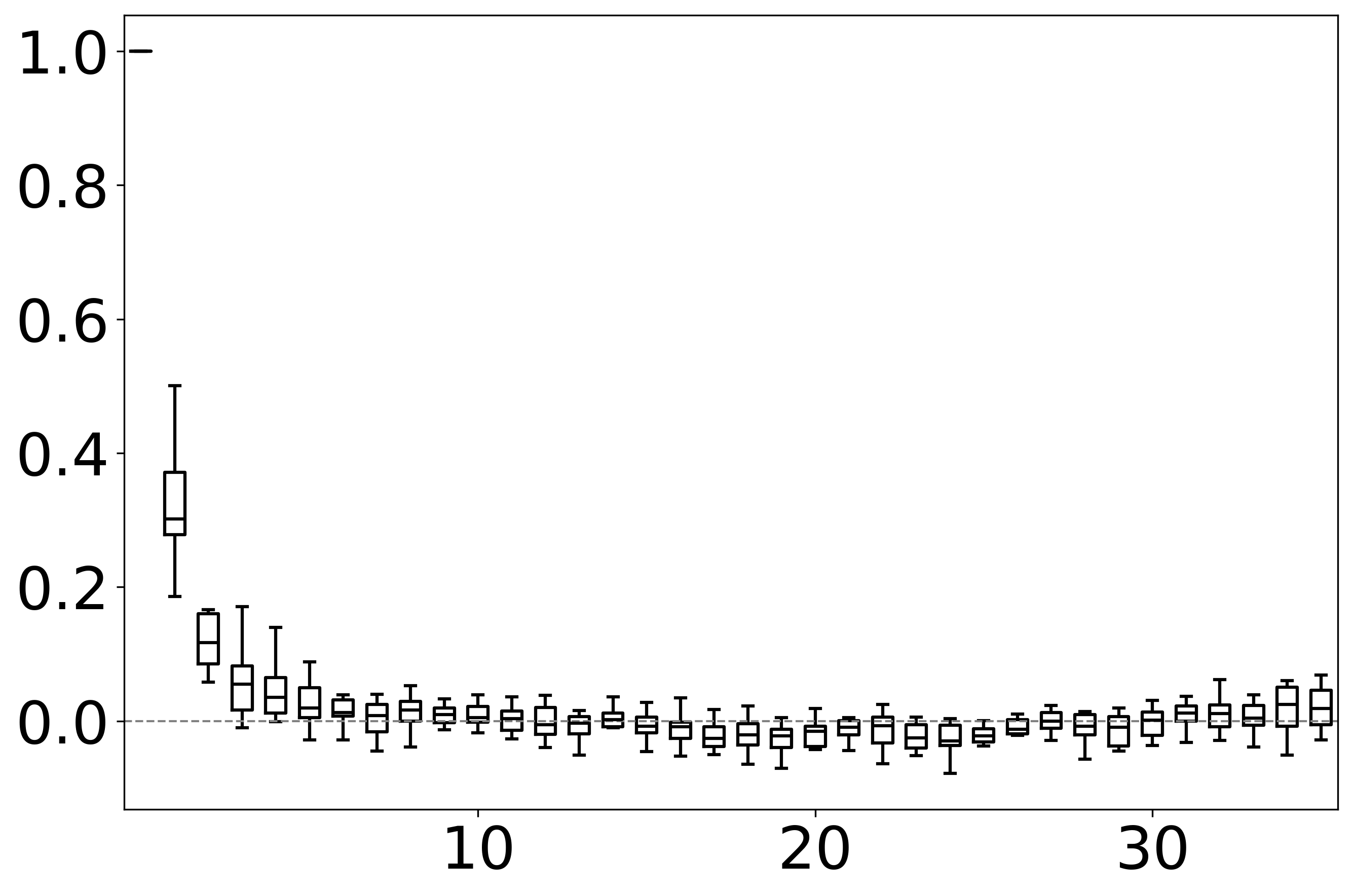}
    \put(50, -4){\scriptsize Lag}
    \end{overpic}
    \caption{$(d,m) = (2,1)$.}
\end{subfigure}\;
\begin{subfigure}[t]{0.23\textwidth}
    \begin{overpic}[width=\textwidth, height = 3cm]{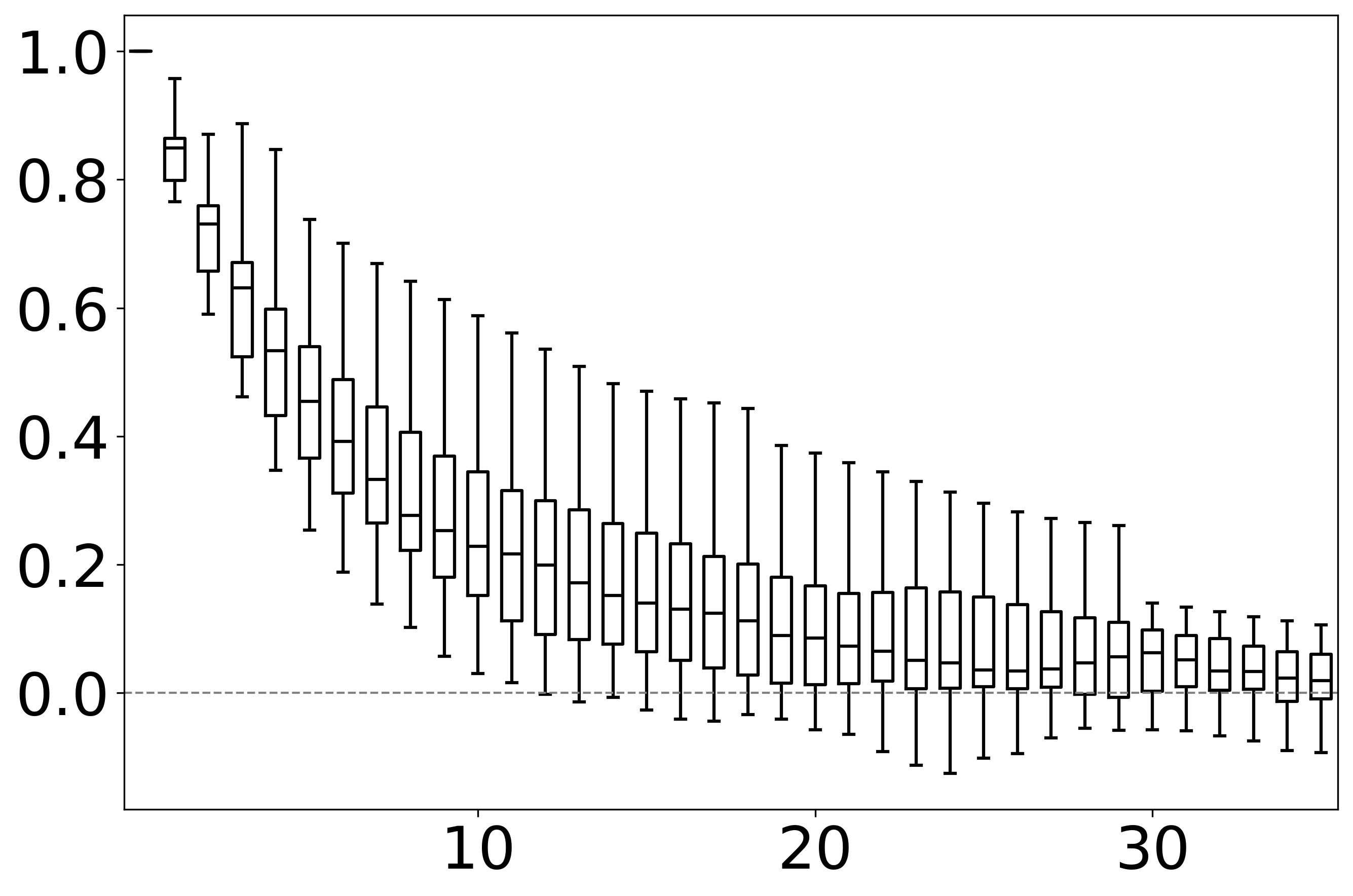}
    \put(50, -4){\scriptsize Lag}
    \end{overpic}
    \caption{$(d,m) = (5,2)$.}
\end{subfigure}\;
\begin{subfigure}[t]{0.23\textwidth}
    \begin{overpic}[width=\textwidth, height = 3cm]{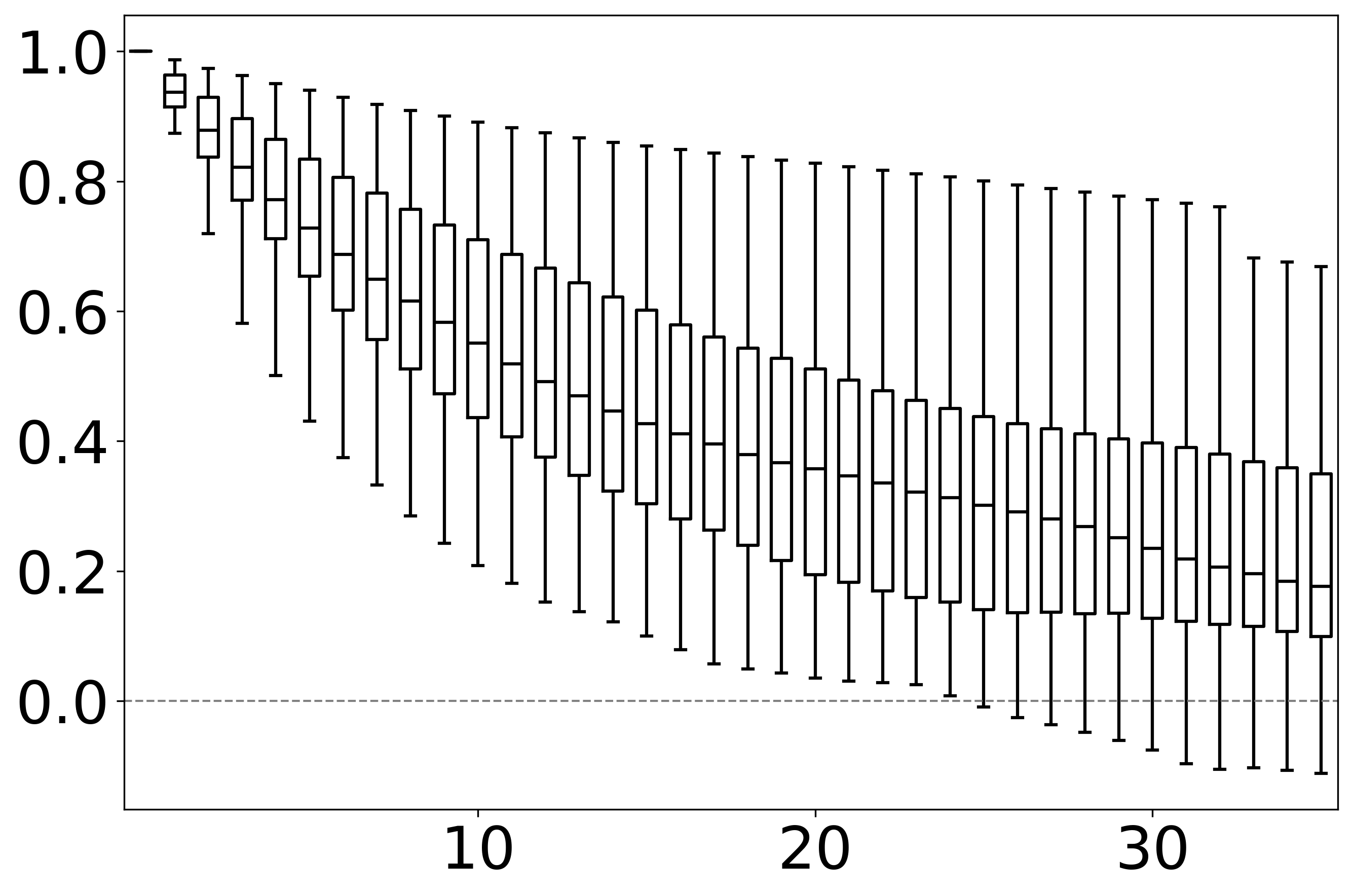}
    \put(50, -4){\scriptsize Lag}
    \end{overpic}
    \caption{$(d,m) = (10,5)$.}
\end{subfigure}
\begin{subfigure}[t]{0.23\textwidth}
    \begin{overpic}[width=\textwidth, height = 3cm]{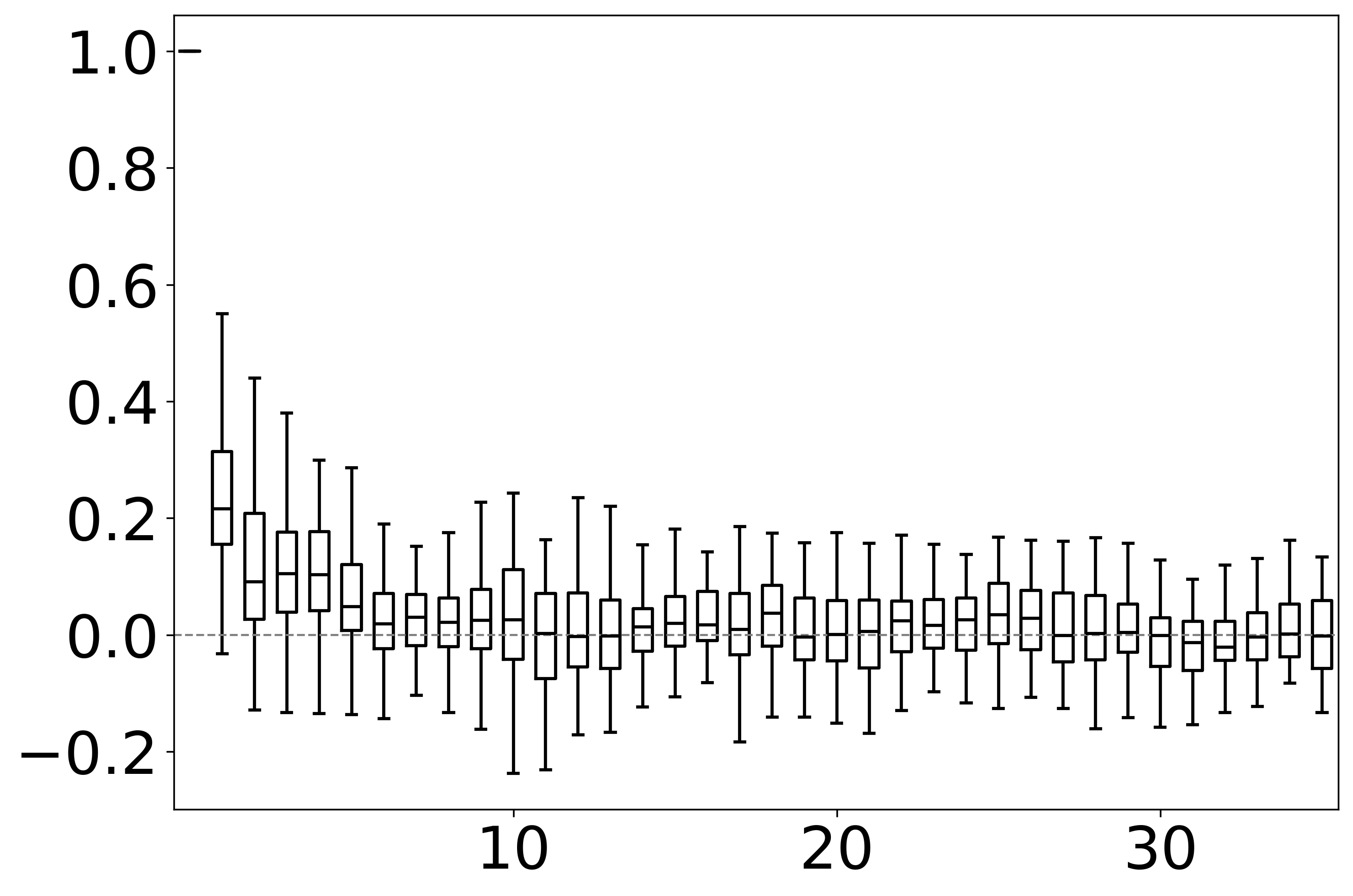}
    \put(50, -4){\scriptsize Lag}
    \end{overpic}
    \caption{$(d,m) = (10,5)$.}
\end{subfigure}\\
\begin{subfigure}[t]{0.23\textwidth}
    \begin{overpic}[width=\textwidth, height = 3cm]{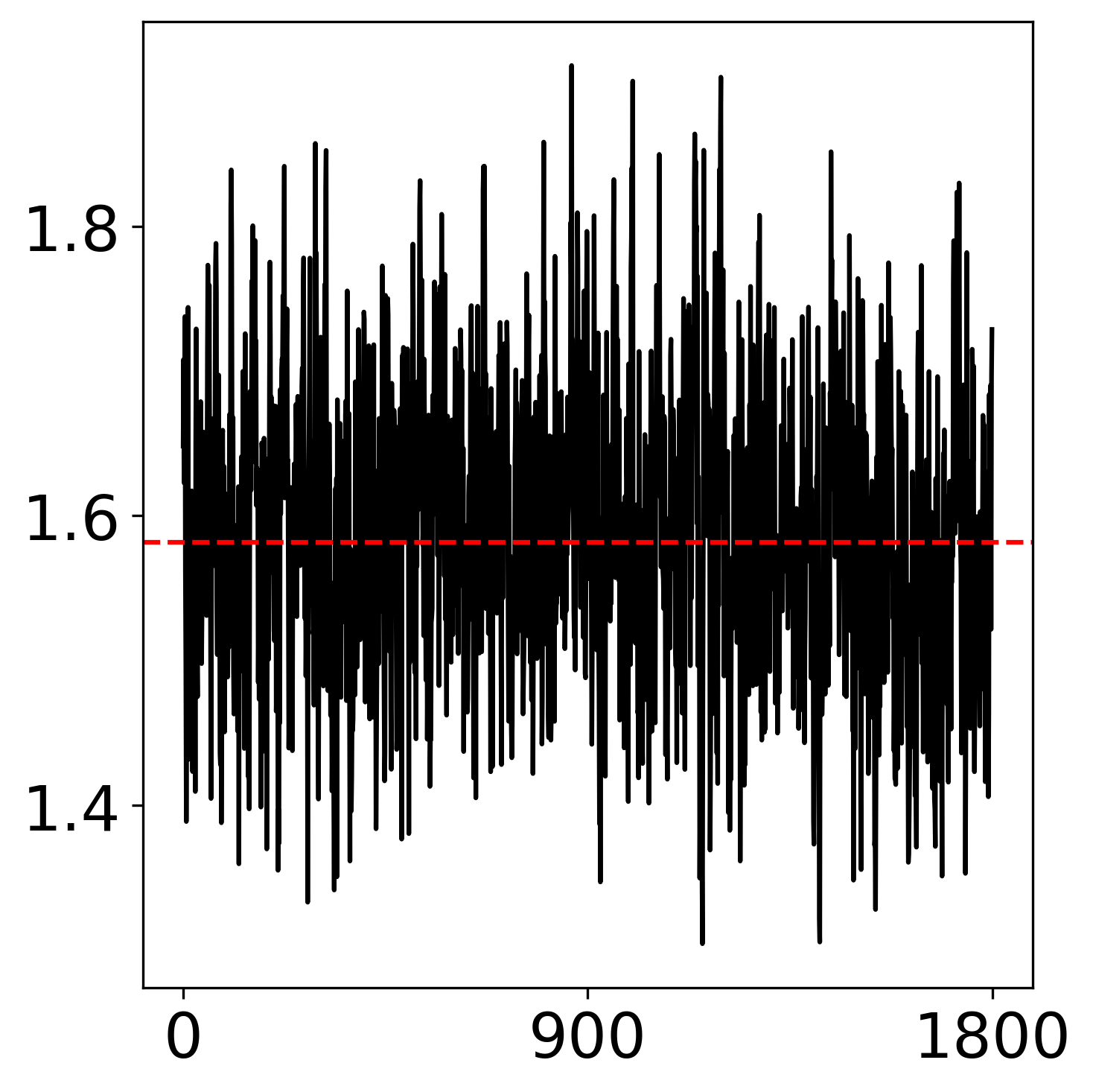}
    \put(44, -4){\scriptsize Iteration}
    \end{overpic}
    \caption{$(d,m) = (2,1)$.}
\end{subfigure}\;
\begin{subfigure}[t]{0.23\textwidth}
    \begin{overpic}[width=\textwidth, height = 3cm]{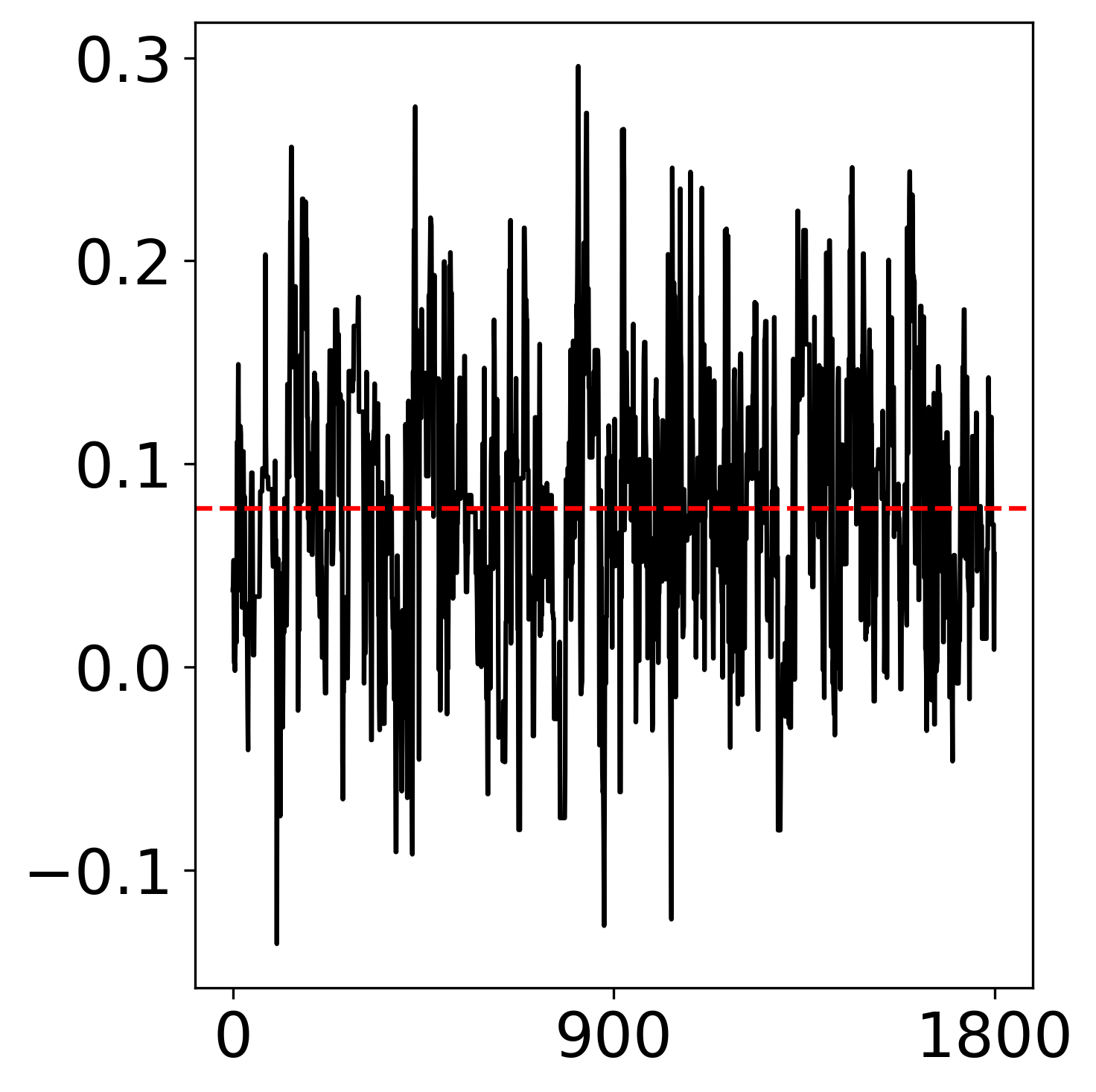}
    \put(44, -4){\scriptsize Iteration}
    \end{overpic}
    \caption{$(d,m) = (5,2)$.}
\end{subfigure}\;
\begin{subfigure}[t]{0.23\textwidth}
    \begin{overpic}[width=\textwidth, height = 3cm]{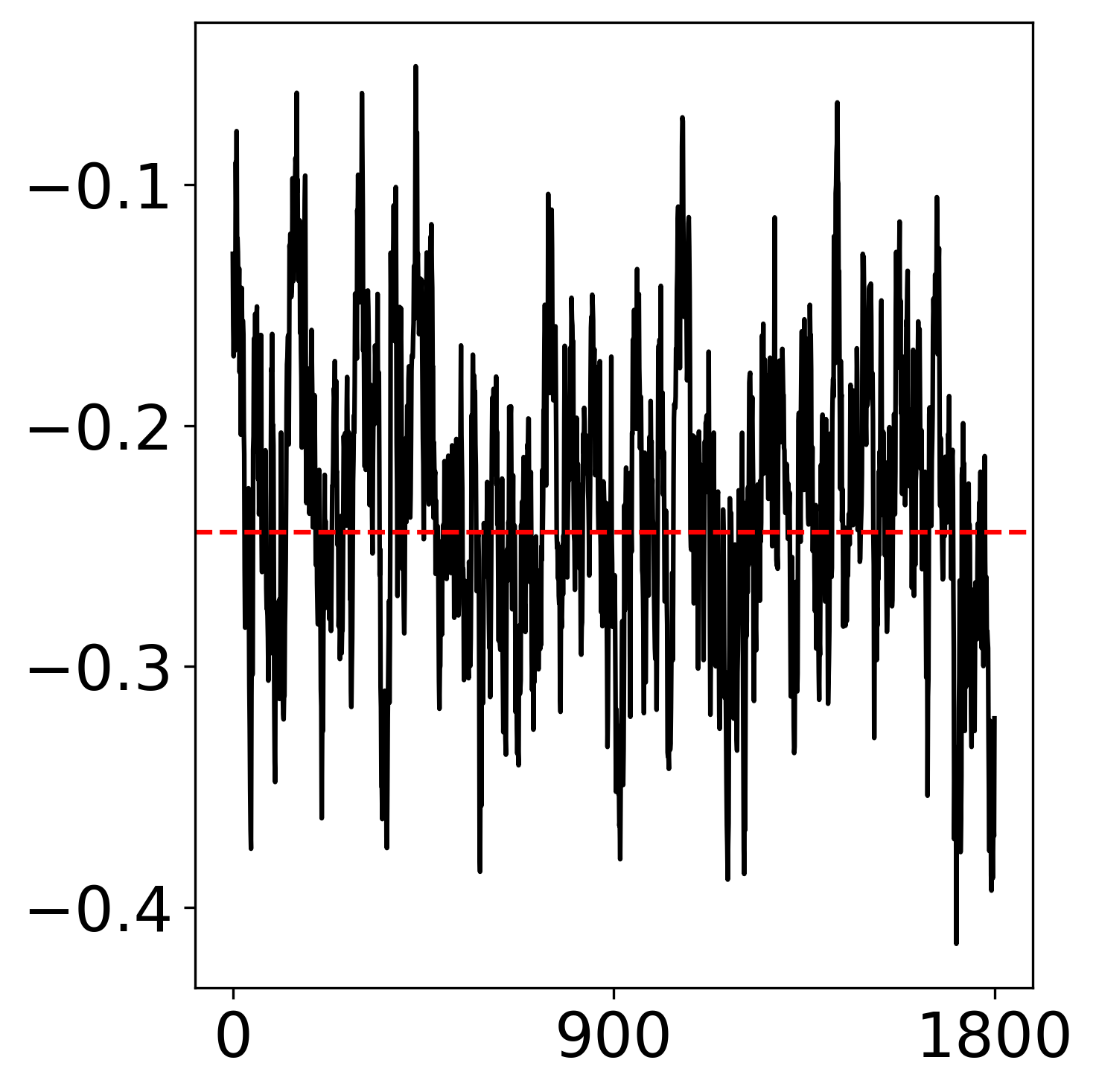}
    \put(44, -4){\scriptsize Iteration}
    \end{overpic}
    \caption{$(d,m) = (10,5)$.}
\end{subfigure}\;
\begin{subfigure}[t]{0.23\textwidth}
    \begin{overpic}[width=\textwidth, height = 3cm]{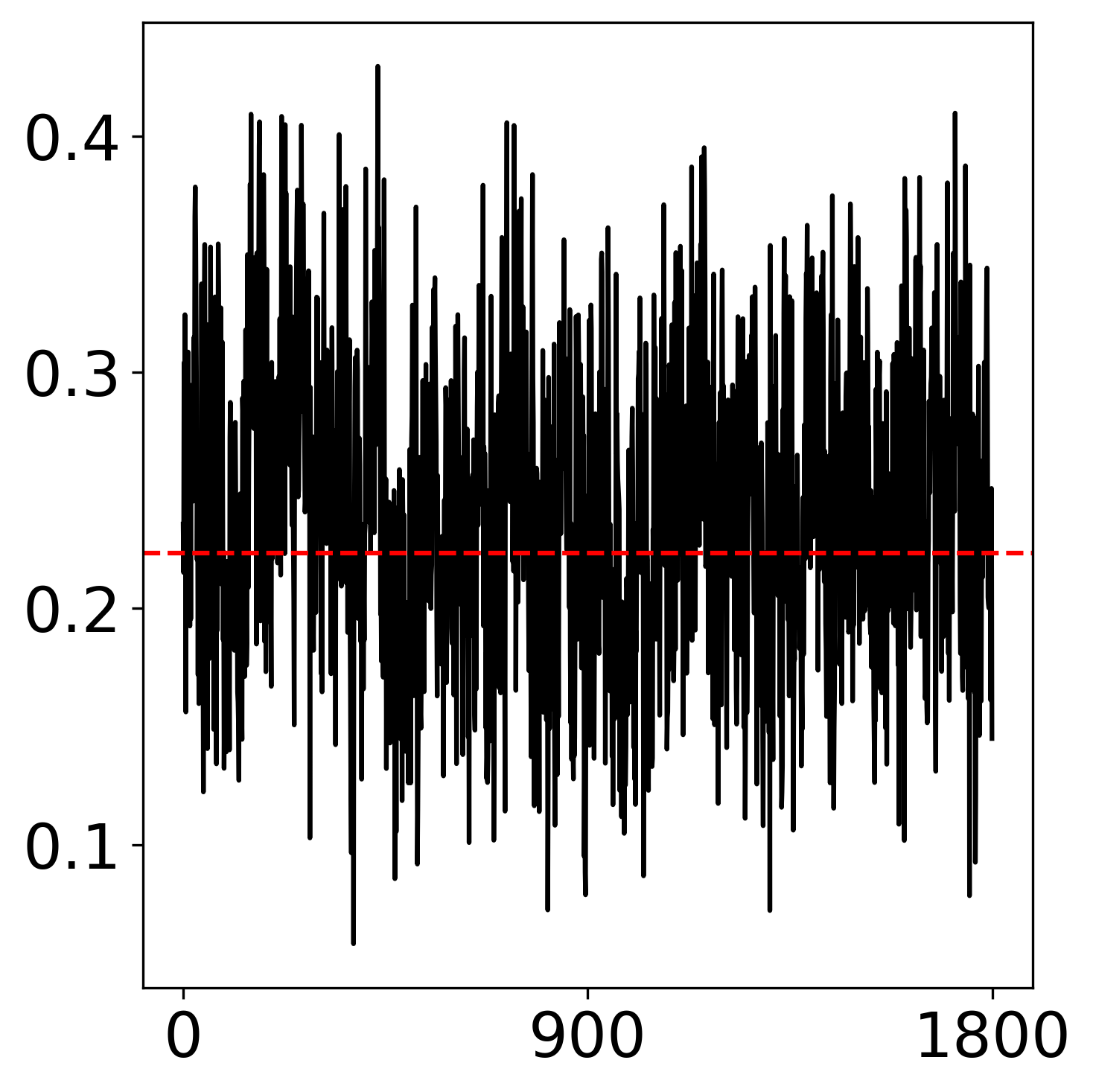}
    \put(44, -4){\scriptsize Iteration}
    \end{overpic}
    \caption{$(d,m) = (10,5)$.}
\end{subfigure}
\caption{Autocorrelation function (ACF) and trace plots for different values of $(d,m)$ for PMMH (plots (a)-(c),(e)-(g)) and for MH-Within-Gibbs sampler (plots (d) and (h)). Each box displays the ACF of all $\beta_{i,j}$ parameters. The ACF is computed after thinning the posterior samples by retaining every 25th sample. The trace plots are based on a randomly chosen parameter under each setting.}
\label{plot:beta_ACF_and_trace_PMMH}
\end{figure}

\begin{table}[H]
    \centering
    \begin{tabular}{cccc}
    \hline
         & $(d,m)=(2,1)$ & $(d,m)=(5,2)$ & $(d,m)=(10,5)$ \\\hline
       PMMH  & 0.089 & 0.080 & Fail to converge\\
       MH-Within-Gibbs  & 0.046 & 0.066 & 0.066 \\\hline
    \end{tabular}
    \caption{Root mean squared error (RMSE) for estimating $\beta$ using posterior mean under different low dimensionality $(d,m)$ for PMMH and MH-Wihtin-Gibbs.}
    \label{tb:RMSE_PMMH}
\end{table}

}

{
\section{Alternative kernels for the auxiliary variable $u$}

This section investigates whether the exponential kernel used for the auxiliary variables $u^{(i)}$ in the MH-within-Gibbs sampler is essential, and how alternative choices may affect sampler performance. We study this question empirically by replacing the exponential kernel with a half-Gaussian kernel and comparing the resulting sampler with the original implementation. More specifically, we consider the following conditional distribution of $u^{(i)}$ given $(y_i,\zeta_i)$:
\(
\Pi_u(u^{(i)}\mid y_i,\zeta_i)
=\frac{\prod_{k:(Ay_i-b)_k=0}\exp\!\left[-\tfrac{1}{2}(u^{(i)}_k)^2\right]}
{c'[\mathcal U(y_i,\zeta_i)]},
\quad u^{(i)}\in\mathcal U(y_i,\zeta_i),
\)
where $c'[\mathcal U(y_i,\zeta_i)]$ denotes the normalizing constant.

The hit-and-run update for the half-Gaussian case closely parallels the exponential-kernel sampler described in the main paper, differing only in the one-dimensional sampling step along a randomly chosen direction. In particular, in step (ii) of the algorithm, we draw $\alpha^*$ from a truncated normal distribution supported on $(\tilde a,\tilde b)$, with mean $-\sum_k \omega_k v_k / \sum_k v_k^2$ and variance $1 / \sum_k v_k^2$, where $(\tilde a,\tilde b)$ is the longest interval such that $\omega + \alpha v \in \mathcal U(y_i,\zeta_i)$ for all $\alpha \in (\tilde a,\tilde b)$.

We evaluate the impact of the kernel choice using the same constraint structure and data-generating design ($n=1{,}000,p=5$) as in Section~5.2 of the main paper. For each combination of $d$ and $m$, we run the MH-within-Gibbs sampler using either the exponential kernel or the half-Gaussian kernel for $u^{(i)}$, while keeping all other components identical. Each chain is run for $10{,}000$ iterations, with the first $5{,}000$ iterations discarded as warm-up.

Across all settings considered, diagnostic measures—including root mean squared error (RMSE) for estimating $\beta$, effective sample sizes, trace plots, autocorrelation functions, and total runtime—show very similar behavior under the two kernels. In particular, we observe no systematic differences in mixing behavior or computational cost. For brevity, we only report RMSE for estimating $\beta$ for several representative combinations of $d$ and $m$ in Table~\ref{tb:kernel_compare}.

\begin{table}[H]
  \centering
  \small
  \begin{tabular}{ccc}
\toprule
$(d,m)$ &  Exponential kernel & Half-Gaussian kernel \\
\midrule
$(5,1)$ & 0.065 & 0.067\\
$(20,5)$ & 0.066 & 0.065\\
$(50,10)$ & 0.113 & 0.105\\
$(200,50)$ & 0.167 & 0.175\\
$(1000,100)$ & 0.128 & 0.133\\
\bottomrule
\end{tabular}
    \caption{Root mean squared error (RMSE) for estimating $\beta$ using posterior mean under different dimensionality $(d,m)$ for MH-Within-Gibbs samplers with exponential kernel and half-Gaussian kernel.}
  \label{tb:kernel_compare}
\end{table}

Overall, these findings suggest that the exponential kernel used in the main paper is not essential for good sampler performance; alternative choices such as the half-Gaussian kernel yield comparable statistical accuracy and computational efficiency.

}

\bibliographystyle{chicago}
\bibliography{ref_fixed}

@article{albert1993bayesian,
  author = {Albert, James H and Chib, Siddhartha},
  journal = {Journal of the American Statistical Association},
  number = {422},
  pages = {669--679},
  publisher = {Taylor \& Francis},
  title = {{Bayesian Analysis of Binary and Polychotomous Response Data}},
  volume = {88},
  year = {1993}
}

@article{amewou2003posterior,
  author = {Amewou-Atisso, Messan and Ghosal, Subhashis and Ghosh, Jayanta K. and Ramamoorthi, R.V.},
  journal = {Bernoulli},
  number = {2},
  pages = {291--312},
  title = {{Posterior Consistency for Semi-Parametric Regression Problems}},
  volume = {9},
  year = {2003}
}

@article{andrieu2009pseudo,
  author = {Andrieu, Christophe and Roberts, Gareth O.},
  doi = {10.1214/07-AOS574},
  journal = {Annals of Statistics},
  number = {2},
  pages = {697--725},
  publisher = {Institute of Mathematical Statistics},
  title = {{The Pseudo-Marginal Approach for Efficient Monte Carlo Computations}},
  volume = {37},
  year = {2009}
}

@inproceedings{baptista2018bayesian,
  author = {Baptista, Ricardo and Poloczek, Matthias},
  booktitle = {Proceedings of the 35th International Conference on Machine Learning},
  editor = {Dy, Jennifer and Krause, Andreas},
  pages = {462--471},
  publisher = {PMLR},
  series = {Proceedings of Machine Learning Research},
  title = {{Bayesian Optimization of Combinatorial Structures}},
  volume = {80},
  year = {2018}
}

@book{bishop1975discrete,
  author = {Bishop, Yvonne MM and Fienberg, Stephen E and Holland, Paul W},
  publisher = {Springer},
  title = {{Discrete Multivariate Analysis: Theory and Practice}},
  year = {1975}
}

@article{carota2015bayesian,
  author = {Cinzia Carota and Maurizio Filippone and Roberto Leombruni and Silvia Polettini},
  doi = {10.1214/15-AOAS807},
  journal = {The Annals of Applied Statistics},
  number = {1},
  pages = {525 -- 546},
  publisher = {Institute of Mathematical Statistics},
  title = {{Bayesian Nonparametric Disclosure Risk Estimation via Mixed Effects Log-Linear Models}},
  volume = {9},
  year = {2015}
}

@article{chen2022hit,
  author = {Chen, Yuansi and Eldan, Ronen},
  journal = {arXiv preprint arXiv:2212.00297},
  title = {{Hit-and-run Mixing via Localization Schemes}},
  year = {2022}
}

@incollection{Dudley1999uniform,
  address = {Cambridge},
  author = {Dudley, R. M.},
  booktitle = {Uniform Central Limit Theorems},
  pages = {1--22},
  publisher = {Cambridge University Press},
  series = {Cambridge Studies in Advanced Mathematics},
  title = {{Introduction: Donsker’s Theorem, Metric Entropy, and Inequalities}},
  volume = {63},
  year = {1999}
}

@incollection{edmonds1977min,
  author = {Jack Edmonds and Rick Giles},
  booktitle = {Studies in Integer Programming},
  editor = {P.L. Hammer and E.L. Johnson and B.H. Korte and G.L. Nemhauser},
  issn = {0167-5060},
  pages = {185-204},
  publisher = {Elsevier},
  series = {Annals of Discrete Mathematics},
  title = {{A Min-Max Relation for Submodular Functions on Graphs}},
  volume = {1},
  year = {1977}
}

@article{fienberg1972multiple,
  author = {Fienberg, Stephen E},
  journal = {Biometrika},
  number = {3},
  pages = {591--603},
  publisher = {Oxford University Press},
  title = {{The Multiple Recapture Census for Closed Populations and Incomplete 2k Contingency Tables}},
  volume = {59},
  year = {1972}
}

@article{goodman1968analysis,
  author = {Goodman, Leo A},
  journal = {Journal of the American Statistical Association},
  number = {324},
  pages = {1091--1131},
  publisher = {Taylor \& Francis},
  title = {{The Analysis of Cross-Classified Data: Independence, Quasi-Independence, and Interactions in Contingency Tables with or without Missing Entries}},
  volume = {63},
  year = {1968}
}

@article{hoff2002latent,
  author = {Hoff, Peter D. and Raftery, Adrian E. and Handcock, Mark S.},
  journal = {Journal of the American Statistical Association},
  number = {460},
  pages = {1090--1098},
  title = {{Latent Space Approaches to Social Network Analysis}},
  volume = {97},
  year = {2002}
}

@article{hoff2007extending,
  author = {Hoff, Peter D.},
  journal = {The Annals of Applied Statistics},
  number = {1},
  pages = {265--283},
  publisher = {Institute of Mathematical Statistics},
  title = {{Extending the Rank Likelihood for Semiparametric Copula Estimation}},
  volume = {1},
  year = {2007}
}

@incollection{hoffman2010integral,
  address = {Berlin, Heidelberg},
  author = {Hoffman, Alan J. and Kruskal, Joseph B.},
  booktitle = {50 Years of Integer Programming 1958--2008: From the Early Years to the State-of-the-Art},
  editor = {Jünger, Michael and Liebling, Thomas M. and Naddef, Denis and Nemhauser, George L. and Pulleyblank, William R. and Reinelt, Gerhard and Rinaldi, Giovanni and Wolsey, Laurence A.},
  pages = {49--76},
  publisher = {Springer},
  title = {{Integral Boundary Points of Convex Polyhedra}},
  year = {2010}
}

@article{johndrow2015optimal,
  author = {Johndrow, James E and Mattingly, Jonathan C and Mukherjee, Sayan and Dunson, David},
  journal = {arXiv preprint arXiv:1508.03387},
  title = {{Optimal Approximating Markov Chains for Bayesian Inference}},
  year = {2015}
}

@phdthesis{lee2011auxiliary,
  author = {Lee, Anthony},
  school = {University of Oxford},
  title = {{On Auxiliary Variables and Many-Core Architectures in Computational Statistics}},
  year = {2011}
}

@article{lee2019unbiased,
  author = {Lee, Anthony and Tiberi, Simone and Zanella, Giacomo},
  journal = {Biometrika},
  number = {3},
  pages = {708--715},
  publisher = {Oxford University Press},
  title = {{Unbiased Approximations of Products of Expectations}},
  volume = {106},
  year = {2019}
}

@article{liang2016adaptive,
  author = {Liang, Faming and Jin, Ick Hoon and Song, Qifan and Liu, Jun S},
  journal = {Journal of the American Statistical Association},
  number = {513},
  pages = {377--393},
  publisher = {Taylor \& Francis},
  title = {{An Adaptive Exchange Algorithm for Sampling from Distributions with Intractable Normalizing Constants}},
  volume = {111},
  year = {2016}
}

@inproceedings{lovasz2006fast,
  address = {Berkeley, CA, USA},
  author = {Lov{\'a}sz, L{\'a}szl{\'o} and Vempala, Santosh},
  booktitle = {47th Annual IEEE Symposium on Foundations of Computer Science (FOCS 2006)},
  doi = {10.1109/FOCS.2006.40},
  pages = {57--68},
  publisher = {IEEE},
  title = {{Fast Algorithms for Logconcave Functions: Sampling, Rounding, Integration and Optimization}},
  year = {2006}
}

@article{manrique2014bayesian,
  author = {Manrique-Vallier, Daniel and Reiter, Jerome P.},
  journal = {Journal of Computational and Graphical Statistics},
  number = {4},
  pages = {1061--1079},
  title = {{Bayesian Estimation of Discrete Multivariate Latent Structure Models with Structural Zeros}},
  volume = {23},
  year = {2014}
}

@article{moller2006efficient,
  author = {M{\o}ller, Jesper and Pettitt, Anthony N and Reeves, Robert and Berthelsen, Kasper K},
  journal = {Biometrika},
  number = {2},
  pages = {451--458},
  publisher = {Oxford University Press},
  title = {{An Efficient Markov Chain Monte Carlo Method for Distributions with Intractable Normalising Constants}},
  volume = {93},
  year = {2006}
}

@inproceedings{murray2006mcmc,
  address = {Arlington, Virginia, USA},
  author = {Murray, Iain and Ghahramani, Zoubin and MacKay, David J. C.},
  booktitle = {{Proceedings of the Twenty-Second Conference on Uncertainty in Artificial Intelligence}},
  isbn = {0974903922},
  location = {Cambridge, MA, USA},
  numpages = {8},
  pages = {359–366},
  publisher = {AUAI Press},
  series = {UAI'06},
  title = {{MCMC for Doubly-Intractable Distributions}},
  year = {2006}
}

@article{rue2009approximate,
  author = {Rue, H{\aa}vard and Martino, Sara and Chopin, Nicolas},
  journal = {Journal of the Royal Statistical Society: Series B (Statistical Methodology)},
  number = {2},
  pages = {319--392},
  title = {{Approximate Bayesian Inference for Latent Gaussian Models by Using Integrated Nested Laplace Approximations}},
  volume = {71},
  year = {2009}
}

@book{schrijver1998theory,
  address = {Chichester},
  author = {Schrijver, Alexander},
  publisher = {John Wiley \& Sons},
  title = {Theory of Linear and Integer Programming},
  year = {1998}
}

@article{schwartz1965onbayes,
  author = {Schwartz, Lorraine},
  journal = {Zeitschrift f{\"u}r Wahrscheinlichkeitstheorie und Verwandte Gebiete},
  number = {1},
  pages = {10--26},
  title = {{On Bayes Procedures}},
  volume = {4},
  year = {1965}
}

@article{smith1984efficient,
  author = {Smith, Robert L.},
  journal = {Operations Research},
  number = {6},
  pages = {1296--1308},
  title = {{Efficient Monte Carlo Procedures for Generating Points Uniformly Distributed Over Bounded Regions}},
  volume = {32},
  year = {1984}
}

@article{sriram2013posterior,
  author = {Sriram, Karthik and Ramamoorthi, R. V. and Ghosh, Pulak},
  journal = {Bayesian Analysis},
  number = {2},
  pages = {479--504},
  title = {{Posterior Consistency of Bayesian Quantile Regression Based on the Misspecified Asymmetric Laplace Density}},
  volume = {8},
  year = {2013}
}

@article{vermunt1997general,
  author = {Vermunt, Jeroen K},
  journal = {Tilburg University},
  title = {{LEM: A General Program for the Analysis of Categorical Data}},
  year = {1997}
}

@book{wolsey1999integer,
  address = {New York},
  author = {Wolsey, Laurence A. and Nemhauser, George L.},
  publisher = {John Wiley \& Sons},
  title = {Integer and Combinatorial Optimization},
  year = {1999}
}

\end{document}